\newtheorem{definition}{Definition}[chapter]
\newtheorem{theorem}{Theorem}[chapter]
\newtheorem{example}{Example}[chapter]
\newtheorem{proposition}{Proposition}[chapter]
\newtheorem{assumption}{Assumption}[chapter]
\newlist{abbrv}{itemize}{1}
\setlist[abbrv,1]{label=,labelwidth=1in,align=parleft,itemsep=0.1\baselineskip,leftmargin=!}
\author{\textsc{Guanhua Wang}} 
\keywords{ } 
\begin{document}
\frontmatter 

\pagestyle{plain} 


\begin{titlepage}
\begin{center}

\includegraphics{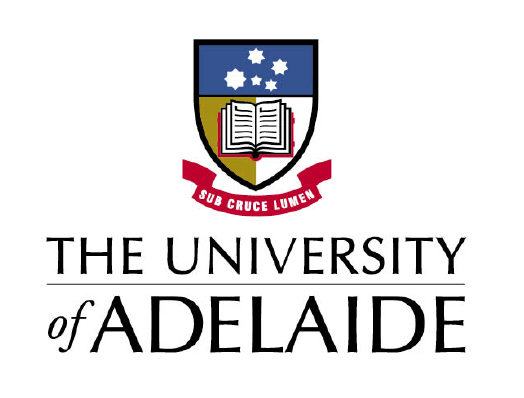}\\
\vspace{0.8cm}{\huge \bfseries \ttitle\par}\vspace{0.8cm}

{Guanhua Wang}
\vfill

\large A thesis submitted for the degree of\\ DOCTOR OF PHILOSOPHY\\The University of Adelaide\\[0.3cm] 

\vfill

{\large \today}\\[4cm] 

\vfill
\end{center}
\end{titlepage}


\tableofcontents 
\listoffigures 
\listoftables 

\begin{abstract}
\addchaptertocentry{\abstractname}
Mechanism design is a central research branch in microeconomics.
An effective mechanism can significantly improve performance and efficiency of social decisions under desired objectives, such as to maximize social welfare or to maximize revenue for agents.

However, mechanism design is challenging for many common models including the public project problem model which we study in this thesis. A typical public project problem is a group of agents crowdfunding a public project (e.g., building a bridge). The mechanism will decide the payment and allocation for each agent (e.g., how much the agent pays, and whether the agent can use it) according to their valuations. The mechanism can be applied to various economic scenarios, including those related to cyber security. There are different constraints and optimized objectives for different public project scenarios (sub-problems), making it unrealistic to design a universal mechanism that fits all scenarios, and designing mechanisms for different settings manually is a taxing job. Therefore, we explore automated mechanism design (AMD) (\cite{sandholm2003automated}) of public project problems under different constraints.

In this thesis, we focus on the public project problem, which includes many sub-problems (excludable/non-excludable, divisible/indivisible, binary/non-binary). We study the classical public project model and extend this model to other related areas such as the zero-day exploit markets. For different sub-problems of the public project problem, we adopt different novel machine learning techniques to design optimal or near-optimal mechanisms via automated mechanism design.
\end{abstract}
\newpage
We evaluate our mechanisms by theoretical analysis or experimentally comparing our mechanisms against existing mechanisms. The experiments and theoretical results show that our mechanisms are better than state-of-the-art automated or manual mechanisms.

\chapter{List of Abbreviations}
\chaptermark{List of Abbreviations}
\begin{abbrv}
\item[AMD]
Automated mechanism design
\item[GAN]
Generative adversarial network
\item[MLP]
Multi-layer perceptions
\item[AMA]
Affine maximizer auctions (\cite{Likhodedov2005:Approximating})
\item[LP]
Linear programming
\item[IR]
Individual rationality
\item[SP]
Strategy-proofness
\item[SF]
Straight-forwardness (Assumption \ref{SF})
\item[DP]
Dynamic programming
\item[PORF]
Price-oriented rationing-free (\cite{yokoo2002price})
\item[DSIC]
Dominant-strategy incentive compatible
\item[CEC]
Conservative equal cost mechanism
\item[CDF]
Cumulative distribution function
\item[PDF]
Probability density function
\item[SCS]
Serial cost sharing mechanism
\item[NN]
Neural network
\item[TGA]
Truthful genetic algorithm (In Chapter \ref{Public Project with Minimum Expected Release Delay})
\item[ATGA]
Approximately truthful genetic algorithm (In Chapter \ref{Public Project with Minimum Expected Release Delay})
\item[VCG]
Vickrey-Clarke-Groves mechanism
\item[UB]
Upper bound

\end{abbrv}


\begin{declaration}
\addchaptertocentry{\authorshipname} 

I certify that this work contains no material which has been accepted for the award of any other degree or diploma in my name, in any university or other tertiary institution and, to the best of my knowledge and belief, contains no material previously published or written by another person, except where due reference has been made in the text. In addition, I certify that no part of this work will, in the future, be used in a submission in my name, for any other degree or diploma in any university or other tertiary institution without the prior approval of the University of Adelaide and where applicable, any partner institution responsible for the joint-award of this degree.

I acknowledge that copyright of published works contained within this thesis resides with the copyright holder(s) of those works.

I also give permission for the digital version of my thesis to be made available on the web, via the University’s digital research repository, the Library Search and also through web search engines, unless permission has been granted by the University to restrict access for a period of time.

\vspace{40mm}
\begin{figure}[h!]
\begin{flushright}
  \includegraphics[width=0.4\textwidth]{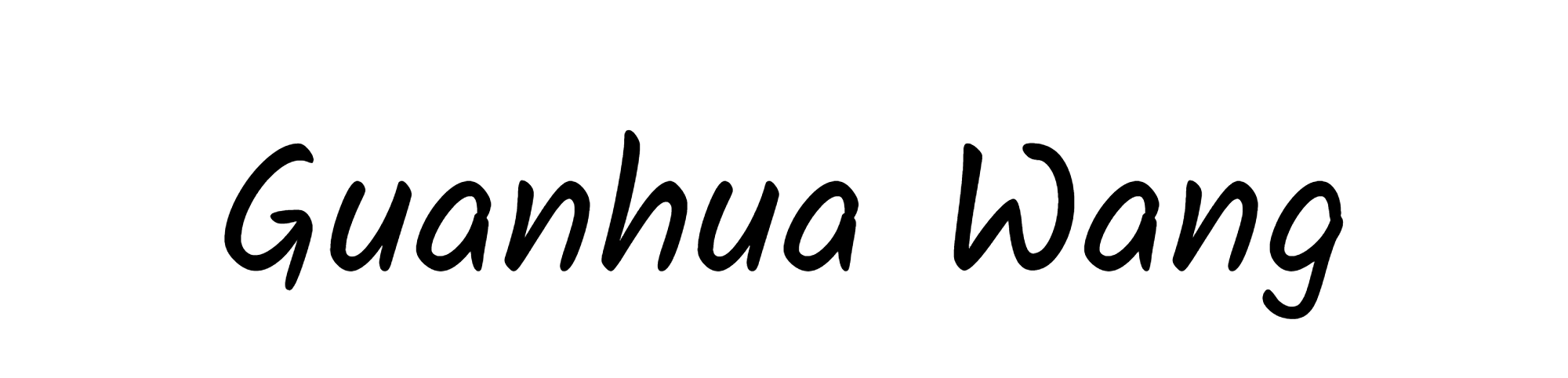}
\end{flushright}
\end{figure}
\begin{flushright}
Guanhua Wang
\end{flushright}
\begin{flushright}December 2021\end{flushright}
 
\end{declaration}

\cleardoublepage

\begin{acknowledgements}
\addchaptertocentry{\acknowledgementname} 
Firstly, I want to thank my supervisor Dr. Mingyu Guo. I am honorable working with him in these three years. I very much appreciate his help. When I stock in no research inspiration, he can always figure it out and find a suitable way for me to solve the problems. Mingyu is a humble scholar. His research foresight and research inspiration is lifelong treasure for me.\\

I would thank Dr. Wei Zhang, and professor Muhammad Ali Babar. They provide some suggestions and comments that helped me to significantly improve the content of this thesis and papers.\\

And from my Ph.D. role, I also thank my collaborators, lab meta, and my friends for the help with the papers. I would like to mention Ba-Dung Le, Runqi Guo, Yuko Sakurai, Wuli Zuo, Xie Yue, Zhigang Lu for many useful discussions. My study area is not a hot area in this university. Thanks, you give me many useful discussions.  Good luck to you in the future. And I will thank Wuli Zuo for checking and correcting my grammar mistakes.\\

I also want to thank my parents and my grandparents for financially supporting me during my Ph.D. study. And because of Covid-19, it is a hard time to connect you only on the internet. Hopefully, we can reunion soon after Covid-19.\\

Finally, I will thank the university and its staff. They provide a good environment for my research.

\end{acknowledgements}
\chapter{Publications}
This thesis is based on the following research papers that have been published in peer-reviewed conferences or journals:

\begin{itemize}

\item \textbf{Guanhua Wang}, Runqi Guo, Yuko Sakurai, Muhammad Ali Babar and Mingyu Guo. Mechanism Design for Public Projects via Neural Networks. In Proceedings of the 20th International Conference on Autonomous Agents and Multiagent Systems (AAMAS 2021, Conference). (CORE Rank: A*)
\item \textbf{Guanhua Wang}, Mingyu Guo. Public Project with Minimum Expected Release Delay. In the 18th Pacific Rim International Conference on Artificial Intelligence (PRICAI, Conference) 2021. (CORE Rank: B)
\item \textbf{Guanhua Wang}, Wuli Zuo, Mingyu Guo. Redistribution in Public Project Problems via Neural Networks. In the 20th IEEE/WIC/ACM International Conference on Web Intelligence and Intelligent Agent Technology (WI-IAT, Conference), Melbourne, Australia, 2021. (CORE Rank: B)
\item Mingyu Guo, \textbf{Guanhua Wang}, Hideaki Hata, M. Ali Babar. Revenue-Maximizing Markets for Zero-Day Exploits. In Autonomous Agents and Multi-Agent Systems (AAMAS, Journal) 35, 36, 2021. (CORE Rank: A)
\end{itemize}

\newpage
Other research paper that has been published in peer-reviewed conferences but not used in this thesis:\\

\begin{itemize}
\item Ba-Dung Le; \textbf{Guanhua Wang}; Mehwish Nasim; Muhammad Ali Babar, "Gathering Cyber Threat Intelligence from Twitter Using Novelty Classification," 2019 International Conference on Cyberworlds (CW, Conference), 2019, pp. 316-323, (CORE Rank: B)

\end{itemize}


\mainmatter 

\pagestyle{thesis} 



\chapter{Introduction} 

\label{introduction} 
Mechanism design is a fundamental and important research area in economics. Since 2000, the Nobel Memorial Prize in Economic Sciences has been twice awarded for mechanism design. Nobel Prize 2007 in Economics was awarded to Leonid Hurwicz, Eric S. Maskin, and Roger B. Myerson for laying the foundations of mechanism design theory (\cite{drobietz2021cardiovascular}). Then the Prize 2020 was awarded to Paul Milgrom and Robert Wilson for improving auction theory (\cite{janssen2020reflections}). Mechanism design takes an objective-first approach to design economic mechanisms or incentives, toward desired objectives, in strategic settings, where players are individual rational (\cite{lee2016incentive}).\\

Mechanism design theory has applications in a number of computer science sub areas. Mechanism design studies how
to make social decisions when we need to take individual preferences into consideration. Many problems in the computer science domain are related to mechanism design. For example, the popularity of e-commerce leads to a long list of new Internet-based markets, such as Advertisement auctions and blockchains. Besides new markets, many sub areas of computer science involve mechanism design problems, such as multi-agent coordination in multiagent systems, distributed computing, and almost all domains involving competition (i.e., cyber security) and collaboration (i.e., federated machine learning and multiagent reinforcement learning).
Furthermore, latest development in computational techniques also leads to new development of economic results. The main topic of this thesis is along this line (i.e., using machine learning to develop new mechanisms). In this thesis, there is also discussion on applying mechanism design theory to cyber security.

This thesis focuses on the public project problem.
To illustrate the public project model, we present the following example.
Suppose there are $n$ households (often referred to as $n$ agents) living in a community. They plan to build a public project (a project shared by all households, e.g., a swimming pool).
We assume that the project costs $1$ dollar (without loss of generality).
These agents plan to crowdfund this swimming pool. Only those
who pay will have access.  Every agent has a different valuation for the
swimming pool, which is private information. We assume that agent $i$'s valuation for the swimming pool
is $v_i$.
$v_i$ is the maximum amount agent $i$ is willing to pay to contribute to the pool/access the pool.
A mechanism (social decision rule) maps the agents' preferences (i.e., the $v_i$) to the social decision (i.e., who can access the pool and how much each agent pays).
One example design goal is to design a crowdfunding
rule that maximizes the number of agents that can
access the pool (i.e., in expectation with respect to the prior distributions of the $v_i$).\\

One example mechanism is simply to ask every agent to pay an equal share ($\frac{1}{n}$). If any agent disagrees, then we simply cancel the project. This way, no agents can benefit by lying (disagree with the proposed payment when the agent can afford it, or agree with the proposed payment when the agent cannot afford it) and the total payment collected is exactly $1$, which meets the project cost (or $0$ when the project is cancelled). Of course, this rule is not ideal because as long as there is one agent whose value is lower than $\frac{1}{n}$, then the project is not built. Given the goal of maximizing the number of agents who can consume the project, we want a rule that charges different agents different amounts (i.e., only agents who have high values for the pool need to pay more -- we need to identify agents who can afford to pay more and force them to pay more).\\

The above task isn't actually easy, because we are designing a function (the crowdfunding social decision rule) that maps an input vector of dimension $n$ (the $v_i$) to an output vector that describes every agent's allocation and payment (output dimension is $2n$).
Furthermore, not all functions are feasible. It is a difficult task in itself to describe what feasible functions look like. We may add a constraint that is, for example, ``no manipulation'' (i.e., if an agent can afford to pay a high payment, then we don't want this agent to pretend that her value is low by reporting a low value -- recall that the agents' preferences are private information). It is difficult to mathematically derive the set of all functions that prevent manipulation and this is just for one mechanism design constraint. Later on in this thesis, we will formalize commonly used constraints in mechanism design.\\

In what follows, we present a timeline of mechanism design research.\\

Mechanism design theory originated from Hurwicz's pioneering
work in 1960 (\cite{hurwicz1960optimality}). For the general problem, it discusses whether and how to design an economic mechanism (such as an auction) for any given economic or social goal under decision-making conditions (such as  incentive compatibility). Major mechanisms are under the incentive compatibility (IC) constraint to align the personal interests of economic participants with the mechanism design goals.\\

In 1967, Harsanyi (\cite{harsanyi1967games}) defined the mechanism as a game of information involving the agents. The agents receive secret "messages" containing information relevant to payoffs. For example, a message may contain information about their preferences or the quality of a good for sale. This information is called the agent's "type". Agents then report a type to buy the goods. The reported types can be strategic false valuations or the true valuations. After reporting, the agents
are allocated according to the mechanism and are asked to pay accordingly
(Figure \ref{fig:mechanism1}). Mechanisms align the personal interests of agents with the mechanism design goal by encouraging agents to reveal their true valuations.\\

\begin{figure}[H]
\centering
    \includegraphics[width=0.8\textwidth]{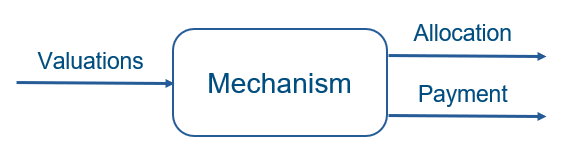}
        \caption[A typical mechanism payoff structure]{A typical mechanism payoff structure.}
\label{fig:mechanism1}
\end{figure}

In 1981, Myerson et al. (\cite{myerson1981optimal}) obtained the optimal auction mechanism for single-item auctions. Since then, intense research has been carried out to solve optimal auctions in different settings. However, after four decades, for multi-agent or multi-item revenue-maximizing auctions, the optimal auctions remain unknown (\cite{curry2022differentiable}). When it comes to the specific context of public project problems, designing a multi-agent mechanism meets more difficulties due to lack of characterisations of feasible mechanisms, as well as the variety of design objectives and the problem's high-dimension (when the number of agents is large) (\cite{maghsoudlou2017multi,carroll2019robustness}). For example, when we crowdfund a public project (i.e., a piece of software), the project could be non-excludable (the software is open source, therefore can be used by everyone, including the non-paying agents) or excludable (the software can only be used by agents who pay), indivisible (the software is a one-time sale for all agents) or divisible (the software is a subscription-based and an agent may enjoy ``a fraction of the software'' -- i.e.,, subscribe half of the time).\\

It is difficult to manually design public project mechanisms considering the aforementioned difficulties. We try to automatically design optimal or near-optimal mechanisms for the public project problem via machine learning methods. We propose several neural network training techniques for designing public project mechanisms, which also have the potential to be generalized to other mechanism design models. Besides neural networks, we also develop automated mechanism design techniques based on evolutionary computation. For instance, for an example application setting on zero-day exploit markets, we focus on the Affine Maximizer Auctions (AMA) (\cite{lavi2003:towards,Likhodedov2005:Approximating,curry2022differentiable}) and use the Fourier series to characterize the AMA mechanisms. We then adjust the AMA mechanism parameters via evolutionary computation.\\




\newpage
\section{Public Project Mechanism Design Research Challenges and Motivation} \label{introduction_motivation}


\subsection{Unknown Feasible/Optimal Mechanism Characterizations}
The first difficulty lies in mechanism design itself. Mechanisms collect the valuation from each agent, and design payment and allocation schemes (Figure \ref{fig:mechanism1}).
It is obvious that, in the design of the mechanism, the input side is the valuations from the agents, which the mechanism designer can never assign or design. To achieve the desired objectives, the mechanism designer needs to get the true valuations from the agents.
The only way we encourage the agents to tell the truth is by controlling the two output functions (allocation and payment functions). The designed allocation and payment functions must ensure that each agent can get the highest utility whenever she reports the true valuation.\\

Since Myerson solved the single-item optimal auction design problem in his seminal work (\cite{myerson1981optimal}), intense research has been taken to solve problems under different settings. However, the problem is not completely resolved (\cite{sandholm2003automated}). For public project problems, for some variants of the model, we do not know the exact characterizations of the truthful mechanisms. Even when we know the exact characterization of the truthful mechanism, the characterization is difficult to work with. For example, the characterization may contain an exponential number of parameters, and there are no known techniques for efficient optimisation that works with the characterization. This is different from Myerson's result for single-item auctions, where the proposed optimisation technique allows us to focus on very restricted forms of mechanisms.\\

\subsection{Neural Network Mechanism Design Challenges for Public Project Problems}
One of the characteristics of the public project problem is that the mechanism needs to decide whether or not to build the project (\cite{sen2007theory}). When we apply neural networks to design public project mechanisms, the two decisions (build vs not build) often lead to entirely different gradients for the loss function. In stochastic gradient descent, we can only use a small batch of type profiles, which means that often the number of "build" cases plays a more important role than the quality of the mechanism when it comes to calculating the gradient. This causes poor training results and slow training speed.
\\

\subsection{High Variety of Public Project Problems}
In many mechanism design problems, the typical objectives are to maximize social welfare or to maximize revenue (\cite{zou2009double,manelli2007multidimensional,amanatidis2019budget,amanatidis2017budget}). In this thesis, we deal with public project problems with a few more objectives besides maximizing for welfare and revenue. Our objectives studied also include minimizing the total and maximum release delay (for public projects where the mechanism designer can set a release time) and maximizing the number of consumers for the public project. Existing manual mechanisms designed by the economists perform poorly toward our proposed objectives.\\

The performance of mechanisms is often impacted by the number of agents
(\cite{zhao2014multi}). For example, if there are a large number of agents
joining the public project problem, then the project has a high probability to
be built. We could even entirely ignore the ``not build'' possibility when
designing mechanisms. This observation allows us to design asymptotically optimal
mechanisms for a few variants of the model.
The challenging design cases are sometimes only about small numbers of agents.
This plays well with our approach of using
neural networks and evolutionary computation to automatically design
mechanisms.\\


\newpage
\section{Research Summary}

In this thesis, we focus on automatically designing mechanisms for the public project problem (\cite{mezzetti2004mechanism}). In what follows we describe our mechanism design constraints, the design objectives including the details of various public project models that are covered in this thesis, our mechanism optimisation techniques, and finally our contribution outlines.\\

\subsection{Mechanism Constraints: Individual Rationality (IR) and Strategy-Proofness (SP)}
The mechanism designer expects the agents to report their valuations truthfully so that the overall system objective can be achieved. However, an individual agent wants to maximize her own utility (i.e., receive better allocation and pay less). Therefore, an agent may have the incentive to lie about her true valuation if doing so increases her utility (\cite{matsushima2007mechanism}). An agent also may refuse to join a mechanism if she deems that joining the mechanism leads to a worse utility compared to her utility when staying out. When designing mechanisms, we enforce two constraints: Strategy-Proofness (SP) (\cite{ma1994strategy}) and Individual Rationality (IR).

\begin{itemize}
\item {\em Strategy-proofness (SP)}: For any agent $i$, her utility is maximized by reporting her valuation truthfully.\\
\item {\em Individual rationality (IR)}: For any agent $i$,  her utility is nonnegative
    when she reports truthfully.
\end{itemize}

As mentioned earlier, for many mechanism design settings, coming up with a characterization of all strategy-proof and individual rational mechanisms is often a difficult task, which then makes mechanism design under these two constraints difficult.\\

\subsection{Public Project Problem}

A public project problem is a social decision model where a group of agents have to decide whether or not to build a public project (e.g., a swimming pool, public library, open source software, etc., \cite{kaiser2007athenian}). \\

Public project problems have many variants. In this thesis we consider all the following:

\begin{enumerate}
\item The project may be non-excludable or excludable (\cite{samuelson1954pure,ott2006excludable})
\item The project may be indivisible or divisible (\cite{lipton2004approximately})
\item We assume that the project is binary and non-rivalrous (\cite{samuelson1954pure,wadhwa2020failure})
\end{enumerate}
$$$$
\textbf{Terminology explanations:}
\begin{itemize}
	\item Non-excludable: it is impossible to exclude any individuals from consuming the good.
	(\cite{samuelson1954pure,Ohseto2000:Characterizations,ott2006excludable})

	\item Excludable: the mechanism designer can specify which agents can consume the project, for example, it could be that only those who pay for the project have access to its benefits. (\cite{ott2006excludable})

	\item Non-rivalrous: when one agent consumes the project, others are not prevented from using it.
	(\cite{samuelson1954pure,wadhwa2020failure})

	\item Binary: a public project is built or not built.
    A non-binary public project refers to a project with different provisional levels. For example, the agents may face building a better swimming pool, a poor-quality swimming pool, or not build at all.
        (\cite{lipton2004approximately})

	\item Indivisible: a public project can not be divided. For example, the allocation for an agent is either 1 (this agent uses the entire project) or 0 (this agent can not use the project at all). To be more specific, an agent only has two options, consumes the entire project or does not consume the project at all.
	(\cite{lipton2004approximately})

	\item Divisible: a public project can be divided. For example, one agent may consume or use part of the project.  The allocation to an agent is in the range from [0,1].\\
\end{itemize}

Table \ref{table Sub-problems} highlights the differences in terms of model settings among the different chapters. In each chapter, we study a different public project model variant, and we use different machine learning methods to solve these sub-problems according to their unique mathematical structures.

\begin{table}[H]
\begin{tabular}{|c|c|c|c|c|}
\hline
          & Binary                    & Non-rivalrous             & Divisible\textbackslash{}Indivisible & Excludable\textbackslash{}Non-excludable \\ \hline
Chapter 3 & \checkmark & \checkmark & Indivisible                          & Study Both                               \\ \hline
Chapter 4 & \checkmark & \checkmark & Study Both                           & Excludable                           \\ \hline
Chapter 5 & \checkmark & \checkmark & Indivisible                                    & Non-excludable                                        \\ \hline
Chapter 6 [*] & / & / & Divisible                                    & /                                        \\ \hline
\end{tabular}
$$$$

\begin{tabular}{|c|c|c|c|}
\hline
          & SP                                                                                            & IR &  \multicolumn{1}{c|}{Objectives (goals)}                                                                                                             \\ \hline
Chapter 3 & \checkmark                                                                                             & \checkmark  & \begin{tabular}[c]{@{}c@{}} \quad 1. Maximize Expected Number of Consumers\\ \quad 2. Maximize Expected Agents’ Welfare \quad \quad \quad  \end{tabular} \\ \hline
Chapter 4 & \begin{tabular}[c]{@{}c@{}}Study Both SP\\ and Almost SP\end{tabular} & \checkmark  & \begin{tabular}[c]{@{}c@{}}  1. Minimize Expected Total Delay \quad \quad \quad \quad \\ 
2. Minimize Expected Max Delay \quad \quad \quad \quad   \end{tabular} \\ \hline
Chapter 5 & \checkmark                                                                                             &\begin{tabular}[c]{@{}c@{}} IR in\\ Expectation \end{tabular}     & \begin{tabular}[c]{@{}c@{}}  1. Worst-Case Efficiency Ratio \quad \quad \quad \quad \quad \quad  \\ 2. Expected Welfare
\quad \quad \quad \quad \quad \quad \quad \quad \quad \quad 

\end{tabular}                 \\ \hline
Chapter 6 & \checkmark                                                                                            & \checkmark  &  1. Revenue-Maximizing  \quad \quad \quad \quad  \quad \quad \quad \quad  \quad                                                                                \\ \hline
\end{tabular}
\caption{Public Project Model Variants for Each Chapter}
[*] In Chapter 6, the general problem is not a public project problem. However, from the perspective of the defenders, they face a model that {\bf highly resembles} a divisible public project model.
\label{table Sub-problems}
\end{table}

\newpage
\section{Contributions and Outline} \label{introduction_contributions}

In this thesis, we discuss various different methods for automatically designing mechanisms. Our aim is to improve the performance of public project mechanisms, and provide optimal or near-optimal public project mechanisms for real-world applications. We adopt different neural network and evolutionary computation frameworks to automatically design mechanisms for public project problems, while considering different objectives, constraints and model variety (Table \ref{table Sub-problems} and Table \ref{Machine Learning Methods}).\\

A commonly used approach in this thesis is that for a small number of agents, we use automated mechanism design to derive mechanisms for our objectives (e.g., Table \ref{table Sub-problems}). For a large number of agents, We try to theoretically analyse the automated designed mechanisms and calculate the theoretical performance upper bound. We compare our mechanisms to the theoretical upper bound and/or existing state-of-the-art mechanisms to demonstrate the effectiveness of our proposed mechanisms.\\

In Table~\ref{Machine Learning Methods}, we highlight the mechanism families considered (i.e., we would focus on these families of mechanisms, and then optimize within them) and also our machine learning techniques used.

\begin{table}[H]
\begin{tabular}{|c|c|c|}
\hline
          & Mechanism Families                       & Machine Learning Approach                                                             \\ \hline
Chapter 3 & Largest Unanimous Mechanisms & Neural Networks\\ \hline
Chapter 4 & Sequential Unanimous Mechanisms & Evolutionary Computation                                                              \\ \hline
Chapter 5 & VGC Redistribution Mechanisms    & Neural Networks                                                                             \\ \hline
Chapter 6 & AMA mechanisms                            & \begin{tabular}[c]{@{}c@{}}Evolutionary Computation\\+ Neural Network\end{tabular} \\ \hline
\end{tabular}

\caption{Mechanism Families and Machine Learning Techniques used in Each Chapter}
\label{Machine Learning Methods}
\end{table}

Finally, we conclude this chapter with a detailed summary of our contributions
in future chapters.

\begin{itemize}

\item In Chapter \ref{Mechanism Design for Public Projects via Neural Networks}, we study both the \textbf{non-excludable} and the \textbf{excludable} versions of the \emph{binary} \emph{non-rivalrous} \emph{indivisible} public project problem.
The existing methods (e.g., serial cost-sharing mechanism, \cite{Moulin1994:Serial}) are not optimal in general.
We identify a sufficient condition on the prior distribution for the conservative equal costs mechanism to be the optimal strategy-proof and individually rational mechanism. We prove that, in some scenarios, the serial cost-sharing mechanism is optimal. For other scenarios, we find that the serial cost-sharing mechanism is far from optimality. For these scenarios, we design better-performing mechanisms via neural networks.

For non-excludable public project problems, we improve existing mechanisms by using dynamic programming (DP). For excludable problems, we design a better-performing price-oriented rationing-free (PORF) mechanisms via neural networks. We propose three neural network training techniques. The experiments show that our mechanisms are better than previous results and are closer to the theoretical upper bound.

Contributions:

\begin{enumerate}

    \item We are the first to use the price-oriented rationing-free (PORF) mechanisms to
    assist the designing of iterative mechanisms via neural networks. PORF
        mechanisms move the mechanism's complex (\emph{e.g.}, iterative)
        decision-making process off the neural network to a separate simulation
        process. \\

    \item We feed prior distribution into the loss function. By feeding the prior distribution's {\em analytical form} into the cost function, we can provide high-quality gradients. \\

\item We learn toward existing manual mechanisms as initialization. We use supervision to state-of-the-art manual mechanisms as a systematic way for initialization. It is also effective in fixing constraint violations for heuristic-based mechanisms.\\

\end{enumerate}

\item In Chapter \ref{Public Project with Minimum Expected Release Delay}, we study a divisible public project model, where an agent can be allocated only part of the project.
    Specifically, we study settings where the mechanism can set different project release times for different agents. For an agent, the higher she pays, the earlier she can use the project. There are two objectives for this problem: to minimize the expected maximum release delay and to minimize the expected total release delay. The existing mechanisms (without delayed release) can be trivially applied to our model by interpreting them as mechanisms whose release times are either 0s or 1s. Nevertheless, as expected, existing mechanisms do not perform well as they were not designed for our objectives.

Under our automated mechanism design approach, we first find regularities by analysing cases involving a small number of agents and then generalise the rule to a larger number of agents.
For small numbers of agents, we propose the sequential unanimous mechanisms by extending the existing largest unanimous mechanisms. We then use evolutionary computation to optimize within the sequential unanimous mechanism family.
The experiments show that our mechanisms are better than the existing mechanisms.
Then we summarize the patterns of these sequential unanimous mechanisms and apply them to a larger number of scale. We end up with the single deadline mechanisms. We theoretically prove that the single deadline mechanisms are asymptotically optimal, regardless of the prior distributions.

Contributions:

\begin{enumerate}

\item For a small number of agents, we propose the sequential unanimous mechanism family and apply automated mechanism design via evolutionary computation.\\

\item For a large number of agents, we propose a novel single deadline mechanism, which is asymptotically optimal.\\

\end{enumerate}

\item  In Chapter \ref{Redistribution in Public Project Problems via Neural Networks}, we focus on the VCG and the VCG redistribution (which is based on VCG) mechanisms for the public project problem.
We design mechanisms via neural networks with two welfare-maximizing objectives: optimal in the worst case and optimal in expectation. We design generative adversarial networks and multi-layer perceptions (GAN + MLP) to find the optimal worst-case performance of VCG redistribution mechanisms for public project problems. GAN is used to generate type profiles with poor worst-case performances, which are used for training.
The experiments show that our mechanism is better than existing approaches.
We use multi-layer perceptions (MLP) to find the optimal-in-expectation VCG redistribution mechanism. Our innovation is a new way to construct the cost function for training, by including the prior distribution of the agent valuations as weights for a training batch. The experiments show that our mechanisms' performances are very close to the theoretical upper bound.

Contributions:

\begin{enumerate}

\item We are the first to utilise a GAN network to generate worst-case type profiles for training toward worst-case optimal mechanisms.\\

\item We feed prior distribution into loss function to provide quality gradients for the optimal-in-expectation objective.\\

\item We discuss dimension reduction for handling large agent count. By reducing the input dimension, the neural network converges faster and still retains good performance.\\

\end{enumerate}

\item In Chapter \ref{Revenue-Maximizing Markets for Zero-Day Exploits}, we
    focus on a specific scenario called zero-day exploits market.
    In such a market, one zero-day exploit (i.e., an
    exploit that allows cyber attackers to hack into iOS systems) is sold to
    multiple offender and defenders. In this model, for the defensive side, as
    long as any defender gains access to the exploit, the exploit is assumed to
    be immediately fixed, which benefits all defenders. The defensive side of
    the our model is very similar to a non-excludable and divisible public project problem. Nevertheless, the overall model is not a public project problem.

In order to maximize revenue in zero-day exploit markets, we adopt
computational feasible technical automated mechanism design approaches
(\cite{Guo2010:Computationally,Guo2017:Optimizing}). One commonly used
mechanism family for revenue maximization is the Affine Maximizer Auctions (AMA). For this
particular model, we observe that an AMA mechanism can be characterized by a single variable
function that can be visualized as a curve.  We propose two numerical solution
techniques, one is based on neural networks and the other one is based on
evolutionary computation.  We use neural networks to automatically design the
optimal curve for the Affine Maximizer Auctions (AMA) mechanisms. We also use
evolutionary computation (based on Fourier series) to optimize for a good AMA
curve.
The experiments show that our mechanisms based on neural networks
and evolutionary computation are near-optimal and get better results compared
to the state-of-the-art method, which was based on iterative linear programming
(LP).

Contributions:

\begin{enumerate}

\item We are the first to use a series of novel techniques to train the AMA mechanisms, including Fourier-series-based evolutionary computation and neural networks. \\

\end{enumerate}

\item In Chapter \ref{Conclusion}, we summarize this thesis and discuss future
    directions.\\

\end{itemize}


\chapter{Literature Review and Definitions} 

\label{Background and Related Work} 

\section{Mechanism Design}

Mechanism design is the art of designing social decision rules so that the agents are motivated to report their true valuations and a suitable (according to a given objective) outcome is chosen. The objectives are varied. It could be social welfare, fairness, or revenue maximization (\cite{cole2014sample,morgenstern2015pseudo}).

\begin{figure}[H]
\centering
    \includegraphics[width=0.7\textwidth]{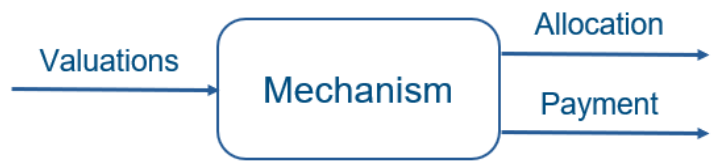}
        \caption[A typical mechanism structure]{A typical mechanism structure.}
        \label{fig:m}
\end{figure}

Generally speaking, mechanisms specify payments and allocations (as outlined in Figure \ref{fig:m}). For a certain agent, the higher she pays, the better allocation she gets. For different agents, they may gain different allocations even if they pay the same.\\

For example, for the public project problem, a mechanism would specify whether the project is built or not, who can consume the project (if the project is excludable), and how much each agent needs to pay.

\subsection{Mechanism Properties}

Self-interested agents may lie about their valuations if doing so increases their own utilities. Therefore, it is necessary to design mechanisms with desired properties. In this thesis, we focus on designing mechanisms that are strategy-proof and individually rational (\cite{yokoo2004effect,dash2004trust,azevedo2019strategy}).\\

\subsubsection{Individual Rationality}

Individual rationality means that it is unacceptable for an agent to receive less utility than that she would have received if not joining in the mechanism. Under individually rational mechanisms, each individual weakly prefers to join in the mechanism rather than not participating (\cite{anand1995foundations,dash2004trust}).\\

\subsubsection{Strategy-proofness}

Strategy-proofness (SP) is also called truthfulness or
dominant-strategy incentive-compatibility (DSIC). It means that no agents can benefit by
misreporting their preferences
(\cite{mookherjee1992dominant,azevedo2019strategy}). Roughgarden
(\cite{roughgarden2010algorithmic}) summarized the commonly used definitions
of incentive compatibility. The weaker notion is Bayesian-Nash incentive
compatibility and the stronger notion is dominant-strategy incentive
compatibility (also often called strategy-proofness). Typical examples of SP
mechanisms are the VCG mechanisms.\\




In 1981, Myerson (\cite{myerson1981optimal}) solved the optimal auction design problem when there is a single item for sale.  However, after four decades, for multi-item revenue-maximizing auctions, the optimal solutions remain unknown (\cite{curry2022differentiable}). Researchers have developed some elegant partial characterization results (\cite{manelli2007multidimensional,pavlov2011optimal,yao2017dominant}) or have developed impressive algorithmic optimal or near-optimal mechanisms (\cite{cai2012algorithmic,hart2017approximate}) for some specific settings. For the public project problem, there are limited characterization results and also limited existing manual mechanisms.
\\

\subsection{Manual Mechanism Design}

Mechanism design has traditionally been based on manual effort and human-expert experience (\cite{sandholm2003automated}). The designer manually designs a certain rule set and then proves that the proposed mechanism is optimal or near-optimal. Often, the designer studies the mechanism design problem from a mathematical structure perspective (\cite{krishna2009auction}). Indeed, many impressive and significant mechanisms were designed this way. Armstrong obtained the revenue optimal mechanisms for selling two items to one buyer (\cite{armstrong1996multiproduct}). Pavlov et al. (\cite{pavlov2011optimal}) derived optimal results for two items with symmetric uniform distributions.  In this thesis, we also conduct some level of manual mechanism design for certain public projects, and then theoretically prove by mathematical analysis that the derived mechanism is optimal in certain situations or under certain technical assumptions.

\subsection{Automated Mechanism Design}

In 2002, Conitzer et al. (\cite{Conitzer2002:Complexity}) first proposed the idea of automated mechanism design (AMD).
The designer first needs to design a general framework and a given objective for the problem. Then the designer leaves the optimisation process to the computer, and the computer will automatically calculate the suitable parameters and decide on the details of the mechanisms.\\

    Since the proposal of AMD, many researchers used AMD for mechanism design.
    In 2006, Jurca and Faltings (\cite{jurca2006minimum}) applied Automated
    Mechanism Design to compute the minimum payments for a reputation system.
    In 2007, Constantin and Parkes (\cite{constantin2007revenue}) studied
    dynamic single-item auctions using automated mechanism design for
    interdependent value agents. The authors used mixed-integer programming to
    handle a small number of agents. In 2010, Bhattacharya et al.
    (\cite{bhattacharya2010budget}) used AMD to study revenue-maximizing
    multi-item auctions for budget-constrained agents. Likhodedov
    et al.  (\cite{Likhodedov2004:Methods,Likhodedov2005:Approximating})
    applied automated mechanism design to maximize revenue for combinatorial
    auctions. The authors focused on the family of Affine Maximizer Auctions
    (AMA), and AMD was used to automatically adjust the mechanism parameters.

\subsection{Classic Mechanisms}

In this section, we will introduce a few well-known/state-of-the-art mechanisms that are relevant to this thesis.

\subsubsection{Cost Sharing Mechanisms}

Cost sharing mechanisms are mechanisms that can be applied to binary indivisible public project models. Cost sharing mechanisms are \textbf{strategy-proof} and \textbf{individual rational} (\cite{moulin2005price}).
We summarize a few existing cost sharing mechanisms as follows:

\begin{enumerate}
    \item  \paragraph{\textit {Largest Unanimous Mechanism} (\cite{Ohseto2000:Characterizations})}

For every nonempty coalition of agents $S = \{S_1,S_2,\ldots,S_k\}$, there is a constant cost share vector $C_S=(c_{S_1},c_{S_2},\ldots,c_{S_k})$ with $c_{S_i}\ge 0$ and $\sum_{1\le i\le k} c_{S_i}=1$. $c_{S_i}$ is agent $S_i$'s cost share under coalition $S$. If agent $i$ belongs to two coalitions $S$ and $T$ with $S\subsetneq T$, then $i$'s cost share under $S$ must be greater than or equal to her cost share under $T$. Agents in $S$ unanimously approve the cost share vector $C_S$ if and only if $v_{S_i}\ge c_{S_i}$ for all $i \in S$. The mechanism picks the largest coalition $S^*$ satisfying that $C_{S^*}$ is unanimously approved. If $S^*$ does not exist, then the decision is not to build. If $S^*$ exists, then it is always unique, in which case the decision is to build. Only agents in $S^*$ are consumers and they pay according to $C_{S^*}$.

\item  \paragraph{\textit {Serial Cost Sharing Mechanism (SCS)} (\cite{Moulin1994:Serial,Guo2018:Cost})}

The serial cost sharing mechanism is a specific largest unanimous mechanism. For every nonempty subset of agents $S$ with $|S|=k$, the cost share vector is $(\frac{1}{k},\frac{1}{k},\ldots,\frac{1}{k})$. The mechanism picks the largest coalition $S^*$ where the agents are willing to pay equal shares.

\begin{itemize}
\item If $S^*$ is empty, then the project will not be built


\item If $S^*$ is not empty, then every agent in $S^*$ each pays an equal share and only those who pay can consume the project.\\
\end{itemize}

The serial cost sharing mechanism is elegant and we also prove that in many situations, they are in fact optimal/near-optimal.


\end{enumerate}

\subsubsection{VCG and VCG Redistribution Mechanisms}

\begin{enumerate}
\item  \paragraph{\textit {VCG Mechanism}}
The Vickrey-Clarke-Groves (VCG) mechanism is efficient and strategy-proof but not budget-balanced (\cite{Vickrey1961:Counterspeculation,clarke1971multipart,groves1973incentives}).
\\

Under the VCG mechanism, each agent $i$ reports her private type $\theta_i$. The outcome that maximizes the agents' total valuations is chosen. Every agent is required to make a VCG payment $t(\theta_{-i})$, which is determined by the other agents' types. An agent’s VCG payment is usually described as the extent to which the agent's existence harms the other agents, in terms of the total valuation of other agents, which is called the externality in economics. The total VCG payment may be quite large, leading to decreased welfare for the agents. In particular, in the context of the public project problem, where the goal is often to maximize the social welfare (the agents' total utility considering payments), having large VCG payments are undesirable.\\



\item \paragraph{\textit {VCG Redistribution Mechanisms}}

VCG redistribution mechanisms are also efficient, strategy-proof and not budget-balanced (\cite{guo2008undominated,Guo2011:VCG}). They are proposed to address the social welfare loss due to large VCG payments.

Under a VCG redistribution mechanism, we first run VCG, and then redistribute the VCG payments back to the agents as much as possible. The amount that every agent receives (or pays additionally) is called the redistribution payment, and is characterized by a redistribution payment function $h$. For agent $i$, its redistribution payment $h(\theta_{-i})$ depends on other agents' types $\theta_{-i}$. In this way, the VCG redistribution mechanism remains to be strategy-proof (\cite{Naroditskiy2012:Redistribution}). \\

The VCG redistribution mechanism may bring the agents more benefit due to the redistribution payment, so the total social welfare could be increased compared to the VCG mechanism.
\end{enumerate}




\subsubsection{Affine Maximizer Auctions (AMA)}

Myerson's (\cite{myerson1981optimal}) solved for the optimal single-item auction. For combinatorial auctions, Myerson's technique does not generalize beyond single-parameter settings. Revenue maximizing mechanism design remains an open problem. Many revenue-boosting techniques were proposed by researchers. The Affine Maximizer Auctions (AMA) mechanisms are a family of strategy-proof mechanisms that are characterized by a set of parameters (\cite{Likhodedov2005:Approximating}).
By focusing on the AMA mechanisms, the mechanism designer can focus on tuning the mechanism parameters in order to achieve better revenue. This way, mechanism design is no longer a functional optimisation process, but rather a value optimisation process, which is often much easier.

The family of AMA mechanisms for the public project model was formally defined by Guo et al. (\cite{Guo2016:Revenue}) as follows:

\begin{tcolorbox}
    \begin{center}
        AMA Mechanisms
    \end{center}
\begin{itemize}
    \item Given a type profile $\theta$, the outcome picked is the following:
        \[o^*=\arg\max_{o\in O}\left(\sum_{i=1}^nu_iv_i(\theta_i,o)+a_o\right)\]

    \item Agent $i$'s payment equals:
        \[\frac{\max_{o\in O}\left(\sum_{j\neq i}u_jv_j(\theta_j,o)+a_o\right)
        - \sum_{j\neq i}u_jv_j(\theta_j,o^*)-a_{o^*}}{u_i}\]
    \end{itemize}

\end{tcolorbox}

Here, $O$ represents the outcome space, $\Theta_i$ represents agent $i$'s type space, and $v_i(\theta_i,o)$ represents agent $i$'s valuation for
outcome $o\in O$ when her type is $\theta_i\in \Theta_i$.\\


\section{Machine Learning Methods for Automated Mechanism Design}

Machine learning is a useful tool for finding an acceptable or near-optimal results for optimization problems. Typical machine learning methods include evolutionary computation algorithm, particle swarm optimization algorithm, artificial neural network algorithm, etc. (\cite{mohri2018foundations})\\

In 2003, Sandholm (\cite{sandholm2003automated}) reported that search algorithms is a new approach to solve the automated mechanism design problem. In 2005, Balcan et al. (\cite{balcan2005mechanism}) obtained a unified approach for a variety of revenue-maximizing mechanism design problems. They used sample-complexity techniques in machine learning theory to simplify the design of revenue-maximizing incentive-compatible mechanisms to standard algorithmic questions. Various novel machine learning methods have been proposed for AMD (\cite{narasimhan2016automated}). In the current "Era of Neural Networks" (\cite{sharma2017era}), many researchers used machine learning methods to find optimal mechanisms (\cite{alaei2012bayesian,cai2012algorithmic}) or near-optimal mechanisms (\cite{hart2017approximate,hartline2009simple,yao2014n}). \\


In this thesis, we mainly use evolutionary computation and artificial neural networks to find optimal/near-optimal mechanisms for public project problems. \\

\subsection{Evolutionary Computation}
Evolutionary computation techniques can produce highly optimized solutions by mutation and crossover, which generates new genotypes to find good solutions for a given problem.  Evolutionary computation is widely used for a long range of computational tasks (\cite{Neumann2019:Evolutionary,Long2020:Evolutionary,Do2021:Analysis}).
Researchers can apply evolutionary computation to mechanism design by treating mechanism design as an engineering problem and bring in engineering design principles (\cite{phelps2010evolutionary}). We categorize these approaches under the banner of evolutionary mechanism design.
Andrews (\cite{andrews1994genetic}) is the first to apply evolutionary computation to the double-auction design problem with a view to automating the mechanism design process.
Cliff (\cite{cliff1998evolving}) used evolutionary
search to explore the parameter space of the zip strategy. After that, more researchers used evolutionary computations to study mechanism design problems (\cite{conitzer2007incremental,cliff2002evolution,phelps2002co,phelps2003applying}).\\

Evolutionary computation (in the context of mechanism design) usually involves the following steps:
\begin{enumerate}
\item Initialization: An initial batch of mechanisms is created when evolutionary computation starts.
\item Genetic operators (such as crossover and mutation): Small random changes are introduced to the existing mechanisms in the population (\cite{fox1991genetic}).
\item Selection: As weaker mechanisms are stochastically removed, the mechanism population becomes refined.
\end{enumerate}


\subsection{Artificial Neural Network}

A neural network is often formed by thousands of neurons, which are grouped
into different layers to pass and process data from the input layer to the
output layer. The internal layers are called hidden layers and usually, they are
fully connected with their neighbour hidden layers. This pattern forms the
structure of the network. Activation functions are often applied to every layer
(\cite{sharma2017era}).\\


Recent development of automated mechanism design capitalizes on deep learning. Dütting et al. (\cite{dutting2019optimal}) in 2019 first used the neural network (called \textbf{RegretNet}) to solve automatic mechanism design problems for learning approximately strategy-proof auctions for multi-bidder multi-item auctions. They also proposed another neural network (called \textbf{RochetNet}) for a single bidder, which is perfectly strategy-proof ensured by network construction. Feng et al. (\cite{feng2018deep}) and Golowich et al. (\cite{golowich2018deep}) extended \textbf{RegretNet} to deal with different constraints and objectives.
Sakurai et al. (\cite{Sakurai2019:Deep}) adopted the idea of \textbf{RegretNet} to design false-name-proof mechanisms, where false-name-proof violations are added to the cost function, which is then minimized to remove false-name manipulation. Curry et al. (\cite{curry2020certifying}) modified \textbf{RegretNet} to be able to verify strategy-proofness of the auction mechanism learned by neural network.  Peri et al. (\cite{peri2021preferencenet}) developed \textbf{PreferenceNet} to encode human preference (e.g. fairness) constraints into \textbf{RegretNet}.\\

Later on, many other researchers explored the use of neural networks in automating mechanism design (\cite{Manisha2018:Learning,shen2018automated,rahme2020auction,duan2022context,rahme2020permutation,zhan2020incentive,bichler2021learning,brero2020reinforcement}).\\

For example, Shen et al. (\cite{shen2018automated}) used neural networks
(called \textbf{MenuNet}) to automatically design revenue optimal mechanisms
for multi-item revenue optimization settings (selling two items to one
buyer). They theoretically proved that the mechanism framework is
indeed optimal. Rahme et al. (\cite{rahme2020auction}) proposed \textbf{ALGNet}
to solve two-player game through parameterizing the misreporter. Then Rahme et
al. (\cite{rahme2020permutation}) proposed a permutation-equivariant
architecture called EquivariantNet for symmetric auctions. Duan et al. (\cite{duan2022context}) proposed
\textbf{CITransNet} which focused on permutation equivariant auctions without
symmetric constraints. Zhan et al. (\cite{zhan2020incentive}) used a
reinforcement learning-based (DRL-based) solution that can automatically learn
the best pricing strategy. Bichler et al. (\cite{bichler2021learning}) used neural network to
study Bayesian Nash equilibrium strategies. They
developed a neural pseudogradient ascent (NPGA) methods to learn the equilibrium bid
functions.\\

All evidences suggest that, for many settings, it is possible for the neural network to find optimal or near-optimal truthful mechanisms with proper loss functions and network structures.\\

For example, \textbf{RegretNet} (\cite{dutting2019optimal}) used loss function to
ensure approximate strategy-proofness. Whenever the training samples cause
truthfulness violations, RegretNet reports a large loss so loss function can force the neural network based mechanism to learn toward strategy-proofness.\\

Another way to achieve truthfulness is via adopting specific ``truthful'' neural network structures.
Both \textbf{RochetNet} (\cite{dutting2019optimal}) and \textbf{MenuNet} (\cite{shen2018automated}) were restricted to a single agent, but enforced strategy-proofness at the architectural level (\cite{curry2022differentiable}). \textbf{ALGNet} and \textbf{CITransNet} focused on a small number of agents (such as two bidders) for permutation or symmetric equivariant settings. In this thesis, we propose a \textbf{PORF network} which is a strategy-proof neural network enforced by network structure.\\

As an example truthful neural network, Figure \ref{Manisha} shows how Manisha et al. used the network structure to ensure the strategy-proof constraint (\cite{Manisha2018:Learning}). The input for network is $\theta_{-i}$ (n-1 dimensions), and the output (redistribution payment) for agent $i$ is $h(\theta_{-i})$ which is independent from agent $i$'s type $\theta_{i}$. Since an agent cannot impact her redistribution function under this network, we know for sure that no matter how the parameters are trained, the overall mechanism is always truthful.

\begin{figure}[H]
\centering
\includegraphics[width=0.5\textwidth]{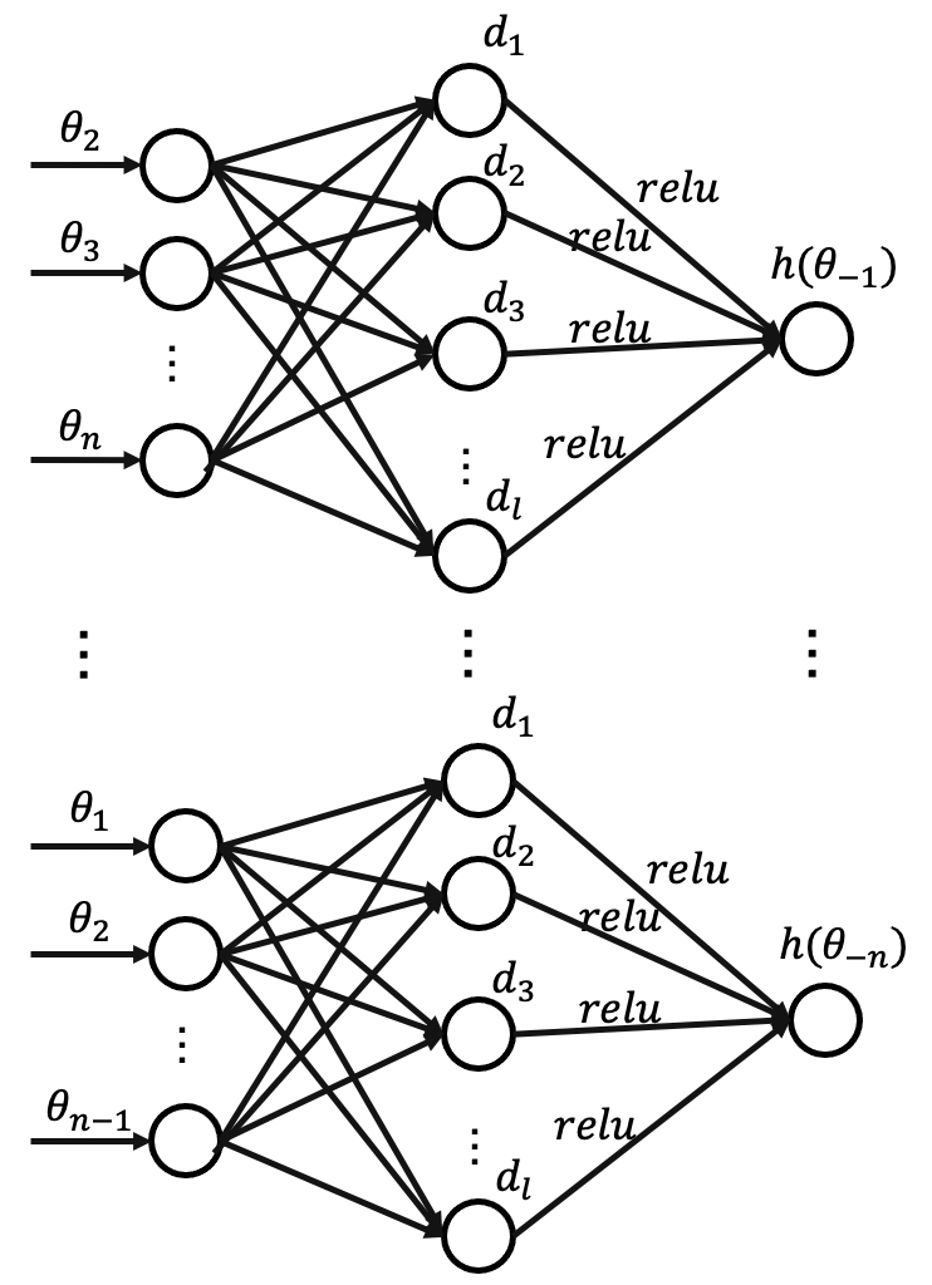}
\caption[Manisha's network model]{Manisha's nonlinear network model (\cite{Manisha2018:Learning})}\label{Manisha}
\end{figure}


In this thesis, we proposed a few novel neural networks for automated mechanism design,
such as networks based on price-oriented rationing-free (PORF) interpretation of strategy-proof iterative (largest unanimous) mechanisms (\cite{yokoo2002price,Yokoo2003:Characterization}),
and GAN-assisted networks for worst-case mechanism design, where generative adversarial network (GAN) is used to generate worst-case type profiles (\cite{goodfellow2014generative,goodfellow2020generative}).\\

\subsubsection{Unsupervised Neural Network}

Traditionally, neural networks are supervised with labeled training data for recognition tasks. For example, image/face recognition started off as pure supervised but has become hybrid by adopting unsupervised pre-training (\cite{martins2007unsupervised}). Typical unsupervised neural network methods include autoencoders, deep belief networks, generative adversarial networks (GAN), and self-organizing maps, etc. (\cite{mohri2018foundations})\\

\subsubsection{Generative Adversarial Network (GAN)}
In 2014, Goodfellow et al. (\cite{goodfellow2014generative}) first proposed a creative neural network structure to find the ``worst cases'' for a neural network (discriminator network).
Then it immediately became one of the most important neural network frameworks (\cite{yu2017seqgan,yi2019generative}).
Generative adversarial networks (GAN) usually include two different neural networks: a generator network and a discriminator network.\\

The generator network generates candidates' data while the discriminator
network evaluates them. Specific labeled training data are not necessary for
GAN. The generator is trained based on whether it succeeds in fooling the
discriminator. The discriminator network is trained with samples from the
generator and sometimes from the initial training data set. The generator and
the discriminator are trained alternatively in each epoch.

\chapter{Mechanism Design for Public Projects via Neural Networks} 

\label{Mechanism Design for Public Projects via Neural Networks} 

\section{Introduction}\label{sec:intro}
Many multiagent system applications (\emph{e.g.},
crowdfunding) are related to the public project problem. The public project
problem is a classic economic model that has been studied extensively in both
economics and computer
science~ (\cite{Mas-Colell1995:Microeconomic,Moore2006:General,Moulin1988:Axioms}).
Under this model, a group of agents decide whether or not to fund a
\emph{nonrivalrous} public project --- when one agent consumes the project, it does not prevent others from using it.
We study both the \textbf{non-excludable} and the \textbf{excludable} versions
of the \emph{binary} \emph{non-rivalrous} \emph{indivisible} public project problem. The binary decision is either to
build or not. If the decision is not to build, then no agents can consume the
project.  For the \emph{nonexcludable} version, once a project is built, all
agents can consume it, including those who do not pay.  For example, if the
public project is an open source software project, then once the project is
built, everyone can consume it.  For the \emph{excludable} version, the mechanism has the capability to exclude agents from the built project. For example, if the public project is a swimming pool, then we could impose the restriction that only the paying agents have access to it.\\

Our aim is to design mechanisms that maximize \emph{expected} performances.  We consider two design objectives. One is to maximize the \textbf{expected number
of consumers} (expected number of agents who are allowed to consume the
project).\footnote{For the nonexcludable version, this is simply
to maximize the probability of building.} The other objective is
to maximize the agents' \textbf{expected agents' welfare}.
We argue that maximizing the expected number of consumers is \emph{more fair}
in some applications. When maximizing the number of consumers, agents with lower valuations are treated as important as high-value agents.\\

With slight technical adjustments, we adopt the existing characterization
results from Ohseto~ (\cite{Ohseto2000:Characterizations}) for
\emph{strategy-proof} and \emph{individually rational} mechanisms for both the
nonexcludable and the excludable public project problems.  
Under various conditions, we show that existing mechanisms or mechanisms derived via classic mechanism design approaches are optimal
or near optimal. When existing mechanism design approaches do not
suffice, we propose a neural network based approach, which successfully
identifies better performing mechanisms.  Mechanism design via deep
learning/neural networks has been an emerging topic~ (\cite{Golowich2018:Deep,
Duetting2019:Optimal,Shen2019:Automated,Manisha2018:Learning}).  Duetting
\emph{et.al.}~ (\cite{Duetting2019:Optimal}) proposed a general approach for
revenue maximization via deep learning. The high-level idea is to manually
construct often complex network structures for representing mechanisms for
different auction types. The cost function is the negate of the revenue. By minimizing the cost function via gradient descent, the network parameters are
adjusted, which leads to better performing mechanisms.  The mechanism design
constraints (such as strategy-proofness) are enforced by adding a penalty term
to the cost function. The penalty is calculated by sampling the type profiles
and adding together the constraint violations.  Due to this setup, the final
mechanism is only approximately strategy-proof. The authors demonstrated that
this technique scales better than the classic mixed integer programming based
automated mechanism design approach~ (\cite{Conitzer2002:Complexity}).  Shen
\emph{et.al.}~ (\cite{Shen2019:Automated}) proposed another neural network based
mechanism design technique, involving a seller's network and a buyer's network.
The seller's network provides a menu of options to the buyers.  The buyer's
network picks the utility-maximizing menu option. An exponential-sized
hard-coded buyer's network is used (\emph{e.g.}, for every discretized type
profile, the utility-maximizing option is pre-calculated and stored in the
network).  The authors mostly focused on settings with only one buyer.\\

Our approach is different from previous approaches, and it involves three
  technical innovations, which can be applied to neural network-based mechanism
  design in general.
  $$$$
  
\vspace{.1in}\noindent \emph{Calculating mechanism decisions off the network by
interpreting mechanisms as price-oriented rationing-free (PORF)
mechanisms~ (\cite{Yokoo2003:Characterization}) :} A mechanism often involves
binary decisions.
A common way to model binary decisions on neural networks is by using the
\emph{sigmoid} function (or similar activation functions).  A mechanism may
involve a complex decision process, which makes it impractical to
model via \emph{static} neural networks.  For example, for our setting, a
mechanism involves \emph{iterative} decision making where the number of ``rounds'' depends on the agents' types. We could stack multiple
sigmoid functions to model this.  However, stacking sigmoid functions leads to
vanishing gradients and significant numerical errors. Instead, we rely on the
PORF interpretation: every agent faces a set of options (outcomes with prices)
determined by the other agents. We single out a randomly chosen agent $i$, and
draw a sample of \emph{the other agents' types $v_{-i}$}.  We use a separate program (off the
network) to calculate the options $i$ would face. For example, the separate
program can be any Python function, so it is trivial to handle complex and
iterative decision making. We no longer need to construct complex network
structures like the approach in~ (\cite{Duetting2019:Optimal}) or resort to
exponential-sized hard-coded buyer networks like the approach
in~ (\cite{Shen2019:Automated}).  After calculating $i$'s options, we link the
options together using terms that contain network parameters, which enables backpropagation.  One effective way to do
this is by making use of the prior distribution as discussed below.

\newpage
\vspace{.1in}\noindent \emph{Feeding prior distribution into the cost
function:} In conventional machine learning, we have access to a finite set of
samples, and the process of machine learning is essentially to infer the true
probability distribution of the samples. For existing neural network mechanism
design
approaches~ (\cite{Duetting2019:Optimal,Shen2019:Automated})
(as well as this chapter), it is assumed that the prior distribution is known.
After calculating agent $i$'s options, we make use of $i$'s distribution to
figure out the probabilities of all the options, and then derive the expected
objective value from $i$'s perspective. We assume that the prior distribution is continuous. If we have the \emph{analytical form}
of the prior distribution, then the probabilities can
provide quality gradients for our training process. This is due to the fact that probabilities are
calculated based on neural network outputs. In summary, we
combine both samples and distribution in our cost function.
(In classic machine learning, the cost function only involves the samples.)
$$ $$

\vspace{.1in}\noindent
\emph{Supervision to manual mechanisms as initialization:} We start our
training by first conducting supervised learning. We teach the network to mimic
an existing manual mechanism, and then leave it to gradient descent. This is
essentially a systematic way to improve manual mechanisms.
In our experiments,
besides the \emph{serial cost sharing mechanism}, we also considered two
heuristic-based manual mechanisms as starting points. One heuristic is feasible
but not optimal, and the gradient descent process is able to improve its
performance. The second heuristic is not always feasible, and the gradient
descent process is able to fix the constraint violations. Supervision to manual
mechanisms is often better than random initializations.  For one thing, the
supervision step often pushes the performance to a state that is already
somewhat close to optimality.  It may take a long time for random
initializations to catch up. In computational expensive scenarios, it may never
catch up.  Secondly, supervision to a manual mechanism is a systematic way to
set good initialization point, instead of trials and errors.  It should be noted
that for many conventional deep learning application domains, such as computer
vision, well-performing manual algorithms do not exist. Fortunately, for
mechanism design, we often have simple and well-performing mechanisms to be
used as starting points.

\newpage
\section{Model Description}
$$$$
$n$ agents need to decide whether or not to build a public project.  The
project is \emph{binary} (build or not build) and \emph{nonrivalrous} (the cost
of the project does not depend on how many agents are consuming it).  We
normalize the project cost to $1$.  Agent $i$'s type $v_i\in[0,1]$ represents
her private valuation for the public project. We assume that the $v_i$ are
drawn \emph{i.i.d.} from a known prior distribution. Let $F$ and $f$ be the CDF
and PDF, respectively. We assume that the distribution is continuous and $f$ is
differentiable.

\begin{itemize}

    \item For the nonexcludable public project model, agent $i$'s valuation is
        $v_i$ if the project is built, and $0$ otherwise.

    \item For the excludable public project model, the outcome space is
        $\{0,1\}^n$.  Under outcome $(a_1,a_2,\ldots,a_n)$, agent $i$ consumes
        the public project if and only if $a_i=1$. If for all $i$, $a_i=0$,
        then the project is not built.  As long as $a_i=1$ for some $i$, the
        project is built.\\

\end{itemize}

We use $p_i\ge 0$ to denote agent $i$'s payment. We require that $p_i=0$ for
all $i$ if the project is not built and $\sum p_i=1$ if the project is built.
An agent's payment is also referred to as her \emph{cost share} of the project.
An agent's utility is $v_i-p_i$ if she gets to consume the project, and $0$
otherwise.\\

We focus on \emph{strategy-proof} and \emph{individually rational} mechanisms.
We study two objectives. One is to maximize the expected number of consumers.
The other is to maximize the agents' welfare.

\newpage
\section{Characterizations and Bounds}
$$ $$
We adopt a list of existing characterization results
from~ (\cite{Ohseto2000:Characterizations}), which characterizes strategy-proof
and individual rational mechanisms for both nonexcludable and excludable public
project problems.  A few technical adjustments are needed for the existing
characterizations to be valid for our problem.  The characterizations
in~ (\cite{Ohseto2000:Characterizations}) were not proved for quasi-linear
settings. However, we verify that the assumptions needed by the proofs are
valid for our model setting. One exception is that the characterizations
in~ (\cite{Ohseto2000:Characterizations}) assume that every agent's valuation is
strictly positive.  This does not cause issues for our objectives as we are
maximizing for expected performances and we are dealing with continuous
distributions.\footnote{Let $M$ be the optimal mechanism. If we restrict the
valuation space to $[\epsilon,1]$, then $M$ is Pareto dominated by an
unanimous/largest unanimous mechanism $M'$ for the nonexcludable/excludable
setting. The expected performance difference between $M$ and $M'$ vanishes as
$\epsilon$ approaches $0$. Unanimous/largest unanimous mechanisms are still
strategy-proof and individually rational when $\epsilon$ is set to exactly
$0$.} We are also safe to drop the \emph{citizen sovereign} assumption
mentioned in one of the characterizations\footnote{If a mechanism always
builds, then it is not individually rational in our setting.  If a mechanism
always does not build, then it is not optimal.}, but not the other two minor
technical assumptions called \emph{demand monotonicity} and \emph{access
independence}.\\\\

\subsection{Nonexcludable Mechanism Characterization}
$$$$

\begin{definition}[Unanimous mechanism~ (\cite{Ohseto2000:Characterizations})]
    There is a constant cost share vector $(c_1,c_2,\ldots,c_n)$ with $c_i\ge
    0$ and $\sum c_i=1$. The mechanism builds if and only if $v_i\ge c_i$ for
    all $i$. Agent $i$ pays exactly $c_i$ if the decision is to build.  The
unanimous mechanism is strategy-proof and individually rational.\\
\end{definition}

\newpage
\begin{theorem}[Nonexcludable mech. characterization~ (\cite{Ohseto2000:Characterizations})]
    For the nonexcludable public project model,
if a mechanism is strategy-proof, individually rational, and citizen
sovereign, then it is weakly Pareto dominated by an unanimous mechanism.

\noindent Citizen sovereign: Build and not build are both possible outcomes.\\
\end{theorem}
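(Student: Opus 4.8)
The plan is to start from an arbitrary mechanism $M$ that is strategy-proof, individually rational, and citizen sovereign, and massage it into the required form using the taxation (option) principle together with the budget identity $\sum_i p_i = 1$. First I would fix an agent $i$ and a profile $v_{-i}$ of the others and examine the set of outcome--payment pairs $M$ can produce as $v_i$ ranges over $[0,1]$. Since the project is non-excludable, agent $i$'s outcome whenever the project is built is simply ``consume'', so this menu contains at most two distinct options: ``not build, pay $0$'' and ``build, pay $\beta_i(v_{-i})$''. Strategy-proofness forces (a) the build-payment to be a single value $\beta_i(v_{-i})$ independent of $v_i$ --- otherwise $i$ would deviate to the cheaper build option --- and (b) the mechanism to build exactly when $v_i\ge\beta_i(v_{-i})$ (up to tie-breaking at equality) whenever a build option is available for this $v_{-i}$, since declining to build for some $v_i$ above the threshold would let $i$ profitably misreport. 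Individual rationality then gives $\beta_i(v_{-i})\ge 0$. This is the standard critical-value argument for binary allocation problems.

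Next I would invoke the budget constraint: on the build region $B=\{v: v_i\ge\beta_i(v_{-i})\ \forall i\}$ we must have $\sum_i\beta_i(v_{-i})=1$ (a.e.). The crux is that this identity, together with the two minor technical hypotheses carried over from Ohseto --- \emph{demand monotonicity} (an agent's critical value is non-increasing in the others' reported valuations) and \emph{access independence} --- forces the critical-value functions to be (essentially) constant on $B$, say $\beta_i\equiv c_i$ with $c_i\ge 0$ and $\sum_i c_i=1$; the intuition, already visible for $n=2$, is that if $\beta_1$ took two different values on its domain, then the budget identity $\beta_2(v_1)=1-\beta_1(v_2)$ would over-determine $\beta_2$ on an overlap of its domain, a contradiction. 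Citizen sovereignty rules out the degenerate ``always build'' and ``never build'' cases. I then take $M'$ to be the unanimous mechanism with cost-share vector $(c_1,\ldots,c_n)$.

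Finally I would check that $M'$ weakly Pareto dominates $M$. On $B$ we have $v_i\ge c_i$ for every $i$, so $M'$ also builds and each agent's utility is $v_i-c_i$ under both mechanisms. Off $B$, $M$ gives every agent utility $0$, while $M'$ is individually rational and so gives utility $\ge 0$; this also covers profiles where $M$ simply declines to offer anyone a build option even though a unanimous mechanism would build (which is exactly why domination, rather than equality, is the right statement). Hence $u_i^{M'}(v)\ge u_i^{M}(v)$ for every agent $i$ and every profile $v$, up to the measure-zero boundary $\{v_i=c_i\}$ which does not affect expected performance --- this is precisely weak Pareto domination.

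The main obstacle I anticipate is the middle step: showing the critical-value functions collapse to constants. The taxation-principle and domination steps are essentially routine, but deducing constancy of the $\beta_i$ from $\sum_i\beta_i\equiv 1$ on a region that is itself defined through the $\beta_i$ is delicate for $n\ge 3$, and this is exactly where demand monotonicity and access independence become indispensable. Carefully treating the edge cases --- profiles $v_{-i}$ for which $M$ offers no build option at all, and boundary ties --- will also require some attention, but these only push $M$ into the ``dominated by but not equal to'' regime rather than breaking the argument.
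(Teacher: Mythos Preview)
The paper does not prove this theorem. It is quoted directly from Ohseto~(\cite{Ohseto2000:Characterizations}) as an existing characterization result; the surrounding text explicitly says ``We adopt a list of existing characterization results from~[Ohseto]'' and then states the theorem with attribution but no proof. So there is no in-paper argument to compare your proposal against.

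That said, two remarks on your sketch. First, you invoke \emph{demand monotonicity} and \emph{access independence} in the constancy step, but those hypotheses belong to the \emph{excludable} characterization in this paper, not the nonexcludable one; the nonexcludable theorem as stated here assumes only strategy-proofness, individual rationality, and citizen sovereignty. If your argument genuinely needs those extra assumptions you are proving a weaker statement than the one quoted. Second, you correctly identify the crux --- collapsing the critical-value functions $\beta_i(v_{-i})$ to constants via the budget identity $\sum_i\beta_i(v_{-i})=1$ on the build region --- but the overlap argument you gesture at for $n=2$ does not straightforwardly scale, and handling the region where $M$ never builds (so no budget identity is available) is exactly where citizen sovereignty must be deployed. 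Your taxation-principle and Pareto-domination bookends are fine; the middle step would need the actual combinatorial argument from Ohseto's paper rather than the heuristic you give.
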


Mechanism $1$ weakly Pareto dominates Mechanism $2$ if every agent
weakly prefers Mechanism $1$ under every type profile.\\

\begin{example}[Conservative equal costs mechanism~ (\cite{Moulin1994:Serial})] An
example unanimous mechanism works as follows: we build the project if and only
if every agent agrees to pay $\frac{1}{n}$.\\  \end{example}

\subsection{Excludable Mechanism Characterization}
$$$$
\begin{definition}[Largest unanimous
    mechanism~ (\cite{Ohseto2000:Characterizations})] For every nonempty coalition
    of agents $S = \{S_1,S_2,\ldots,S_k\}$, there is a constant cost share
    vector $C_S=(c_{S_1},c_{S_2},\ldots,c_{S_k})$ with $c_{S_i}\ge 0$ and
    $\sum_{1\le i\le k} c_{S_i}=1$.  $c_{S_i}$ is agent $S_i$'s cost share
    under coalition $S$.
    If agent $i$ belongs to two coalitions $S$ and $T$ with $S\subsetneq T$,
    then $i$'s cost share under $S$ must be greater than or equal to her cost
    share under $T$.
    Agents in $S$ unanimously approve the cost share
    vector $C_S$ if and only if $v_{S_i}\ge c_{S_i}$ for all $i \in S$.
    The mechanism picks the largest coalition $S^*$ satisfying that $C_{S^*}$
    is unanimously approved.  If $S^*$ does not exist, then the decision is not
    to build.  If $S^*$ exists, then it is always unique, in which case the
    decision is to build. Only agents in $S^*$ are consumers and they pay according to $C_{S^*}$.
The largest unanimous mechanism is strategy-proof and individually rational.\\
\end{definition}

\newpage
The mechanism essentially specifies a constant cost share vector for every
  non-empty coalition.
The cost share vectors must satisfy that if we remove some agents from a
  coalition, then under the remaining coalition, every remaining agent's cost
  share must be equal or higher.
The largest unanimously approved coalition become the consumers/winners and
  they pay according to this coalition's cost share vector.
The project is not built if there are no unanimously approved coalitions.\\

Another way to interpret the mechanism is that the agents start with the grand coalition of all agents.
Given the current coalition, if some agents do not approve their cost shares, then they are forever removed.
The remaining agents form a smaller coalition, and they face increased cost shares.
We repeat the process until all remaining agents approve their shares, or when
  all agents are removed.\\

\begin{theorem}[Excludable mech.
    characterization~ (\cite{Ohseto2000:Characterizations})] For the excludable
    public project model, if a mechanism is strategy-proof, individually
    rational, and satisfies the following assumptions, then it is weakly Pareto
    dominated by a largest unanimous mechanism.

Demand monotonicity: Let $S$ be the set of consumers.  If for every agent $i$
    in $S$, $v_i$ stays the same or increases, then all agents in $S$ are still
    consumers.  If for every agent $i$ in $S$, $v_i$ stays the same or
    increases, and for every agent $i$ not in $S$, $v_i$ stays the same or
    decreases, then the set of consumers should still be $S$.

    Access independence: For all $v_{-i}$, there exist $v_i$ and $v_i'$ so that
agent $i$ is a consumer under type profile $(v_i,v_{-i})$ and is not a consumer
under type profile $(v_i',v_{-i})$. \\ \end{theorem}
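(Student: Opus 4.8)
The plan is to extract from any such mechanism $M$ a family of constant cost-share vectors $\{C_S\}$ indexed by coalitions, and then show the induced largest unanimous mechanism $M'$ weakly Pareto dominates $M$. First I would invoke the taxation principle: since $M$ is strategy-proof and the allocation is binary, fixing the other agents' reports $v_{-i}$ collapses agent $i$'s problem to a menu with at most two entries, ``consume at some price'' and ``abstain at price $0$'' (the abstention price is forced to $0$ by individual rationality together with the payment rules). Hence there is a threshold $q_i(v_{-i})$ so that $i$ is a consumer iff $v_i \ge q_i(v_{-i})$, paying exactly $q_i(v_{-i})$ when she consumes; individual rationality is then automatic. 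Access independence says $i$'s consuming region is a proper nonempty subset of $[0,1]$, so $q_i(v_{-i})\in(0,1]$; in particular an agent reporting $0$ never consumes and an agent reporting $1$ always consumes. The measure-zero tie at $v_i=q_i(v_{-i})$ is immaterial for expected performance and is absorbed in the Pareto step, and the footnotes dispose of dropping citizen sovereignty and of the strict-positivity assumption of the original characterization via the $\epsilon$-argument.

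Next I would define the cost shares. For a coalition $S$, let $v^{S}$ be the profile in which every agent in $S$ reports $1$ and every agent outside $S$ reports $0$. Since $q_j\le 1$, every $j\in S$ consumes at $v^{S}$ while outsiders do not, so the consumer set at $v^{S}$ is exactly $S$; as the consumers' payments sum to the cost $1$, setting $c_{S_i}:=q_i(v^{S}_{-i})$ yields a valid cost-share vector with $c_{S_i}\ge 0$ and $\sum_{i\in S}c_{S_i}=1$. The crucial structural fact is the nesting monotonicity $c_{S_i}\ge c_{T_i}$ for $i\in S\subsetneq T$, which I would derive from demand monotonicity via the comparative-statics lemma: \emph{lowering any single agent's report weakly raises every other agent's threshold}. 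Indeed, if lowering agent $k$'s report made $j$ a consumer at some profile with consumer set $S'\ni j$, then raising $k$'s report back removes no member of $S'$ (first clause of demand monotonicity), so $j$ still consumes; hence $j$'s consuming region only shrinks as $k$'s report rises. Passing from $v^{T}$ to $v^{S}$ lowers the reports of the agents in $T\setminus S$ from $1$ to $0$, giving $c_{S_i}=q_i(v^{S}_{-i})\ge q_i(v^{T}_{-i})=c_{T_i}$. Nesting also gives well-definedness of $M'$: if $S$ and $T$ are both unanimously approved at a profile, then $v_i\ge c_{(S\cup T)_i}$ for every $i\in S\cup T$, so $S\cup T$ is approved, and hence a unique largest approved coalition always exists.

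Finally, the domination step. Fix a profile $v$ and let $S$ be $M$'s consumer set; if $S=\emptyset$ then $M$ gives everyone utility $0$ while $M'$ gives everyone utility $\ge 0$, so assume $S\ne\emptyset$. For $j\in S$, form the profile $w$ that raises the reports of $S\setminus\{j\}$ to $1$, keeps $v_j$, and drops the outsiders to $0$; by the second clause of demand monotonicity the consumer set at $w$ is still $S$, so $v_j=w_j\ge q_j(w_{-j})=q_j(v^{S}_{-j})=c_{S_j}$. Thus $S$ is unanimously approved under $M'$, so $M'$ selects some $S^{*}\supseteq S$. To compare payments I claim $c_{S_j}\le q_j(v_{-j})$ for $j\in S$: lowering $j$'s own report down to its threshold $q_j(v_{-j})$ creates no new consumer outside $S$ (first clause of demand monotonicity, as above), so the resulting consumer set is some $S''\subseteq S$ with $j\in S''$, and applying the $w$-construction at this profile gives $q_j(v_{-j})\ge c_{S''_j}\ge c_{S_j}$ by nesting. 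Hence for $j\in S$ its utility under $M'$ is $v_j-c_{S^{*}_j}\ge v_j-c_{S_j}\ge v_j-q_j(v_{-j})$, its utility under $M$; and for $j\notin S$ its utility under $M$ is $0$, no larger than under $M'$. So $M'$ weakly Pareto dominates $M$.

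The main obstacle is making the two comparative-statics consequences of demand monotonicity fully rigorous — especially the inequality $c_{S_j}\le q_j(v_{-j})$, which requires controlling how the consumer set moves as an agent's \emph{own} report is driven down to its threshold — together with the boundary and tie-breaking bookkeeping (the single point $v_i=q_i(v_{-i})$, and the degenerate ``always build'' / ``never build'' mechanisms, which are excluded by access independence and by the $\epsilon$-argument of the footnotes).
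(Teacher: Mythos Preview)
The paper does not actually prove this theorem. It is stated with the citation to Ohseto's characterization paper and adopted wholesale as an existing result; the surrounding text only discusses the minor technical adjustments needed for the authors' quasi-linear setting (the footnotes about the $\epsilon$-argument and about dropping citizen sovereignty that you noticed), but no proof or proof sketch of the characterization itself appears anywhere in the paper. So there is nothing in the paper to compare your argument against.

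For what it is worth, your reconstruction is broadly along the right lines for how such characterizations are typically established: extract threshold prices via the taxation principle, define the canonical cost-share vectors from the extremal profiles $v^S$, derive nesting from demand monotonicity, and then argue Pareto domination. The places you flag as obstacles --- the tie at $v_i=q_i(v_{-i})$, and making the comparative-statics step $c_{S_j}\le q_j(v_{-j})$ fully rigorous --- are indeed the delicate parts, and Ohseto's original paper handles them with some care (in particular, the original setting is not quasi-linear, so the argument there does not proceed through a scalar threshold function in quite the way you describe). But since the thesis simply imports the result, none of that appears here.
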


\begin{example}[Serial cost sharing mechanism~ (\cite{Moulin1994:Serial})] Here is
    an example largest unanimous mechanism.  For every nonempty subset of
    agents $S$ with $|S|=k$, the cost share vector is
    $(\frac{1}{k},\frac{1}{k},\ldots,\frac{1}{k})$.  The mechanism picks the
    largest coalition where the agents are willing to pay equal shares.\\
\end{example}



\newpage
Deb and Razzolini~ (\cite{Deb1999:Voluntary}) proved that if we further require an
\emph{equal treatment of equals} property (if two agents have the same type,
then they should be treated the same), then the only strategy-proof and
individually rational mechanism left is the serial cost sharing mechanism.  For
many distributions, we are able to outperform the serial cost sharing mechanism.
That is, equal treatment of equals (or requiring anonymity) does hurt
performances.\\

$$$$
\subsection{Nonexcludable Public Project Analysis}\label{sub:nonexcludable}

We start with an analysis on the nonexcludable public project. The results
presented in this section will lay the foundation for the more complex
excludable public project model coming up next.\\

Due to the characterization results, we focus on the family of unanimous
mechanisms. That is, we are solving for the optimal cost share vector
$(c_1,c_2,\ldots,c_n)$, satisfying that $c_i\ge 0$ and $\sum c_i=1$.\\

Recall that $f$ and $F$ are the PDF and CDF of the prior distribution.  The
\emph{reliability function} $\overline{F}$ is defined as $\overline{F}(x)=1-F(x)$.  We
define $w(c)$ to be the expected utility of an agent when her cost share is
$c$, conditional on that she accepts this cost share.
\[w(c)=\frac{\int_c^1 (x-c)f(x)dx}{\int_c^1f(x)dx}\]
One condition we will use is \emph{log-concavity}:
if $\log(f(x))$ is concave in $x$, then $f$ is log-concave.
We also introduce another condition called \emph{welfare-concavity}, which requires $w$ to be concave.\\

\begin{theorem}\label{thm:nonexcludable}
If $f$ is log-concave, then the conservative equal costs mechanism maximizes the expected
number of consumers.
If $f$ is log-concave and welfare-concave, then the conservative equal costs mechanism
maximizes the expected agents' welfare.\\
\end{theorem}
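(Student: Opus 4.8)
The plan is to invoke the nonexcludable characterization theorem to restrict attention to \emph{unanimous} mechanisms, each described by a cost-share vector $(c_1,\dots,c_n)$ with $c_i\ge 0$ and $\sum_{i=1}^n c_i=1$; to rewrite both objectives as explicit symmetric functions of $(c_1,\dots,c_n)$; and then to show that, after taking logarithms, each objective is a concave function on the probability simplex. A symmetric concave function on the simplex attains its maximum at the centroid $(1/n,\dots,1/n)$, which is precisely the conservative equal costs mechanism.

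First I would carry out the reduction and derive closed forms. Every strategy-proof, individually rational, citizen-sovereign mechanism is weakly Pareto dominated by a unanimous one; and a strategy-proof individually rational mechanism that is not citizen sovereign either always builds (violating IR) or never builds (objective value $0$, hence not optimal), so it suffices to optimize over unanimous mechanisms. Weak Pareto domination is pointwise in every $u_i$, hence in $\sum_i u_i$, so it weakly increases the expected agents' welfare. For the expected number of consumers, observe that in the nonexcludable model a dominated mechanism $M$ can build at a profile $v$ where the dominating unanimous mechanism $M'$ does not only if $v_i$ equals agent $i$'s payment under $M$ for every $i$ (IR gives $u_i^M(v)\ge 0$, while $M'$ not building at $v$ together with domination gives $u_i^M(v)\le 0$), i.e. only on the null set $\{\sum_i v_i=1\}$; so $M'$ builds almost surely whenever $M$ does and has weakly larger build probability. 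For a unanimous mechanism the project is built iff $v_i\ge c_i$ for all $i$, so by independence the build probability is $\prod_{i=1}^n\overline{F}(c_i)$ and the expected number of consumers is $n\prod_{i=1}^n\overline{F}(c_i)$. Conditioning on the build event and using $E[v_i\mid v_i\ge c_i]=c_i+w(c_i)$ together with $\sum_i c_i=1$, a short computation yields the expected agents' welfare $\big(\prod_{i=1}^n\overline{F}(c_i)\big)\big(\sum_{i=1}^n w(c_i)\big)$.

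Next come the two concavity arguments. For the first statement, maximizing $n\prod_i\overline{F}(c_i)$ is equivalent to maximizing $\sum_i\log\overline{F}(c_i)$. The one nontrivial analytic ingredient is that log-concavity of $f$ implies log-concavity of the reliability function $\overline{F}$ (equivalently, the hazard rate $f/\overline{F}$ is nondecreasing) --- a classical fact, provable in a line from concavity of $\log f$. Granting this, $\sum_i\log\overline{F}(c_i)$ is a sum of concave functions, hence concave, and symmetric under permuting the $c_i$; if $c^\ast$ is optimal then so is each permutation of it, and by concavity their average $(1/n,\dots,1/n)$ does at least as well, proving optimality of the conservative equal costs mechanism. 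For the second statement, maximizing the welfare is equivalent to maximizing $\sum_i\log\overline{F}(c_i)+\log\big(\sum_i w(c_i)\big)$. The first term is concave as above; welfare-concavity makes $\sum_i w(c_i)$ concave, it is positive on the relative interior (a sum of positive conditional expectations), and since $\log$ is increasing and concave, $\log\big(\sum_i w(c_i)\big)$ is concave; so the sum of the two terms is concave and symmetric, and the same permutation-averaging argument again singles out $(1/n,\dots,1/n)$.

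The main obstacle is the lemma ``$f$ log-concave $\Rightarrow$ $\overline{F}$ log-concave'': once this is in hand, everything else is convexity bookkeeping on top of the closed-form objectives, and the welfare identity is a routine computation. I would prove the lemma by differentiating $\log f$ and deducing that $f/\overline{F}$ is nondecreasing. A secondary, minor point is the boundary of the simplex: the logarithmic objective tends to $-\infty$ as some $c_i\to 1$ (all cost concentrated on one agent, zero build probability), so any maximizer lies in the relative interior, where the logarithms are finite and $\sum_i w(c_i)>0$, and the symmetric-concavity argument applies there; the compactness detail needed to conclude that a maximizer exists and can be taken at the centroid is standard.
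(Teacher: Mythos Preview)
Your proof is correct and follows essentially the same approach as the paper: reduce to unanimous mechanisms, write the objectives in closed form, use that log-concavity of $f$ implies log-concavity of $\overline{F}$ (the paper cites Bagnoli--Bergstrom for this), and exploit symmetry plus concavity on the simplex to conclude optimality at $(1/n,\dots,1/n)$. The one small methodological difference is in the welfare part: you take logs and argue that $\sum_i\log\overline{F}(c_i)+\log\bigl(\sum_i w(c_i)\bigr)$ is concave, whereas the paper simply observes that the two factors $\prod_i\overline{F}(c_i)$ and $\sum_i w(c_i)$ are each separately maximized at equal shares, so their product is too---a slightly more direct argument that avoids needing to check that $\log$ of a positive concave function is concave.
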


\begin{proof} Let $C=(c_1,c_2,\ldots,c_n)$ be the cost share vector. Maximizing
    the expected number of consumers is equivalent to maximizing the
    probability of $C$ getting unanimously accepted, which equals $\overline{F}(c_1)
    \overline{F}(c_2) \ldots \overline{F}(c_n)$.  Its log equals
    $\sum_{1\le i\le n}\log(\overline{F}(c_i))$.  When $f$ is log-concave, so is
    $\overline{F}$ according to~ (\cite{Bagnoli2005:Log}).  This means that when cost
    shares are equal, the above probability is maximized.\\

    The expected agents' welfare
    under the cost share vector $C$ equals $\sum w(c_i)$, conditional on all
    agents accepting their shares. This is maximized when shares are equal.
    Furthermore, when all shares are equal, the probability of unanimous
approval is also maximized.\\  \end{proof}

$f$ being log-concave is also called the \emph{decreasing reversed failure
rate} condition~ (\cite{Shao2016:Optimal}).  Bagnoli and
Bergstrom~ (\cite{Bagnoli2005:Log}) proved log-concavity for many common
distributions, including the distributions in Table~\ref{tb:logconcave} (for
all distribution parameters).  All distributions are restricted to $[0,1]$.
We also list some limited results for welfare-concavity.
We prove that the uniform distribution is welfare-concave, but for the other
distributions, the results are based on simulations.
Finally, we include the conditions for $f$ being nonincreasing, which will be used in the excludable public project model.\\

\begin{table}[ht]
\caption{Example Log-Concave Distributions}
\centering
\begin{tabular}{ l c r }\label{tb:logconcave}
    & Welfare-Concavity & Nonincreasing \\
    Uniform $U(0,1)$ & Yes & Yes \\
    \hline
    Normal & No ($\mu=0.5,\sigma=0.1$) & $\mu\le 0$ \\
    \hline
    Exponential & Yes ($\lambda=1$) & Yes \\
    \hline
    Logistic & No ($\mu=0.5,\sigma=0.1$) & $\mu\le 0$ \\
\end{tabular}
\end{table}
$$$$

Even when optimal, the conservative equal costs mechanism performs poorly.  We
take the uniform $U(0,1)$ distribution as an example. Every agent's cost share
is $\frac{1}{n}$.  The probability of acceptance for one agent is
$\frac{n-1}{n}$, which approaches $1$ asymptotically. However, we need
unanimous acceptance, which happens with much lower probability.  For the
uniform distribution, asymptotically, the probability of unanimous acceptance
is only $\frac{1}{e}\approx 0.368$. In general, we have the following bound:\\

\begin{theorem}
If $f$ is Lipschitz continuous, then when $n$ goes to infinity, the probability of unanimous
    acceptance under the conservative equal costs mechanism is $e^{-f(0)}$.\\
\end{theorem}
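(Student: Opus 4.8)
The plan is to compute the probability of unanimous acceptance in closed form and then take the limit. Under the conservative equal costs mechanism each agent is asked to pay the share $\frac{1}{n}$, and (being a unanimous mechanism) the project is built precisely when every agent accepts, i.e.\ when $v_i\ge \frac{1}{n}$ for all $i$. Since the $v_i$ are i.i.d.\ with reliability function $\overline{F}$, the probability of unanimous acceptance is exactly $\overline{F}(\tfrac{1}{n})^n=(1-F(\tfrac{1}{n}))^n$. So the statement reduces to showing $\lim_{n\to\infty}\bigl(1-F(\tfrac{1}{n})\bigr)^n=e^{-f(0)}$.

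First I would pass to logarithms: it suffices to show $n\log\bigl(1-F(\tfrac{1}{n})\bigr)\to -f(0)$. Because the prior is supported on $[0,1]$ and is continuous, $F(0)=0$, hence $F(\tfrac{1}{n})\to 0$; so for large $n$ we may use $\log(1-t)=-t+O(t^2)$ with $t=F(\tfrac{1}{n})$, which gives $n\log\bigl(1-F(\tfrac{1}{n})\bigr)=-nF(\tfrac{1}{n})+O\bigl(nF(\tfrac{1}{n})^2\bigr)$.

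Next I would estimate $F(\tfrac{1}{n})$ using Lipschitz continuity. Writing $F(\tfrac{1}{n})=\int_0^{1/n}f(x)\,dx$ and using $|f(x)-f(0)|\le Lx$ for the Lipschitz constant $L$, we obtain $F(\tfrac{1}{n})=\frac{f(0)}{n}+O(\tfrac{1}{n^2})$. Multiplying by $n$ gives $nF(\tfrac{1}{n})=f(0)+O(\tfrac{1}{n})\to f(0)$, and $nF(\tfrac{1}{n})^2=O(\tfrac{1}{n})\to 0$, so the quadratic error term drops out. Combining the two displays yields $n\log\bigl(1-F(\tfrac{1}{n})\bigr)\to -f(0)$, and exponentiating gives the claim.

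The only real subtlety — and the ``hard part,'' though it is mild — is verifying that both error terms are genuinely of lower order. The $O(\tfrac{1}{n^2})$ remainder in $F(\tfrac{1}{n})$ is exactly what Lipschitz continuity buys (plain continuity of $f$ at $0$ would only give $F(\tfrac{1}{n})=\frac{f(0)}{n}+o(\tfrac{1}{n})$, still enough here, but Lipschitz keeps the bookkeeping clean), and the $O(t^2)$ term from the logarithm expansion must be checked to survive multiplication by $n$, which it does since $nF(\tfrac{1}{n})^2\to 0$. Everything else is routine.
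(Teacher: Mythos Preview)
Your argument is correct. The paper states this theorem without proof (it is presented as a general bound immediately after the uniform example, with no accompanying argument), so there is nothing to compare against; your computation via $n\log\bigl(1-F(\tfrac{1}{n})\bigr)=-nF(\tfrac{1}{n})+O\bigl(nF(\tfrac{1}{n})^2\bigr)$ together with the Lipschitz estimate $F(\tfrac{1}{n})=\frac{f(0)}{n}+O(\tfrac{1}{n^2})$ is the natural route and fills the gap cleanly.
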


Without log-concavity, the conservative equal costs mechanism is not
necessarily optimal. We present the following dynamic program (DP) for calculating
the optimal unanimous mechanism. We only present the formation for welfare
maximization.\footnote{Maximizing the expected number of consumers can be viewed as a
special case where every agent's utility is $1$ if the project is built} We assume that there is an ordering of the agents based on their identities.  We define
$B(k,u,m)$ as the maximum expected agents' welfare under the following conditions:

\begin{itemize}
    \item The first $n-k$ agents have already approved their cost shares, and their total
        cost share is $1-m$. That is, the remaining $k$ agents need to come up with $m$.\\
    \item The first $n-k$ agents' total expected utility is $u$.\\
\end{itemize}

The optimal agents' welfare is then $B(n,0,1)$. We recall that $\overline{F}(c)$ is the probability
that an agent accepts a cost share of $c$, we have\\
\[
    B(k,u,m)=\max_{0\le c\le m}\overline{F}(c)B(k-1,u+w(c), m-c)
\]\\
The base case is $B(1,u,m)=\overline{F}(m)(u+w(m))$.  In terms of implementation of
this DP, we have $0\le u\le n$ and $0\le m\le 1$. We
need to discretize these two intervals. If we pick a discretization size of
$\frac{1}{H}$, then the total number of DP subproblems is
about $H^2n^2$.\\

To compare the performance of the conservative equal costs mechanism and our DP
solution, we focus on distributions that are not log-concave (hence, uniform
and normal are not eligible).  We introduce the following non-log-concave
distribution family:

\begin{definition}[Two-Peak Distribution $(\mu_1,\sigma_1,\mu_2,\sigma_2,p)$]
    With probability $p$, the agent's valuation is drawn from the normal
    distribution $N(\mu_1,\sigma_1)$ (restricted to $[0,1]$).
    With probability $1-p$, the agent's valuation is drawn from $N(\mu_2,\sigma_2)$ (restricted to $[0,1]$).\\
\end{definition}
$$ $$
The motivation behind the two-peak distribution is that there may be two
categories of agents. One category is directly benefiting from the public
project, and the other is indirectly benefiting. For example, if the public
project is to build bike lanes, then cyclists are directly benefiting, and the
other road users are indirectly benefiting (\emph{e.g.}, less congestion for
them).  As another example, if the public project is to crowdfund a piece of
security information on a specific software product (\emph{e.g.}, PostgreSQL),
then agents who use PostgreSQL in production are directly benefiting and the
other agents are indirectly benefiting (\emph{e.g.}, every web user is pretty
much using some websites backed by PostgreSQL).  Therefore, it is natural to
assume the agents' valuations are drawn from two different
distributions. For simplicity, we do not consider three-peak, \emph{etc.}\\

For the two-peak distribution $(0.1,0.1,0.9,0.1,0.5)$, DP significantly
outperforms the conservative equal costs (CEC) mechanism.\\

\begin{center}
\begin{tabular}{ l c r }
    & E(no. of consumers) & E(welfare)\\
  n=3 CEC & 0.376 & 0.200 \\
    \hline
  n=3 DP & 0.766 & 0.306 \\
    \hline
  n=5 CEC & 0.373 & 0.199 \\
    \hline
  n=5 DP & 1.426 & 0.591 \\
\end{tabular}
\end{center}

\newpage
\subsection{Excludable Public Project}
Due to the characterization results, we focus on the family of largest
unanimous mechanisms.  We start by showing that the serial cost sharing
mechanism is optimal in some scenarios.\\

\begin{theorem}\label{thm:excludable}\,
$2$ agents case:
If $f$ is log-concave, then the serial cost sharing mechanism maximizes the expected
    number of consumers.
If $f$ is log-concave and welfare-concave, then the serial cost sharing mechanism
maximizes the expected agents' welfare.

$3$ agents case:
If $f$ is log-concave and nonincreasing, then the serial cost sharing mechanism maximizes the expected
    number of consumers.
If $f$ is log-concave, nonincreasing, and welfare-concave, then the serial cost sharing mechanism maximizes the agents' welfare.\\
\end{theorem}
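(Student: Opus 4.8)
The plan is to work entirely within the family of largest unanimous mechanisms --- this is without loss of generality by the characterization above, since weak Pareto domination preserves both the expected number of consumers and the expected welfare up to a measure-zero set of type profiles --- writing each expected objective as an explicit function of the cost shares and showing it is maximized termwise at the serial-cost-sharing parameters, using log-concavity of $\overline{F}$, concavity of $w$, and (only for three agents) concavity of $F$. For two agents a largest unanimous mechanism is just a grand-coalition share vector $(c_1,c_2)$ with $c_1+c_2=1$: the singleton coalitions must charge the full cost $1$, and since $v_i\in[0,1]$ with $F$ continuous a singleton is approved with probability $0$, so with probability one the project is built with both agents consuming precisely when $v_1\ge c_1$ and $v_2\ge c_2$. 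Hence the expected number of consumers is $2\,\overline{F}(c_1)\overline{F}(c_2)$ and the expected welfare is $\overline{F}(c_1)\overline{F}(c_2)\left(w(c_1)+w(c_2)\right)$ --- the same expressions that occur for $n=2$ in the proof of Theorem~\ref{thm:nonexcludable}. Log-concavity of $\overline{F}$ (which follows from log-concavity of $f$ (\cite{Bagnoli2005:Log})) makes $\overline{F}(c_1)\overline{F}(c_2)$ maximal at $c_1=c_2=\frac12$, and concavity of $w$ gives $w(c_1)+w(c_2)\le2w(\frac12)$, again with equality at $c_1=c_2=\frac12$; since each objective is a product of nonnegative factors each maximized at $(\frac12,\frac12)$, it is maximized there --- that is, by serial cost sharing.

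For three agents, a largest unanimous mechanism is specified by grand shares $(c_1,c_2,c_3)$ with $\sum_i c_i=1$ and three pair share vectors $(c^{ij}_i,c^{ij}_j)$ with $c^{ij}_i+c^{ij}_j=1$, subject to the monotonicity constraints $c^{ij}_i\ge c_i$ and $c^{ij}_j\ge c_j$; singletons again charge $1$ and are irrelevant. The central step is a selection analysis: using monotonicity one shows that, with probability one, the grand coalition is built exactly when $v_l\ge c_l$ for every $l$, and the pair $\{i,j\}$ (with $k$ the excluded agent) is built exactly when $v_i\ge c^{ij}_i$, $v_j\ge c^{ij}_j$, and $v_k<c_k$ --- the point being that $v_i\ge c^{ij}_i\ge c_i$ and $v_j\ge c^{ij}_j\ge c_j$ force the grand coalition to fail iff $v_k<c_k$, in which case no other pair is approved, so $\{i,j\}$ is the unique maximal approved coalition. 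The three pair events are disjoint, so the expected number of consumers equals
\[
3\,\overline{F}(c_1)\overline{F}(c_2)\overline{F}(c_3)+2\sum_{\{i,j\}}\overline{F}(c^{ij}_i)\overline{F}(c^{ij}_j)\,F(c_k),
\]
and, writing the welfare of a built coalition as a sum of conditional expected utilities, the expected welfare equals
\[
\overline{F}(c_1)\overline{F}(c_2)\overline{F}(c_3)\sum_i w(c_i)+\sum_{\{i,j\}}\overline{F}(c^{ij}_i)\overline{F}(c^{ij}_j)\left(w(c^{ij}_i)+w(c^{ij}_j)\right)F(c_k).
\]

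I would then bound each objective termwise by its serial-cost-sharing value. For a pair term, $c^{ij}_i+c^{ij}_j=1$ with log-concavity of $\overline{F}$ gives $\overline{F}(c^{ij}_i)\overline{F}(c^{ij}_j)\le\overline{F}(\frac12)^2$ and concavity of $w$ gives $w(c^{ij}_i)+w(c^{ij}_j)\le2w(\frac12)$, so (all factors nonnegative) each pair term is at most $\overline{F}(\frac12)^2 F(c_k)$, respectively $2\,\overline{F}(\frac12)^2 w(\frac12)\,F(c_k)$. For the all-agents term, log-concavity with $\sum_i c_i=1$ gives $\prod_i\overline{F}(c_i)\le\overline{F}(\frac13)^3$ and welfare-concavity gives $\sum_i w(c_i)\le3w(\frac13)$. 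Finally --- and this is the only place the hypothesis enters --- $f$ nonincreasing means $F$ is concave, hence $\sum_i F(c_i)\le3F(\frac13)$. Combining, the expected number of consumers is at most $3\,\overline{F}(\frac13)^3+6\,\overline{F}(\frac12)^2 F(\frac13)$ and the expected welfare is at most $3\,\overline{F}(\frac13)^3 w(\frac13)+6\,\overline{F}(\frac12)^2 w(\frac12)F(\frac13)$, with equality exactly when $c_1=c_2=c_3=\frac13$ and all $c^{ij}_i=c^{ij}_j=\frac12$; this parameter choice is feasible ($\frac12\ge\frac13$) and is the serial cost sharing mechanism, so serial cost sharing attains the bound and is optimal.

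The hard part is the selection analysis --- correctly determining, on each region of type space, which coalition is built, and in particular using monotonicity of the cost shares to see that the winning pair is unique and that its selection region is exactly $\{v_i\ge c^{ij}_i,\ v_j\ge c^{ij}_j,\ v_k<c_k\}$. Once the objectives are in the displayed forms, everything reduces to a product-of-nonnegative-maxima argument, so no joint concavity of the objective in the parameters is needed (and none holds); the only remaining point is a quick check of the degenerate values $c_i\in\{0,1\}$, where $\overline{F}$ or $w$ may vanish, which the pointwise bounds handle automatically.
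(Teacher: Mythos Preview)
Your proposal is correct and follows essentially the same route as the paper: decompose the expected objective by the size of the winning coalition, write each piece as a product $\prod \overline{F}(c_{\cdot})\cdot\sum w(c_{\cdot})\cdot F(c_k)$, and bound each factor at the equal-share value using log-concavity of $\overline{F}$, concavity of $w$, and (for the size-$2$ sum) concavity of $F$ coming from $f$ nonincreasing. The paper simply asserts the pair-coalition probability formula $\overline{F}(c^{ij}_i)\overline{F}(c^{ij}_j)F(c_k)$, whereas you additionally supply the selection analysis (via the monotonicity constraint $c^{ij}_i\ge c_i$) that justifies it and guarantees uniqueness of the winning pair --- a step the paper leaves implicit.
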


For $2$ agents, the conditions are identical to the nonexcludable case.  For
$3$ agents, we also need $f$ to be nonincreasing.  Example distributions
satisfying these conditions were listed in Table~\ref{tb:logconcave}.\\

\begin{proof}
We only present the proof for welfare maximization when $n=3$, which is the most complex case.
    (For maximizing the number of consumers, all references to the $w$ function should be
    replaced by the constant $1$.)
The largest unanimous mechanism specifies constant cost
shares for every coalition of agents.
We use $c_{1\underline{2}3}$ to denote agent $2$'s cost share when the coalition is $\{1,2,3\}$.
Similarly, $c_{\underline{2}3}$ denotes agent $2$'s cost share when the coalition is $\{2,3\}$.
If the largest unanimous coalition has size $3$, then the expected agents' welfare gained due to this
    case is:
    \[
        \overline{F}(c_{\underline{1}23})
        \overline{F}(c_{1\underline{2}3})
        \overline{F}(c_{12\underline{3}})
        (
        w(c_{\underline{1}23})
        +w(c_{1\underline{2}3})
        +w(c_{12\underline{3}})
        )
    \]
    Given log-concavity of $\overline{F}$ (implied by the log-concavity of $f$) and welfare-concavity,
    and given that $c_{\underline{1}23}+c_{1\underline{2}3}+c_{12\underline{3}}=1$. We have that the above is maximized when all agents have equal shares.\\

    If the largest unanimous coalition has size $2$ and is $\{1,2\}$, then the expected agents' welfare gained due to this
    case is:
    \[
        \overline{F}(c_{\underline{1}2})
        \overline{F}(c_{1\underline{2}})
        F(c_{12\underline{3}})
        (
        w(c_{\underline{1}2})
        +w(c_{1\underline{2}})
        )
    \]\\
    $F(c_{12\underline{3}})$ is the probability that agent $3$ does not join in the coalition.
    The above is maximized when
    $c_{\underline{1}2}=c_{1\underline{2}}$, so it simplifies to
    $2\overline{F}(\frac{1}{2})^2 w(\frac{1}{2}) F(c_{12\underline{3}})$.
    The welfare gain from all size $2$ coalitions is:\\
    \[
        2\overline{F}(\frac{1}{2})^2
        w(\frac{1}{2})(
        F(c_{\underline{1}23})
        +F(c_{1\underline{2}3})
        +F(c_{12\underline{3}})
        )
    \]\\
    Since $f$ is nonincreasing, we have that $F$ is concave, the above is again maximized when all cost shares are equal.\\

    Finally, the probability of coalition size $1$ is $0$, which can be ignored in our analysis.
    Therefore, throughout the proof, all terms referenced are maximized when the cost shares are equal.
\end{proof}
$$$$

For $4$ agents and uniform distribution, we have a similar result.
\begin{theorem}\label{thm:uniform}
    Under the uniform distribution $U(0,1)$, when $n=4$, the serial cost sharing
    mechanism maximizes the expected number of consumers and the expected agents' welfare.
\end{theorem}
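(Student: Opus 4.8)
The plan is to reduce to largest unanimous mechanisms (by the excludable characterization, exactly as in Theorem~\ref{thm:excludable}) and then exploit the special form of $U(0,1)$ to collapse both objectives to the same tractable quantity. For the uniform distribution $\overline F(x)=1-x$ and $w(x)=\tfrac{1-x}{2}$, so whenever the winning coalition $S^*$ has size $k$ the realised agents' welfare is $\sum_{i\in S^*}w(c_{S^*,i})=\tfrac12\bigl(k-\sum_{i\in S^*}c_{S^*,i}\bigr)=\tfrac{k-1}{2}$, which depends only on $k$. Writing $p_k=\Pr[|S^*|=k]$, we get $E[\#\text{consumers}]=\sum_k k\,p_k$ and $E[\text{welfare}]=\tfrac12\sum_k(k-1)p_k$. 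Also $p_1=0$, since a singleton coalition is approved only when its lone member accepts the full share $1$, a probability-$0$ event. Hence, with $P_j:=\Pr[|S^*|\ge j]$ (so $P_1=P_2$), one obtains $E[\#\text{consumers}]=2P_2+P_3+P_4$ and $E[\text{welfare}]=\tfrac12(P_2+P_3+P_4)$. Both are nonnegative combinations of $P_2,P_3,P_4$, so it suffices to produce one largest unanimous mechanism that simultaneously maximises $P_2$, $P_3$ and $P_4$, and to verify it equals the serial cost sharing mechanism.

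Next I would write $P_2,P_3,P_4$ explicitly as polynomials in the cost shares. The key tool is the nesting property of largest unanimous mechanisms: if every member of a coalition $S$ accepts its $S$-share then each such member automatically accepts its share in every superset of $S$. Thus ``$S$ wins'' factors as $\prod_{i\in S}(1-c_{S,i})$ times the probability that, for each strict superset $T\supsetneq S$, some agent of $T\setminus S$ rejects its $T$-share. For $n=4$ this gives: $P_4=\prod_i(1-c_i^{(4)})$ with $\sum_i c_i^{(4)}=1$; a size-$3$ coalition missing agent $j$ contributes $\bigl(\prod_{i\neq j}(1-c_{[4]\setminus\{j\},i})\bigr)c_j^{(4)}$; and a size-$2$ coalition $\{i_1,i_2\}$ missing $\{j_1,j_2\}$ contributes $\bigl(\prod_{\ell=1,2}(1-c_{\{i_1,i_2\},i_\ell})\bigr)\bigl(a_1b_2+a_2b_1-b_1b_2\bigr)$, where $a_\ell=c_{\{i_1,i_2,j_\ell\},j_\ell}$, $b_\ell=c_{j_\ell}^{(4)}$, and $a_\ell\ge b_\ell$ by nesting (the factor $a_1b_2+a_2b_1-b_1b_2$ arising as $a_1a_2-(a_1-b_1)(a_2-b_2)$, the inclusion--exclusion over the two size-$3$ failures and the grand failure). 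Summing over all coalitions yields closed forms for $P_3$ and $P_2$. For $P_4$, log-concavity of $\overline F(x)=1-x$ (the AM--GM argument of Theorem~\ref{thm:nonexcludable}) makes $\prod_i(1-c_i^{(4)})$ maximal exactly at $c_i^{(4)}=\tfrac14$. It then remains to show $P_3$ and $P_2$ are maximised, over the feasible region (shares $\ge0$, each coalition's shares summing to $1$, and the nesting inequalities), by the serial cost sharing shares $(\tfrac12,\tfrac12)$, $(\tfrac13,\tfrac13,\tfrac13)$, $(\tfrac14,\ldots,\tfrac14)$ --- which are mutually compatible since $\tfrac12\ge\tfrac13\ge\tfrac14$.

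The main obstacle is this constrained maximisation: unlike $n\le3$ in Theorem~\ref{thm:excludable}, the variables no longer decouple across coalition levels. A size-$3$ share $c_{S,i}$ appears in the size-$3$-wins term through $1-c_{S,i}$ (wanting it small) but in a size-$2$-wins term as one of the $a_\ell$ (wanting it large), and each grand-coalition share $c_i^{(4)}$ appears in $P_4$ (wanting it small), as a rejection coefficient in the size-$3$ terms, and inside the $a_1b_2+a_2b_1-b_1b_2$ factors (wanting it large). Resolving these competing monotonicities is the crux. I would attempt it level by level: for fixed higher-level shares, show each coalition's internal shares are best made equal (log-concavity of $1-x$ for the acceptance products; a direct inequality for the mixed $a$/$b$ factor along the constraint $a_1+a_2=\dots$), then reduce to an optimisation over the grand-coalition shares alone and show it is maximised at the centre of the simplex, after separately checking that no boundary configuration does better. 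The restriction to $n=4$ and $U(0,1)$ is precisely what keeps the resulting polynomials low-degree enough for this to be a finite, if tedious, verification.
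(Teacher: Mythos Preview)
The paper states Theorem~\ref{thm:uniform} without proof, so there is no argument in the paper to compare against; I can only assess your proposal on its own merits.

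Your reduction is correct and is the right first move. For $U(0,1)$ one has $w(c)=(1-c)/2$, so the expected welfare conditional on the winning coalition being $S$ equals $\sum_{i\in S}w(c_{S,i})=(|S|-1)/2$; this collapses the welfare objective to the same linear form in $p_k=\Pr[|S^*|=k]$ as the consumer-count objective, and with $p_1=0$ both become nonnegative combinations of $P_2,P_3,P_4$. Your factorisations of the ``$S$ wins'' events via the nesting monotonicity and independence are also correct, including the inclusion--exclusion expression $a_1b_2+a_2b_1-b_1b_2$ for the size-$2$ case.

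The genuine gap is exactly where you say it is: you have not carried out the constrained maximisation showing that the SCS shares simultaneously maximise $P_2$, $P_3$, and $P_4$. This is not merely tedious bookkeeping. For example, $P_3=\Pr\bigl[\text{grand approved}\ \cup\ \bigcup_j\{[4]\setminus\{j\}\ \text{approved}\}\bigr]$ depends on both the grand shares and all four size-$3$ vectors, linked by the nesting constraints $c_{[4]\setminus\{j\},i}\ge c_i^{(4)}$; there is no a priori reason the maximiser of $P_3$ alone puts the grand shares at $(1/4,1/4,1/4,1/4)$. Your ``level-by-level'' sketch implicitly assumes one can first fix the grand shares by maximising $P_4$ and then optimise the smaller levels, but that only works if symmetrising the grand shares also weakly increases $P_3$ and $P_2$, which you have not argued. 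Likewise, the inner optimisation over size-$3$ shares feeds into the size-$2$ factor through the $a_\ell$'s, so ``for fixed higher-level shares, make each coalition's shares equal'' needs a proof that pushes through both the acceptance product and the mixed $a/b$ terms simultaneously. Until those cross-level monotonicity claims are established (and the boundary cases checked), the argument is incomplete.
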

$$$$

For $n\ge 4$ and for general distributions, we propose a numerical method for
calculating the performance upper bound.  A largest unanimous mechanism can be
carried out by the following process: we make cost share offers to the agents
one by one based on an ordering of the agents. Whenever an agent disagrees, we
remove this agent and move on to a coalition with one less agent. We repeat
until all agents are removed or all agents have agreed. We introduce the
following mechanism based on a Markov process.  The initial state is
$\{(\underbrace{0,0,\ldots,0}_n),n\}$, which represents that initially, we only
know that the agents' valuations are at least $0$, and we have not made any
cost share offers to any agents yet (there are $n$ agents yet to be offered).
We make a cost share offer $c_1$ to agent $1$.  If agent $1$ accepts, then we
move on to state $\{(c_1,\underbrace{0,\ldots,0}_{n-1}),n-1\}$. If agent $1$
rejects, then we remove agent $1$ and move on to reduced-sized state
$\{(\underbrace{0,\ldots,0}_{n-1}),n-1\}$. In general, let us consider a state
with $t$ users $\{(l_1,l_2,\ldots,l_t),t\}$. The $i$-th agent's valuation lower
bound is $l_i$. Suppose we make offers $c_1,c_2,\ldots,c_{t-k}$ to the first
$t-k$ agents and they all accept, then we are in a state
$\{(\underbrace{c_1,\ldots,c_{t-k}}_{t-k},\underbrace{l_{t-k+1},\ldots,l_{t}}_k),k\}$.
The next offer is $c_{t-k+1}$. If the next agent accepts, then we move on to
$\{(\underbrace{c_1,\ldots,c_{t-k+1}}_{t-k+1},\underbrace{l_{t-k+2},\ldots,l_{t}}_{k-1}),k-1\}$.
If she disagrees (she is then the first agent to disagree), then we move on to
a reduced-sized state
$\{(\underbrace{c_1,\ldots,c_{t-k}}_{t-k},\underbrace{l_{t-k+2},\ldots,l_{t}}_{k-1}),t-1\}$.
Notice that whenever we move to a reduced-sized state, the number of agents yet
to be offered should be reset to the total number of agents in this state.
Whenever we are in a state with all agents offered
$\{(c_1,\ldots,c_t),0\}$, we have gained an objective value of
$t$ if the goal is to maximize the number of consumers.  If the goal is to
maximize welfare, then we have gained an objective value of $\sum_{1\le i\le t}
w(c_i)$.  Any largest unanimous mechanism can be represented via the above
Markov process.  So for deriving performance upper bounds, it suffices to focus
on this Markov process.\\

Starting from a state, we may end up with different objective values. A state
has an expected objective value, based on all the transition probabilities.  We
define $U(t,k,m,l)$ as the maximum expected objective value starting from a
state that satisfies:

\begin{itemize}
    \item There are $t$ agents in the state.

    \item There are $k$ agents yet to be offered.  The first $t-k$ agents
        (those who accepted the offers) have a total cost share of $1-m$. That
        is, the remaining $k$ agents are responsible for a total cost share of $m$.

    \item The $k$ agents yet to be offered have a total lower bound of $l$.\\
\end{itemize}

The upper bound we are looking for is then $U(n,n,1,0)$, which can be calculated
via the following DP process:\\
\[
    U(t,k,m,l) = \max_{\substack{0\le l^*\le l\\l^*\le c^*\le m}} \left( \frac{\overline{F}(c^*)}{\overline{F}(l^*)}U(t,k-1,m-c^*,l-l^*)\right.
\]
\[
    \left.+(1-\frac{\overline{F}(c^*)}{\overline{F}(l^*)})U(t-1,t-1,1,1-m+l-l^*)\right)
\]
$$$$
\\
In the above, there are $k$ agents yet to be offered. We maximize over the next agent's
possible lower bound $l^*$ and the cost share $c^*$. That is, we look for the
best possible lower bound situation and the corresponding optimal offer.
$\frac{\overline{F}(c^*)}{\overline{F}(l^*)}$ is the probability that the next agent
accepts the cost share, in which case, we have $k-1$ agents left. The remaining
agents need to come up with $m-c^*$, and their lower bounds sum up to $l-l^*$.
When the next agent does not accept the cost share, we transition to a new
state with $t-1$ agents in total. All agents are yet to be offered, so $t-1$
agents need to come up with $1$. The lower bounds sum up to $1-m+l-l^*$.\\

There are two base conditions.  When there is only one agent, she
has $0$ probability for accepting an offer of $1$, so $U(1,k,m,l) = 0$.
When there is only $1$ agent yet to be offered,
the only valid lower bound is $l$ and the only sensible offer is $m$. Therefore,\\
\[U(t,1,m,l) = \frac{\overline{F}(m)}{\overline{F}(l)}G(t)+(1-\frac{\overline{F}(m)}{\overline{F}(l)})U(t-1,t-1,1,1-m)\]
\newpage
Here, $G(t)$ is the maximum
objective value when the largest unanimous set has size $t$.  For maximizing
the number of consumers, $G(t)=t$.  For maximizing welfare,\\
\[G(t)= \max_{\substack{c_1,c_2,\ldots,c_t\\c_i\ge 0\\\sum c_i=1}}\sum_i w(c_i)\]
The above $G(t)$ can be calculated via a trivial DP.
\\

Now we compare the performances of the serial cost sharing mechanism against
the upper bounds.  All distributions used here are log-concave.  In every cell,
the first number is the objective value under serial cost sharing, and the
second is the upper bound.  We see that the serial cost sharing mechanism is
close to optimality in all these experiments.  We include both welfare-concave
and non-welfare-concave distributions (uniform and exponential with $\lambda=1$
are welfare-concave). For the two distributions not satisfying
welfare-concavity, the welfare performance is relatively worse.\\
\\
\begin{center}
\begin{tabular}{ l c r }
\Large
    & E(no. of consumers) & E(welfare)\\
    n=5 $U(0,1)$ & 3.559, 3.753 & 1.350, 1.417 \\
    \hline
    n=10 $U(0,1)$ & 8.915, 8.994& 3.938, 4.037 \\
    \hline
    n=5 $N(0.5,0.1)$ & 4.988, 4.993 & 1.492, 2.017 \\
    \hline
    n=10 $N(0.5,0.1)$ & 10.00, 10.00 & 3.983, 4.545 \\
    \hline
    n=5 Exponential $\lambda=1$ & 2.799, 3.038 & 0.889, 0.928 \\
    \hline
    n=10 Exponential $\lambda=1$ & 8.184, 8.476 & 3.081, 3.163 \\
    \hline
    n=5 Logistic$(0.5,0.1)$ & 4.744, 4.781 & 1.451, 1.910 \\
    \hline
    n=10 Logistic$(0.5,0.1)$ & 9.873, 9.886 & 3.957, 4.487 \\
\end{tabular}
\end{center}

\newpage
\begin{example} Here we provide an example to show that the serial cost sharing
    mechanism can be far away from optimality. We pick a simple Bernoulli
    distribution, where an agent's valuation is $0$ with $0.5$ probability and
    $1$ with $0.5$ probability.\footnote{This chapter assumes that the
    distribution is continuous, so technically we should be considering a
    smoothed version of the Bernoulli distribution. For the purpose of
    demonstrating an elegant example, we ignore this technicality.} Under the
    serial cost sharing mechanism, when there are $n$ agents, only half of the
    agents are consumers (those who report $1$s). So in expectation, the number
    of consumers is $\frac{n}{2}$.  Let us consider another simple mechanism.
    We assume that there is an ordering of the agents based on their identities
    (not based on their types). The mechanism asks the first agent to accept a
    cost share of $1$. If this agent disagrees, she is removed from the system.
    The mechanism then moves on to the next agent and asks the same, until an
    agent agrees. If an agent agrees, then all future agents can consume the
    project for free. The number of removed agents follows a geometric
    distribution with $0.5$ success probability. So in expectation, $2$
    agents are removed.  That is, the expected number of consumers is $n-2$.
\end{example}

$$$$

\section{Mechanism Design vs Neural Networks}
For the rest of this chapter, we focus on the excludable public project model and distributions that are not log-concave.  Due to the characterization results, we only need to consider the largest unanimous mechanisms. We use neural networks and deep learning to solve for well-performing largest unanimous mechanisms. Our approach involves several technical innovations as discussed in Section~\ref{sec:intro}.\\

\subsection{Mechanism Design via Neural Networks} We start with an
overview of automated mechanism design (AMD) via neural networks.  Previous
papers on mechanism design via neural
networks~ (\cite{Golowich2018:Deep,Duetting2019:Optimal,Shen2019:Automated,Manisha2018:Learning})
all fall into this general category.

\begin{itemize}

    \item Use neural networks to represent the full (or a part of the) mechanism.
        Like mechanisms, neural networks are essentially functions with multi-dimensional inputs and outputs.

    \item Training is essentially to adjust the network parameters in order to move toward
        a better performing network/mechanism. Training is just parameter optimization.

    \item Training samples are not real data. Instead, the training
        type profiles are generated based on the known prior distribution. We
        can generate infinitely many fresh samples. We use these generated
        samples to build the cost function, which is often a combination of
        mechanism design objective and constraint penalties.
        The cost function must be differentiable with respect to the network parameters.

    \item The testing data are also type profiles generated based on the known
        prior distribution. Again, we can generate infinitely many fresh
        samples, so testing is based on completely fresh samples. We average
        over enough samples to calculate the mechanism's expected performance.\\

\end{itemize}

\subsection{Network Structure} A largest unanimous mechanism specifies constant
cost shares for every coalition of agents. The mechanism can be characterized
by a neural network with $n$ binary inputs and $n$ outputs. The $n$ binary
inputs present the coalition, and the $n$ outputs represent the constant cost
shares.  We use $\vec{b}$ to denote the input vector (tensor) and $\vec{c}$ to
denote the output vector. We use $NN$ to denote the neural network, so
$NN(\vec{b})=\vec{c}$.
There are several network constraints:

\begin{itemize}
    \item All cost shares are nonnegative: $\vec{c}\ge 0$.

    \item For input coordinates that are $1$s, the output coordinates should
        sum up to $1$.  For example, if $n=3$ and $\vec{b}=(1,0,1)$ (the
        coalition is $\{1,3\}$), then $\vec{c}_1+\vec{c}_3=1$ (agent $1$ and
        $3$ are to share the total cost).

    \item For input coordinates that are $0$s, the output coordinates are
        irrelevant. We set these output coordinates to $1$s, which makes it
        more convenient for the next constraint.

    \item Every output coordinate is nondecreasing in every input coordinate.
        This is to ensure that the agents' cost shares are nondecreasing when
        some other agents are removed. If an agent is removed, then her cost
        share offer is kept at $1$, which makes it trivially nondecreasing.\\
\end{itemize}

All constraints except for the last is easy to achieve.
We will simply use $OUT(\vec{b})$ as output instead of directly using $NN(\vec{b})$\footnote{This is done by appending additional calculation structures to the output layer.}:
\[OUT(\vec{b})=\text{softmax}(NN(\vec{b})-1000(1-\vec{b}))+(1-\vec{b})\]\\

Here, $1000$ is an arbitrary large constant.
For example, let $\vec{b}=(1,0,1)$ and $\vec{c}=NN(\vec{b})=(x,y,z)$. We have
\[OUT(\vec{b})=\text{softmax}((x,y,z)-1000(0,1,0))+(0,1,0)\]
\[=\text{softmax}((x,y-1000,z))+(0,1,0)\]
\[=(x',0,z')+(0,1,0)=(x',1,y')\]

In the above, $\text{softmax}((x,y-1000,z))$ becomes $(x',0,y')$ with $x',y'\ge
0$ and $x'+y'=1$ because the second coordinate is very small so it
(essentially) vanishes after softmax. Softmax always produces nonnegtive
outputs that sum up to $1$.  Finally, the $0$s in the output are flipped to
$1$s per our third constraint.

The last constraint is enforced using a penalty function.
For $\vec{b}$ and $\vec{b}'$, where $\vec{b}'$ is obtained from $\vec{b}$ by changing one $1$ to $0$,
we should have that $OUT(\vec{b})\le OUT(\vec{b}')$, which leads to this penalty:
\[\text{ReLU}(OUT(\vec{b})-OUT(\vec{b}'))\]

Another way to enforce the last constraint is to use the
\emph{monotonic networks} structure~ (\cite{Sill1998:Monotonic}).
This idea has been used
in~ (\cite{Golowich2018:Deep}) , where the authors also dealt with networks that
take binary inputs and must be monotone.  However, we do not use this approach
because it is incompatible with our design for achieving the other constraints.
There are two other reasons for not using the monotonic network structure. One
is that it has only two layers. Some argue that having a \emph{deep} model is
important for performance in deep learning~ (\cite{Zhou2017:Deep}).  The other is
that under our approach, we only need a fully connected network with ReLU
penalty, which is highly optimized in state-of-the-art deep learning toolsets (while
the monotonic network structure is not efficiently supported by existing toolsets).
In our experiments, we use a fully connected network with four layers ($100$
nodes each layer) to represent our mechanism.\\

$$$$

\subsection{Cost Function}
For presentation purposes, we focus on maximizing the expected number of
consumers.  Only slight adjustments are needed for welfare maximization.\\

Previous approaches of mechanism design via neural networks used \emph{static}
networks~ (\cite{Golowich2018:Deep,
Duetting2019:Optimal,Shen2019:Automated,Manisha2018:Learning}). Given a sample,
the mechanism simulation is done on the network.  Our largest unanimous
mechanism involves iterative decision making, and the number of rounds
is not fixed, as it depends on the users' inputs.\\

To model iterative decision making via a static network, we could adopt the
following process.
The initial offers are $OUT((1,1,\ldots,1))$.  The remaining agents
after the first round are then $S=\text{sigmoid}(v-OUT((1,1,\ldots,1)))$.
Here, $v$ is the type profile sample. The sigmoid function turns positive
values to (approximately) $1$s and negative values to (approximately)  $0$s.
The next round of offers are then $OUT(S)$. The remaining agents afterwards are
then $\text{sigmoid}(v-OUT(S))$.  We repeat this $n$ times because the largest
unanimous mechanism have at most $n$ rounds.  The final coalition is a
converged state, so even if the mechanism terminates before the $n$-th round,
having it repeat $n$ times does not change the result (except for additional
numerical errors).  Once we have the final coalition $S^f$, we include
$\sum_{x\in S^f}x$ (number of consumers) in the cost function.
However, this approach
performs \emph{abysmally}, due to the \emph{vanishing gradient problem} and
numerical errors caused by stacking $n$ sigmoid functions.\\

We would like to avoid stacking sigmoid to model iterative decision making or
get rid of sigmoid altogether. Sigmoid is heavily used in existing works on neural network mechanism design, but it is the culprit of significant numerical errors. We propose an alternative approach, where
decisions are simulated off the network using a separate program (\emph{e.g.},
any Python function). The advantage of this approach is that it is now trivial
to handle complex decision making.  However, experienced neural network
practitioners may immediately notice a pitfall.  Given a type profile sample
$v$ and the current network $NN$, if we simulate the mechanism off the network
to obtain the number of consumers $x$, and include $x$ in the cost function,
then training will fail completely. This is because $x$ is not a differentiable function of network parameters and cannot support backpropagation at all.\footnote{We use PyTorch in our experiments. An
overview of Automated Differentiation in PyTorch is available here~ (\cite{Paszke2017:Automatic}).}\\

One way to resolve this is to interpret the mechanisms as price-oriented
rationing-free (PORF) mechanisms~ (\cite{Yokoo2003:Characterization}).  That is,
if we single out one agent, then her options (outcomes combined with payments)
are completely determined by the other agents and she simply has to choose the
utility-maximizing option.  Under a largest unanimous mechanism, an agent faces
only two results: either she belongs to the largest unanimous coalition or not.
If an agent is a consumer, then her payment is a constant due to
strategy-proofness, and the constant payment is determined by the other agents.
Instead of sampling over complete type profiles, we sample over $v_{-i}$ with a
random $i$.  To better convey our idea, we consider a specific example.  Let
$i=1$ and $v_{-1}=(\cdot, \frac{1}{2},\frac{1}{2},\frac{1}{4}, 0)$.  We assume
that the current state of the neural network is exactly the serial cost sharing
mechanism.  Given a sample, we use a separate program to calculate the
following entries. In our experiments, we simply used Python simulation to
obtain these entries.

\begin{itemize}

    \item The objective value if $i$ is a consumer ($O_s$). Under the example,
        if $1$ is a consumer, then the decision must be $4$ agents each pays
        $\frac{1}{4}$. So the objective value is $O_s=4$.

    \item The objective value if $i$ is not a consumer ($O_f$). Under the
        example, if $1$ is not a consumer, then the decision must be $2$ agents
        each pay $\frac{1}{2}$. So the objective value is $O_f=2$.

    \item The binary vector that characterizes the coalition that decides $i$'s
offer ($\vec{O_b}$). Under the example, $\vec{O_b}=(1,1,1,1,0)$. \\
\end{itemize}

$O_s$, $O_f$, and $\vec{O_b}$ are constants without network parameters. We link them together
using terms with network parameters, which is then included in the cost function:
\begin{equation}\label{eq:single1}
    (1-F(OUT(\vec{O_b})_i))O_s + F(OUT(\vec{O_b})_i)O_f
\end{equation}
\\
$1-F(OUT(\vec{O_b})_i)$ is the probability that agent $i$ accepts her offer.
$F(OUT(\vec{O_b})_i)$ is then the probability that agent $i$ rejects her offer.
$OUT(\vec{O_b})_i$ carries gradients as it is generated by the network.  We use the
analytical form of $F$, so the above term carries gradients.\footnote{PyTorch
has built-in analytical CDFs of many common distributions.}\\

The above approach essentially feeds the prior distribution into the cost
function. We also experimented with two other approaches. One does not use the
prior distribution. It uses a full profile sample and uses one layer of sigmoid to select
between $O_s$ or $O_f$:
\begin{equation}\label{eq:sigmoid}
    \text{sigmoid}(v_i-OUT(\vec{O_b})_i)O_s + \text{sigmoid}(OUT(\vec{O_b})_i-v_i))O_f
\end{equation}

The other approach is to feed ``even more'' distribution information into the cost
function.  We single out two agents $i$ and $j$. Now there are $4$ options:
they both win or both lose, only $i$ wins, and only $j$ wins. We still use $F$
to connect these options together.\\

In Section~\ref{Mechanism Design for Public Projects via Neural Network Experiments}, in one experiment, we show that singling out
one agent works the best. In another experiment, we show that even if we do not
have the analytical form of $F$, using an analytical approximation also enables
successful training.

\newpage
\subsection{Supervision as Initialization} We introduce an additional
supervision step in the beginning of the training process as a systematic way
of initialization.  We first train the neural network to mimic an existing
manual mechanism, and then leave it to gradient descent.  We considered three
different manual mechanisms. One is the serial cost sharing mechanism.  The
other two are based on two different heuristics:\\

\begin{definition}[One Directional Dynamic Programming] We make offers to the agents one by one. Every agent faces only one offer. The offer is based on how many agents are left, the objective value cumulated so far by the previous agents, and how much money still needs to be raised.  If an agent rejects an offer, then she is removed from the system.  At the end of the algorithm, we check whether we have collected $1$.  If so, the project is built and all agents not removed are consumers.  This mechanism belongs to
the largest unanimous mechanism family. This mechanism is not optimal because
we cannot go back and increase an agent's offer. \end{definition}

\begin{definition}[Myopic Mechanism] For coalition size $k$, we treat it as a
    nonexcludable public project problem with $k$ agents.  The offers are
    calculated based on the dynamic program proposed at the end of
    Subsection~\ref{sub:nonexcludable}, which computes the optimal offers for
    the nonexcludable model.  This is called the myopic mechanism, because it
    does not care about the payoffs generated in future rounds.  This mechanism
    is not necessarily feasible, because the agents' offers are not necessarily
nondecreasing when some agents are removed.  \end{definition}


\newpage
\section{Experiments}\label{Mechanism Design for Public Projects via Neural Network Experiments}
The experiments are conducted on a
machine with Intel i5-8300H CPU.\footnote{We experimented with both PyTorch and
Tensorflow (eager mode).  The PyTorch version runs significantly faster,
because we are dealing with dynamic graphs.} The largest experiment
with $10$ agents takes about $3$ hours. Smaller scale experiments take only
about $15$ minutes.\\

In our experiments, unless otherwise specified, the distribution considered is
two-peak $(0.15,0.1,0.85,0.1,0.5)$.  The x-axis shows the number of training
rounds. Each round involves $5$ batches of $128$ samples ($640$ samples each
round).  Unless otherwise specified, the y-axis shows the expected number of
\textbf{non}consumers (so lower values represent better performances).  Random
initializations are based on Xavier normal with bias $0.1$.\\

Figure~\ref{fig:1} (Left) shows the performance comparison of three different
ways for constructing the cost function: using one layer of sigmoid
(without using distribution) based on~\eqref{eq:sigmoid}, singling out one agent based on~\eqref{eq:single1}, and singling out two agents.
All trials start from random initializations.  In this experiment, singling out
one agent works the best. The sigmoid-based approach is capable of moving the
parameters, but its result is noticeably worse. Singling out two agents has
almost identical performance to singling out one agent, but it is slower in
terms of time per training step.\\

\begin{figure}[H]

\centering
\includegraphics[width=0.85\textwidth]{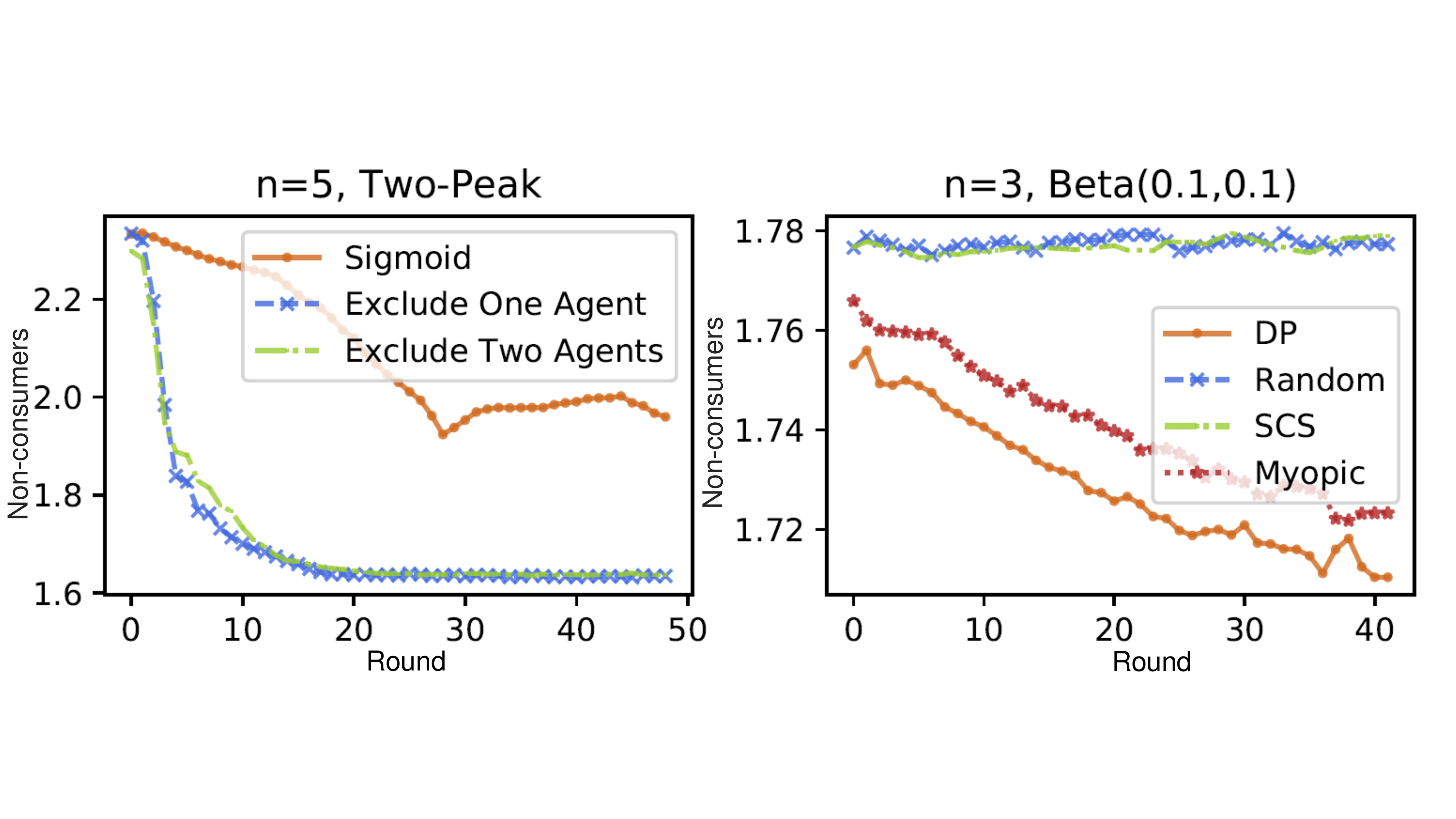}
    \caption[Experiment result: Effect of distribution info on training]{Effect of Distribution Info on Training (test set result during every training step)}
\label{fig:1}
\end{figure}

Figure~\ref{fig:1} (Right) considers the Beta $(0.1,0.1)$ distribution.  We use
Kumaraswamy $(0.1,0.354)$'s analytical CDF to approximate the CDF of Beta
$(0.1,0.1)$. The experiments show that if we start from random initializations
(Random) or start by supervision to serial cost sharing (SCS), then the cost
function gets stuck. Supervision to one directional dynamic programming (DP)
and Myoptic mechanism (Myopic) leads to better mechanisms. So in this example
scenario, approximating CDF is useful when analytical CDF is not available. It
also shows that supervision to manual mechanisms works better than random
initializations in this case.\\

\begin{figure}[H]

\centering
    \includegraphics[width=\textwidth]{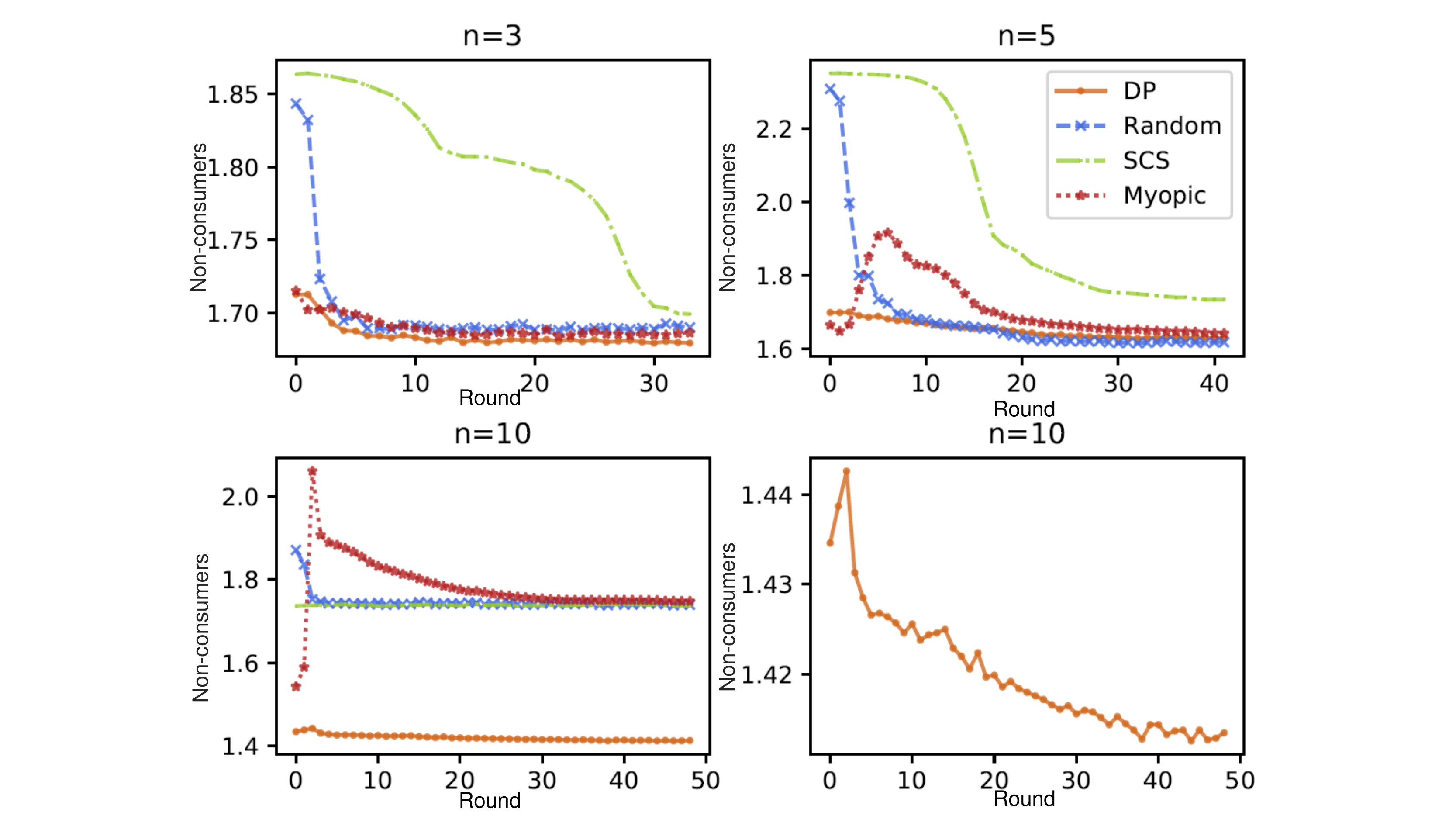}
        \caption[Experiment result: Supervision to different manual mechanisms]{Experiment result: Supervision to Different Manual Mechanisms (test set result during every training step)}
\label{fig:2}
\end{figure}

Figure~\ref{fig:2} (Top-Left $n=3$, Top-Right $n=5$, Bottom-Left $n=10$) shows
the performance comparison of supervision to different manual mechanisms.  For
$n=3$, supervision to DP performs the best. Random initializations is able to
catch up but not completely close the gap. For $n=5$, random initializations
caught up and actually became the best performing one. The Myopic curve first
increases and then decreases because it needs to first fix the constraint
violations. For $n=10$, supervision to DP significantly outperforms the others. Random
initializations closes the gap with regard to serial cost sharing, but it then
gets stuck. Even though it looks like the DP curve is flat, it is actually improving, albeit very slowly. A magnified version is shown in Figure~\ref{fig:2} (Bottom-Right).\\

\begin{figure}[H]

\centering
    \includegraphics[width=0.9\textwidth]{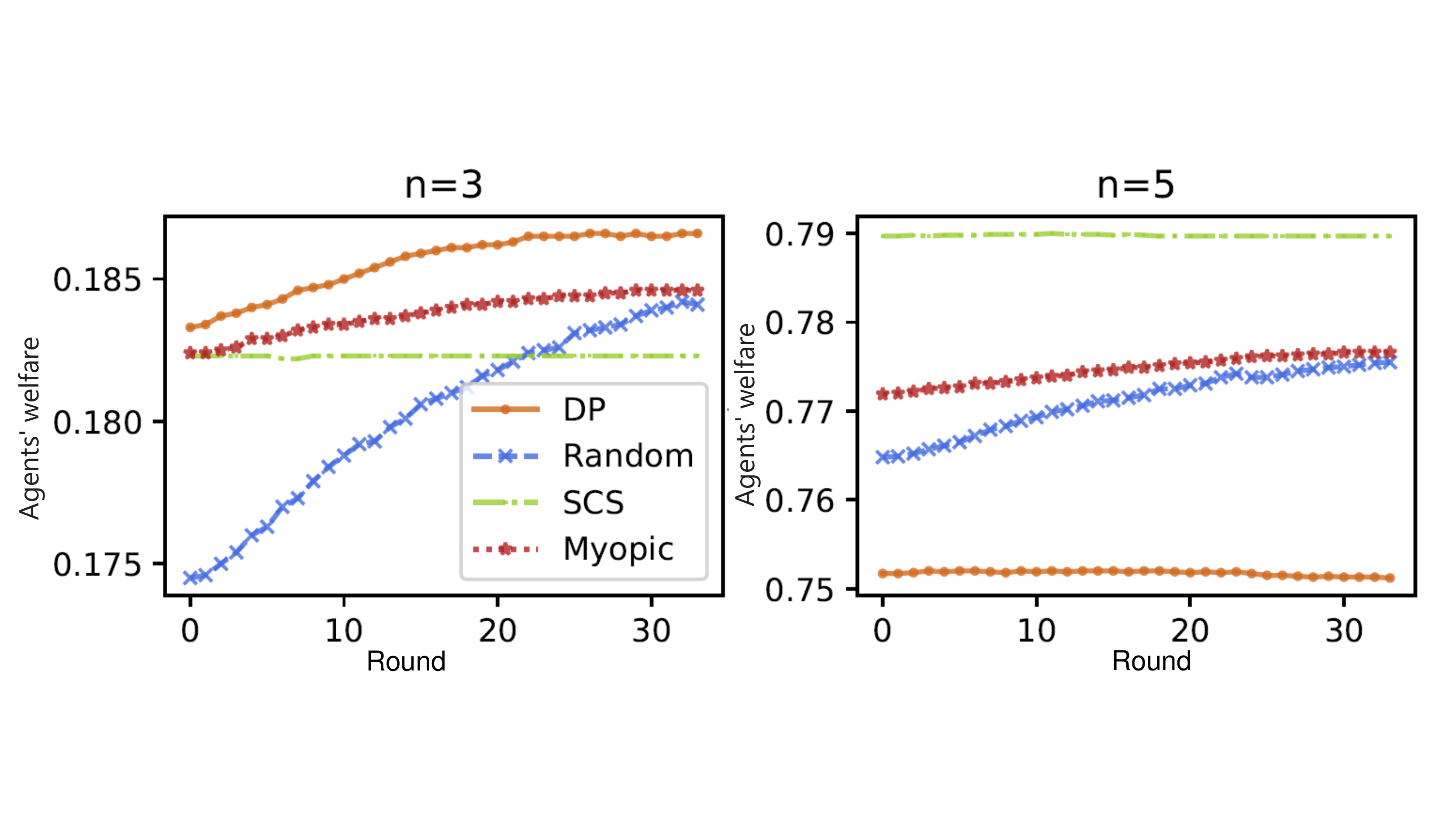}
    
        \caption[Experiment result: Maximizing agents' welfare]{Experiment result: Maximizing Agents' Welfare (test set result during every training step)}
\label{fig:3}
\end{figure}

Figure~\ref{fig:3} shows two experiments on maximizing expected agents' welfare (y-axis) under
two-peak $(0.2,0.1,0.6,0.1,0.5)$.  For $n=3$, supervision to DP leads to the
best result. For $n=5$, SCS is actually the best mechanism we can find (the
cost function barely moves).\\

It should be noted that all manual mechanisms \emph{before training} have very similar welfares:
$0.7517$ (DP), $0.7897$ (SCS), $0.7719$ (Myopic). Even random initialization
before training has a welfare of $0.7648$.  In this case, there is just little room
for improvement.


\newpage
\section{Chapter Summary}
In this chapter, we studied optimal-in-expectation mechanism design for the public project model. We are the first to use neural networks to design iterative mechanisms. To avoid modeling iterative decision making via the sigmoid function, we simulate the different options an agent faces under an iterative mechanism and combine the options using distribution information to build the cost function for our optimal-in-expectation objective. We showed that under various conditions, existing mechanisms or mechanisms derived via classic mechanism design approaches are optimal. When classic mechanism design approaches do not suffice, we derived better mechanisms via neural networks by first training the neural networks to mimic manual mechanisms and then improving by gradient descent.
\chapter{Public Project with Minimum Expected Release Delay} 

\label{Public Project with Minimum Expected Release Delay} 

In this chapter, we study a public project model where the project can be released to different agents at different times. The mechanism designer uses ``delay in release'' to incentivize the agents to pay for the public project. The goal of the mechanism design is to minimize the maximum delay or the total delay.  \\

\section{Introduction}

The public project problem is a fundamental mechanism design model with many applications in multiagent systems.  The public project problem involves multiple agents, who need to decide whether or not to build a public project.
The project can be {\bf nonexcludable} ({\em i.e.}, if the project is built, then every agent gets to consume the project, including the non-paying agents/free riders) or
{\bf excludable} ({\em i.e.}, the setting makes it possible to exclude some
agents from consuming the project)~ (\cite{Ohseto2000:Characterizations}).\footnote{An
example nonexcludable public project is a public airport, and an example
excludable public project is a gated swimming pool.} A public project can be
{\bf indivisible/binary} or {\bf
divisible}~ (\cite{Moulin1994:Serial}).  A binary public project is either
built or not built ({\em i.e.}, there is only one level of provision).
In a divisible public project, there are multiple
levels of provision ({\em i.e.}, build a project with adjustable quality).\\

In this chapter, we study an excludable public project model that is
``divisible'' in a different sense. In the model, the level of provision is
binary ({\em i.e.}, the project is either built or not built), but an agent's
consumption is divisible. The mechanism specifies when an agent can start
consuming the project. High-paying agents can consume the project earlier, and
the free riders need to wait.  The waiting time is also called an agent's {\bf
delay}. The delay is there to incentivize payments.  The model was proposed by
Guo {\em et al.}~ (\cite{Guo2018:Cost}).  The authors studied the following
mechanism design scenario. A group of agents come together to crowd-fund a
piece of security information. No agent is able to afford the information by
herself.\footnote{Zero-day exploits are very
expensive~ (\cite{Greenberg2012:Shopping,Fisher2015:VUPEN}).} Based on the agents'
valuations on the information, the mechanism decides whether or not to
crowd-fund this piece of information ({\em i.e.}, purchase it from the security
consulting firm that is selling this piece of information).  If we are able to
raise enough payments to cover the cost of the security information, then {\em
ideally} we would like to share it to all agents, including the free riders, in
order to maximizes the {\em overall protection of the community}.  However, if all agents
receive the information regardless of their payments, then no agents are
incentivized to pay. To address this, the mechanism releases the information
only to high-paying agents in the beginning and the non-paying/low-paying
agents need to wait for a delayed release.  The mechanism design goal is to
minimize the delay as long as the delay is long enough to incentivize enough
payments to cover the cost of the information.  Guo {\em et
al.}~ (\cite{Guo2018:Cost}) proposed two design objectives. One is to minimize the
{\em max-delay} ({\em i.e.}, the maximum waiting time of the agents) and the
other is to minimize the {\em sum-delay} ({\em i.e.}, the total waiting time of
the agents). The authors focused on {\em worst-case mechanism design} and
proposed a mechanism that has a constant approximation ratio compared to the
optimal mechanism.  The authors also briefly touched upon {\em expected} delay.
The authors used simulation to show that compared to their worst-case
competitive mechanism, the {\em serial cost sharing mechanism} proposed by
Moulin~ (\cite{Moulin1994:Serial}) has much lower expected {\em max-delay} and
{\em sum-delay} under various distributions.\\

In this chapter, we focus on minimizing the expected {\em max-delay} and the
expected {\em sum-delay}. We propose a mechanism family called the {\bf single
deadline mechanisms}.  For both objectives, under minor technical assumptions,
we prove that there exists a single deadline mechanism that is {\em near
optimal} when the number of agents is large, {\em regardless of the prior
distribution}.  Furthermore, when the number of agents approaches infinity, the
optimal single deadline mechanism approaches optimality asymptotically.  For
small number of agents, the single deadline mechanism is not optimal. We extend
the single deadline mechanisms to multiple deadline mechanisms. We also propose
a genetic algorithm based automated mechanism design approach.
We use a sequence of offers to represent a mechanism and we evolve the sequences.
By simulating
mechanisms using multiple distributions, we show that our genetic algorithm
successfully identifies better performing mechanisms for small number of agents.\\

Ohseto~ (\cite{Ohseto2000:Characterizations}) characterized
all strategy-proof and individually rational mechanisms for the binary public
project model (both excludable and nonexcludable), under minor technical
assumptions.
Deb and
Razzolini~ (\cite{Deb1999:Voluntary}) further showed that on top of Ohseto's
characterization, if we require {\em equal treatment of equals} ({\em i.e.}, if
two agents have the same type, then they should be treated the same), then the
only strategy-proof and individually rational mechanisms are the {\em conservative equal cost mechanism} (nonexcludable) and the
{\em serial cost sharing mechanism} (excludable), which were both proposed by
Moulin~ (\cite{Moulin1994:Serial}). It should be noted that Ohseto's
characterization involves {\em exponential} number of parameters, so
knowing the characterization does not mean it is easy to locate
good mechanisms.  Wang {\em et al.}~ (\cite{Wang2021:Mechanism})
proposed a neural network based approach for optimizing within Ohseto's characterization
family.\\

The authors studied two objectives: maximizing the
 number of consumers and maximizing the social welfare.
 It should be noted that Ohseto's characterization does not apply to the model  in this chapter, as our model has an additional spin that is the release delay.
 In this chapter, we propose a family of mechanisms called the sequential
 unanimous mechanisms, which is motivated by Ohseto's characterization.
 We apply a genetic algorithm for tuning the sequential unanimous mechanisms.
Mechanism design via evolutionary computation~ (\cite{phelps2010evolutionary}) and
mechanism design via other computational means (such as linear
programming~ (\cite{Conitzer2002:Complexity}) and neural
networks~ (\cite{Duetting2019:Optimal,Shen2019:Automated,Wang2021:Mechanism}))
have long been shown to be effective for many design settings.\\



$$$$
\section{Model Description}\label{sec:model}

There are $n$ agents who decide whether or not to build a public project. The
project is binary (build or not build) and nonrivalrous (the cost of the
project does not depend on how many agents are consuming it). We normalize the
project cost to $1$. Agent $i$'s type $v_i \in [0, 1]$ represents her private
valuation for the project. We use $\vec{v}=(v_1,v_2,\ldots,v_n)$ to denote the type profile. We assume that the $v_i$ are drawn {\em
i.i.d.} from a known prior distribution, where $f$ is the probability
density function.  For technical reasons, we assume
$f$ is {\em positive} and {\em Lipschitz continuous over $[0,1]$}.\\

We assume that the public project has value over a time period $[0,1]$.
For example, the project could be a piece of security information that is discovered
at time $0$ and the corresponding exploit expires at time $1$.  We
assume the setting allows the mechanism to specify each agent's release time for the project, so that some agents can consume the project earlier than the others.  Given a type
profile, a mechanism outcome consists of two vectors: $(t_1,t_2,\ldots,t_n)$
and $(p_1,p_2,\ldots,p_n)$. {\em I.e.}, agent $i$ starts consuming the project
at time $t_i \in [0,1]$ and pays $p_i\ge 0$.  $t_i=0$ means agent $i$ gets to consume
the public project right from the beginning and $t_i=1$ means agent $i$ does not
get to consume the public project.
We call $t_i$ agent $i$'s {\em release time}.
We assume the agents' valuations over the time
period is uniform. That is, agent $i$'s valuation equals $v_i(1-t_i)$, as she
enjoys the time interval $[t_i,1]$, which has length $1-t_i$. Agent $i$'s utility
is then $v_i(1-t_i)-p_i$.
We impose the following mechanism design constraints:

\newpage
\begin{itemize}

    \item Strategy-proofness: We use $t_i$ and $p_i$ to denote agent $i$'s release
        time and payment when she reports her true value $v_i$. We use $t_i'$ and $p_i'$ to denote agent $i$'s release
        time and payment when she reports a false value $v_i'$. We should have
        \[v_i(1-t_i)-p_i \ge v_i(1-t_i')-p_i'\]

    \item Individual rationality:\\ $$v_i(1-t_i)-p_i\ge 0$$\\
    \item Ex post budget balance:

        {\em If the project is not built}, then no agent
        can consume the project and no agent pays. That is, we must have $t_i=1$ and $p_i=0$ for all $i$.

        {\em If the project is built}, then the agents' total payment must cover
        exactly the project cost. That is, $\sum_i p_i=1$.\\

\end{itemize}
Our aim is to design mechanisms that minimize the following design objectives:

\begin{enumerate}
    \item Expected {\em Max-Delay}: $E_{v_i\sim f}\left(\max\{t_1,t_2,\ldots,t_n \}\right)$
    \item Expected {\em Sum-Delay}: $E_{v_i\sim f}\left(\sum_{i}t_i\right)$
\end{enumerate}

\newpage
\section{Single Deadline Mechanism}

We first describe the {\em serial cost sharing mechanism (SCS)} proposed by Moulin~ (\cite{Moulin1994:Serial}). Under SCS, an agent's release
time is either $0$ or $1$.\footnote{Because the concept of release time does not exist
in the classic binary excludable public project model.}\\

Let $\vec{v}$ be the type profile.
We first define the following functions:\\

    \[I(\vec{v})=\begin{cases}
        1 & \exists k \in \{1,2,\ldots,n\}, k \le |\{v_i|v_i\ge \frac{1}{k}\}| \\
        0 & \text{otherwise}
   \end{cases}
    \]
\\

$I(\vec{v})$ equals $1$ if and only if there exist at least $k$ values among $\vec{v}$
that are at least $\frac{1}{k}$, where $k$ is an integer from $1$ to $n$.\\

    \[K(\vec{v})=\begin{cases}
        \max\{k| k \le |\{v_i|v_i\ge \frac{1}{k}\}|,k\in \{1,2,\ldots,n\}\} & I(\vec{v})=1\\
        0 & I(\vec{v})=0
   \end{cases}
    \]
\\

Given $\vec{v}$, there could be multiple values for $k$, where there exist at
least $k$ values among $\vec{v}$ that are at least $\frac{1}{k}$. $K(\vec{v})$
is the largest value for $k$. If such a $k$ value does not exist, then
$K(\vec{v})$ is set to $0$.\\

\begin{definition}[Serial Cost Sharing Mechanism~ (\cite{Moulin1994:Serial})]
    Let $\vec{v}$ be the type profile.
    Let $k=K(\vec{v})$.

    \begin{itemize}

        \item
    If $k>0$, then agents with the highest $k$ values are the consumers.
            The consumers pay $\frac{1}{k}$. The non-consumers do not pay.

        \item
    If $k=0$, then there are no consumers and no agents pay.\\
    \end{itemize}
\end{definition}

Essentially, the serial cost sharing mechanism finds the largest $k$ where $k$
    agents are willing to equally split the cost. If such a $k$ exists, then we say {\em the cost share is successful} and these $k$ agents are {\em joining the cost share}.
    If such a $k$ does not exist, then we say {\em the cost share failed}.

Next we introduce a new mechanism family called the single deadline mechanisms.\\

\begin{definition}[Single Deadline Mechanisms]

A single deadline mechanism is characterized by one parameter $d \in [0,1]$.
    $d$ is called the mechanism's {\bf deadline}.
    We use $M(d)$ to denote the single deadline mechanism with deadline $d$.
    The time interval before the deadline $[0,d]$ is called the {\bf non-free}
    part. The time interval after the deadline $[d,1]$ is called the {\bf free}
    part.

    We run the serial cost sharing mechanism on the non-free part as follows.  For the
    non-free part, the agents' valuations are
    $d\vec{v}=(dv_1,dv_2,\ldots,dv_n)$.
    Let $k=K(d\vec{v})$.
    Agents with the highest $k$ values get to consume the non-free part, and
    they each needs to pay $\frac{1}{k}$.

    The free part is allocated to the agents for free. However, we cannot
    give out the free part if the public project is not built.

    If we give out the free part if and only if $I(d\vec{v})=1$,
    then the mechanism is not strategy-proof, because the free parts change the
    agents' strategies.\footnote{For example, an agent may over-report to turn an
    unsuccessful cost share into a successful cost share, in order to claim the
    free part.}
    Instead, we give agent $i$ her free part if and only if $I(dv_{-i})=1$.
    That is, agent $i$ gets her free part if and only if the other agents
    can successfully cost share the non-free part without $i$.

    If an agent receives both the non-free part and the free part, then her release
    time is $0$. If an agent only receives the free part, then her release time
    is $d$. If an agent does not receive either part, then her release time is $1$.
    Lastly, if an agent only receives the non-free part, then her release time
    is $1-d$, because such an agent's consumption interval should have length $d$ ({\em i.e.}, $[1-d,1]$).\\
\end{definition}

\begin{proposition}
    The single deadline mechanisms are strategy-proof, individually rational, and
    ex post budget balanced.
\end{proposition}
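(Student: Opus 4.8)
The plan is to treat the three properties in turn, with strategy-proofness carrying essentially all the content. The structural fact I would establish first is that, holding agent $i$ and the other reports $v_{-i}$ fixed, agent $i$'s outcome splits additively into a ``non-free part'' contribution and a ``free part'' contribution, and the free part contribution does not depend on $i$'s own report at all. Indeed, $i$ receives the free part if and only if $I(dv_{-i})=1$, a quantity involving only $v_{-i}$; when she receives it she gains an interval of length $1-d$ at zero charge, worth $v_i(1-d)$ no matter what she reports. The non-free part is allocated by running the serial cost sharing mechanism on the scaled profile $d\vec{v}$ for a good of cost $1$, where agent $i$'s true valuation for ``the good'' is $v_i d$, since consuming the non-free part means consuming an interval of length $d$. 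Using $v_i = v_i d + v_i(1-d)$, agent $i$'s utility is the sum of her cost-sharing-subproblem utility on $d\vec{v}$ and the report-independent constant $v_i(1-d)\,\mathbf{1}[I(dv_{-i})=1]$, and her payment is incurred only through the subproblem. Hence maximizing her utility is the same as maximizing her utility in the cost-sharing instance, and strategy-proofness of the single deadline mechanism follows from strategy-proofness of the serial cost sharing mechanism (Moulin, and the largest-unanimous characterization recalled earlier), applied with scaled valuations. The case $d=0$ is degenerate (nothing is ever built) and is handled directly.

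For individual rationality I would assume truthful reports and case on agent $i$'s outcome. If $i$ consumes the non-free part, then she is among the top $k$ scaled values with $k=K(d\vec{v})$, and since at least $k$ of the scaled values are $\ge 1/k$ the $k$-th largest is $\ge 1/k$, so $d v_i \ge 1/k = p_i$; her utility is therefore at least $v_i d - 1/k \ge 0$ when she gets only the non-free part (release time $1-d$) and at least $v_i - 1/k \ge v_i d - 1/k \ge 0$ when she also gets the free part (release time $0$). If $i$ does not consume the non-free part she pays nothing, and her utility is $v_i(1-d)\,\mathbf{1}[I(dv_{-i})=1]\ge 0$. So the utility is nonnegative in every case.

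For ex post budget balance I would argue by cases on $I(d\vec{v})$. If $I(d\vec{v})=1$ then $k=K(d\vec{v})\ge 1$, exactly the $k$ consumers of the non-free part pay $1/k$ and all others pay $0$, so $\sum_i p_i=1$ and the project is built (at least those $k$ agents consume it). If $I(d\vec{v})=0$ then $K(d\vec{v})=0$, so no agent consumes or pays for the non-free part; moreover $I$ is monotone under adding an agent --- if $I(dv_{-i})=1$ is witnessed by some $k\le n-1$, then the same $k$ witnesses $I(d\vec{v})=1$ --- so $I(d\vec{v})=0$ forces $I(dv_{-i})=0$ for every $i$, and no agent receives the free part either. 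Hence no agent consumes the project, every $t_i=1$ and every $p_i=0$, which is exactly the required ``not built'' outcome.

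The main obstacle is making the strategy-proofness reduction airtight: one must verify that agent $i$'s \emph{entire} influence over her own release time, consumption, and payment passes through the cost-sharing outcome on $d\vec{v}$ together with the fixed indicator $\mathbf{1}[I(dv_{-i})=1]$, so that her utility literally equals the additive expression above. This is precisely why the mechanism gates the free part on $I(dv_{-i})$ rather than $I(d\vec{v})$, and the argument should say so explicitly. The remaining bookkeeping --- matching ``consumes the non-free part'' to ``is a winner'' in the underlying cost-sharing instance, and reading off the four possible release times $0,\,d,\,1-d,\,1$ --- is routine but needs to be checked so that the utility decomposition is exactly correct.
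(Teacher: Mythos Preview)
Your approach is essentially the same as the paper's: the paper's proof consists of the single observation that whether an agent receives the free part does not depend on her own report, so she is effectively facing a serial cost sharing instance on the non-free portion, and the claimed properties are inherited. Your decomposition of the utility into a report-independent free-part term plus a serial-cost-sharing term is exactly this observation made precise, and your explicit verifications of individual rationality and ex post budget balance (including the monotonicity step $I(d\vec{v})=0 \Rightarrow I(dv_{-i})=0$) simply fill in details the paper leaves implicit.
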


\begin{proof}
    Whether an agent receives her free part or not does not depend on her report,
    so the agents are essentially just facing a serial cost sharing mechanism, where
    the item being cost shared is $d$ portion of the public project.
$\blacksquare$
\end{proof}

$$$$
\section{Max-Delay: Asymptotic Optimality}

In this section, we show that when the number of agents is large enough, the optimal single deadline mechanism is asymptotically optimal in terms of expected max-delay, regardless of the prior distribution.\\

\begin{theorem}\label{thm:maxdelay}
    The optimal single deadline mechanism's expected max-delay approaches $0$
    when the number of agents approaches infinity.
\end{theorem}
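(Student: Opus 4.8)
The plan is to exhibit an explicit sequence of deadlines $d_n\to 0$ for which the single deadline mechanism $M(d_n)$ already has vanishing expected max-delay; since the optimal single deadline mechanism does at least as well and max-delay is always nonnegative, the theorem follows. A convenient choice is $d_n=n^{-1/2}$ (any sequence with $d_n\to 0$ and $nd_n\to\infty$ works equally well).

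First I would describe what $M(d_n)$ does on a high-probability ``good'' event. Let $q=\overline{F}(1/2)=\Pr(v_i\ge 1/2)$, which is strictly positive because $f$ is positive on $[0,1]$. Put $k_n=\lceil 2/d_n\rceil$ (note $k_n\le n$ for large $n$ since $d_n\to 0$ slower than $2/n$), and let $E_n$ be the event that at least $k_n+1$ of the $n$ agents have valuation $\ge 1/2$. On $E_n$, each such agent satisfies $d_nv_i\ge d_n/2\ge 1/k_n$; hence there are at least $k_n$ agents with $d_nv_i\ge 1/k_n$ both among all $n$ agents and among any $n-1$ of them, so $K(d_n\vec v)\ge k_n$, giving $I(d_n\vec v)=1$ and $I(d_nv_{-i})=1$ for every agent $i$. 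Consequently every agent receives her free part: an agent in the serial-cost-sharing coalition receives both parts and has release time $0$, and every other agent receives only the free part and has release time $d_n$, so $\max_i t_i\le d_n$ on $E_n$. The problematic release times $1$ and $1-d_n$ arise only when some $I(d_nv_{-i})=0$, which $E_n$ excludes.

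Next I would bound $\Pr(E_n^c)$. The number of agents with valuation $\ge 1/2$ is $\mathrm{Binomial}(n,q)$ with mean $nq=\Theta(n)$, while $k_n+1=\Theta(\sqrt{n})=o(n)$; a Chernoff bound then yields $\Pr(E_n^c)\le e^{-cn}$ for some constant $c>0$ and all large $n$ (for the conclusion, $\Pr(E_n^c)\to 0$ is all that is needed, and this only requires $nd_n\to\infty$). Since every release time lies in $[0,1]$, the expected max-delay of $M(d_n)$ is at most $d_n\Pr(E_n)+1\cdot\Pr(E_n^c)\le d_n+e^{-cn}\to 0$, and the optimal single deadline mechanism can only do better.

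The part that needs care — rather than a true obstacle — is the release-time case analysis: one must check that on $E_n$ no agent ends up with release time $1$ or $1-d_n$, which is precisely why $E_n$ is stated with the ``$+1$'' slack, so that deleting any single agent (consumer or not) still leaves a successful cost share among the remaining agents. A secondary point is verifying feasibility of the chosen deadline, i.e. $k_n=\lceil 2/d_n\rceil\le n$ eventually, and that $M(d_n)$ is a legitimate member of the family so the optimal mechanism is indeed bounded above by it.
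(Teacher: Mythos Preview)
Your proof is correct and follows essentially the same approach as the paper's: show via Binomial concentration that with probability $1-o(1)$ at least $k+1$ agents can afford the equal share $1/k$ of the non-free part (the ``$+1$'' slack ensuring $I(d\,v_{-i})=1$ for every $i$), so every agent receives her free part and the max-delay is at most $d$. The only difference is cosmetic: the paper fixes an arbitrary $d>0$, shows the expected max-delay of $M(d)$ tends to at most $d$ as $n\to\infty$, and then invokes ``$d$ arbitrary'', whereas you pick the explicit diagonal sequence $d_n=n^{-1/2}$ and carry out both limits at once.
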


\begin{proof}
We consider a single deadline mechanism $M(d)$.  Every agent's valuation is drawn {\em i.i.d.} from a distribution with PDF $f$.
Let $V_i$ be the random variable representing agent $i$'s valuation.
Since $f$ is positive and Lipschitz continuous, we have the following:

\[\forall d, \exists k, P(dV_i\ge \frac{1}{k}) >0 \]
\\

That is, for any deadline $d$, there always exists an integer $k$, where the
probability that an agent is willing to pay $\frac{1}{k}$ for the non-free part is positive.
Let $p=P(dV_i\ge \frac{1}{k})$.
We define the following Bernoulli random variable:
    \[B_i=\begin{cases}
        1 & dV_i\ge \frac{1}{k}\\
        0 & \text{otherwise}
   \end{cases}
    \]
\\

$B_i$ equals $1$ with probability $p$. It equals $1$ if and only if agent $i$
can afford $\frac{1}{k}$ for the non-free part.
The total number of agents in $\vec{v}$ who can afford
$\frac{1}{k}$ for the non-free part then follows a Binomial distribution
$B(n,p)$. We use $B$ to denote this Binomial variable.
If $B\ge k+1$, then every agent receives the free part, because
agent $i$ receives the free part if excluding herself, there are at least $k$
agents who are willing to pay $\frac{1}{k}$ for the non-free part.
The probability that the max-delay is higher than $d$ is therefore bounded above
by $P(B\le k)$.
According to Hoeffding's inequality, when $k<np$,

\[P(B\le k)\le e^{-2n\left(p-\frac{k}{n}\right)^2}\]
\\
We immediately have that when $n$ approaches infinity, the probability
that the max-delay is higher than $d$ is approaching $0$. Since $d$
is arbitrary, we have that asymptotically, the single deadline mechanism's
expected max-delay is approaching $0$.
$\blacksquare$\\
\end{proof}
$$$$

Next, we use an example to show that when $n=500$,
the optimal single deadline mechanism's expected max-delay is close to $0.01$.
We reuse all notation defined in the proof of Theorem~\ref{thm:maxdelay}.
We make use of the Chernoff bound. When $k<np$, we have

\[P(B\le k)\le e^{-nD\left(\frac{k}{n} ||p \right)},\quad \text{where}\, D\left(a||p\right)=a\ln\frac{a}{p}+(1-a)\ln\frac{1-a}{1-p}\]
\\
When all agents receive the free part, the max-delay is at most $d$.
Otherwise, the max-delay is at most $1$.
The expected max-delay is at most
\[P(B\le k) + d(1 - P(B\le k)) \le P(B\le k)+d\]
\newpage
\begin{example}\label{ex:max}
    Let us consider a case where $n=500$. We set $d=0.01$ and $k=250$.

    \begin{itemize}

        \item $f$ is the uniform distribution $U(0,1)$:
            We have $p=0.6$ and $P(B\le 250)\le 3.69\mathrm{e}-5$.
            $M(0.01)$'s expected max-delay is then bounded above by $0.01 + 3.69\mathrm{e}-5$.

        \item $f$ is the normal distribution $N(0.5,0.1)$ restricted to $[0,1]$:
            We have $p=0.84$ and $P(B\le 250)\le 7.45\mathrm{e}-69$.
            $M(0.01)$'s expected max-delay is then bounded above by $0.01 + 7.45\mathrm{e}-69$.\\

    \end{itemize}

\end{example}

On the contrary, the expected max-delay of the serial cost sharing mechanism is not approaching
$0$ asymptotically. For example, when $n=500$, under the uniform distribution $U(0,1)$,
the expected max-delay of the serial cost sharing mechanism equals $0.632$ which is very close to $1-\frac{1}{e}$.\\

\begin{proposition}
    The expected max-delay of the serial cost sharing mechanism equals
    \[1-(\int_{\frac{1}{n}}^1f(x)dx)^n\]

    The above expression approaches $1-e^{-f(0)}$ asymptotically.
\end{proposition}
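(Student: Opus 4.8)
The plan is to observe that under the serial cost sharing mechanism every agent's release time is either $0$ (if she is a consumer) or $1$ (if she is not), so the max-delay $\max\{t_1,\ldots,t_n\}$ is a $\{0,1\}$-valued random variable: it equals $0$ exactly when all $n$ agents are consumers, and equals $1$ otherwise. Hence the expected max-delay equals $1 - P(\text{all } n \text{ agents are consumers})$, and the whole computation reduces to identifying this event.

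Next I would show that all $n$ agents are consumers if and only if $v_i \ge \frac{1}{n}$ for every $i$. Recall that the mechanism sets $k = K(\vec v)$, the largest $k \in \{1,\ldots,n\}$ with $k \le |\{v_i : v_i \ge \frac{1}{k}\}|$, and the consumers are the $k$ agents with the highest values. All agents are consumers precisely when $k = n$, which by definition of $K$ requires $n \le |\{v_i : v_i \ge \frac{1}{n}\}|$, i.e.\ all $v_i \ge \frac{1}{n}$; conversely, if all $v_i \ge \frac{1}{n}$ then $k = n$ is feasible and, being maximal, $K(\vec v) = n$. Since the $v_i$ are drawn i.i.d.\ with density $f$, $P(v_i \ge \frac{1}{n} \text{ for all } i) = \left(\int_{1/n}^1 f(x)\,dx\right)^n$, which yields the claimed formula for the expected max-delay.

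For the asymptotics, I would write $\int_{1/n}^1 f(x)\,dx = 1 - F(\frac{1}{n})$, where $F$ is the CDF. Since $F(0) = 0$ and $f$ is Lipschitz continuous with some constant $L$, we have $\left|F(\frac{1}{n}) - \frac{f(0)}{n}\right| = \left|\int_0^{1/n}(f(x) - f(0))\,dx\right| \le \frac{L}{2n^2}$, so $F(\frac{1}{n}) = \frac{f(0)}{n} + O(n^{-2})$. Therefore $\left(1 - F(\frac{1}{n})\right)^n = \left(1 - \frac{f(0)}{n} + O(n^{-2})\right)^n \to e^{-f(0)}$ as $n \to \infty$, so the expected max-delay converges to $1 - e^{-f(0)}$.

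There is essentially no serious obstacle here; the one point to get right is the reduction in the first step --- correctly arguing that under the serial cost sharing mechanism the max-delay is a Bernoulli variable tied exactly to the event that the grand coalition cost-shares successfully, i.e.\ $K(\vec v) = n$. Once that is pinned down, the remainder is the i.i.d.\ product rule together with the standard limit $(1 - a/n)^n \to e^{-a}$, with Lipschitz continuity supplying the error control on $F(1/n)$.
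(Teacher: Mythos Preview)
Your proposal is correct. The paper states this proposition without proof, so there is nothing to compare against; your argument supplies exactly the natural justification: the serial cost sharing mechanism has binary release times, the max-delay equals $0$ precisely when $K(\vec v)=n$, i.e.\ when all $v_i\ge \tfrac{1}{n}$, and the i.i.d.\ product formula together with the standard limit $(1-f(0)/n+O(n^{-2}))^n\to e^{-f(0)}$ (justified via Lipschitz continuity of $f$) gives the asymptotic.
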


\newpage
\section{Sum-Delay: Asymptotic Optimality}

In this section, we show that when the number of agents is large enough, the
optimal single deadline mechanism's expected sum-delay approaches optimality,
regardless of the prior distribution.\\

\begin{theorem}\label{thm:sumdelay}
    When the number of agents approaches infinity, the optimal single deadline mechanism is optimal among all mechanisms in terms of expected sum-delay.
\end{theorem}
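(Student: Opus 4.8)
The plan is to sandwich the optimal expected sum-delay between the performance of a concrete single deadline mechanism and a lower bound that holds for every strategy-proof, individually rational, ex-post budget-balanced mechanism, and to show both converge to the same limit as $n\to\infty$ (which will turn out to be $f(0)$, so that $M(1)$ --- the serial cost sharing mechanism --- is already asymptotically optimal, as are all the $M(d)$).

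For the upper bound I would analyse $M(d)$ for an arbitrary fixed $d>0$. Reusing the Chernoff/Hoeffding estimates from the proof of Theorem~\ref{thm:maxdelay}: for each $i$ the other $n-1$ agents can cost-share the non-free part with a coalition of size $\lceil(n-1)/2\rceil$ unless fewer than half of them exceed the tiny threshold $\tfrac{2}{d(n-1)}$, which has probability $e^{-\Omega(n)}$; a union bound shows that with probability $1-ne^{-\Omega(n)}$ every agent receives the free part, and the complementary event contributes at most $n^2e^{-\Omega(n)}\to 0$ to the expected sum-delay. On the good event the sum-delay equals $d\,(n-K(d\vec{v}))$, because the $K(d\vec{v})$ non-free consumers have delay $0$ and everyone else delay exactly $d$. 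Finally $E[\,n-K(d\vec{v})\,]\to f(0)/d$: since $f$ is Lipschitz, $F(x)=f(0)x+O(x^2)$, so the thresholds $\tfrac{1}{d(n-m)}$ for small $m$ all lie within $o(1/n)$ of $\tfrac1{dn}$, and the number of agents below them is, with high probability, a single $\mathrm{Binomial}(n,\tfrac{f(0)+o(1)}{dn})$ count converging to $\mathrm{Poisson}(f(0)/d)$, which a short argument identifies with $n-K(d\vec{v})$. Hence $E(\text{sum-delay of }M(d))\to f(0)$, and since the optimal single deadline mechanism is at least as good as $M(d)$, its expected sum-delay is asymptotically at most $f(0)$.

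For the lower bound, fix any feasible mechanism and put $a_i=1-t_i\in[0,1]$, so $E(\text{sum-delay})=n-\sum_iE[a_i]$. For fixed $v_{-i}$, strategy-proofness and individual rationality force $a_i(\cdot,v_{-i})$ to be non-decreasing with $p_i(0,v_{-i})=0$, so Myerson's payment identity gives $E_{v_i}[p_i]=E_{v_i}[\psi(v_i)a_i]$ with $\psi(v)=v-\tfrac{1-F(v)}{f(v)}$; summing over $i$ and using ex-post budget balance, $\sum_iE[\psi(V_i)a_i]=P(\text{built})\le1$. A Stieltjes integration by parts (writing $\mu_i=da_i(\cdot,v_{-i})$) gives $1-E_{v_i}[a_i]=(1-a_i(1,v_{-i}))+\int F\,d\mu_i$ and $E_{v_i}[\psi(v_i)a_i]=\int\tau(1-F(\tau))\,d\mu_i$, whence $1-E_{v_i}[a_i]\ge c\,E_{v_i}[\psi(v_i)a_i]$ with $c=\min_{\tau\in(0,1)}\tfrac{F(\tau)}{\tau(1-F(\tau))}$; taking $E_{v_{-i}}$, summing, and combining with the trivial bound $E(\text{sum-delay})\ge n(1-P(\text{built}))$ gives $E(\text{sum-delay})\ge\tfrac{nc}{n+c}\to c$. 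I expect the main obstacle to be raising this constant from $c=\min_\tau\tfrac{F(\tau)}{\tau(1-F(\tau))}$ --- which can be strictly below $f(0)$ --- up to the single-deadline limit $f(0)$: the per-agent extremal allocation behind the bound above would require many agents to pay the worst-case reserve $\tau^\dagger$, which is incompatible with the $n$ payments summing to \emph{exactly} $1$, so one must use ex-post budget balance jointly across agents rather than only in expectation. Once the lower-bound constant is pinned to $f(0)$ it meets the upper bound and the theorem follows.
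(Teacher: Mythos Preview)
Your lower bound is the right one and essentially matches the paper: your constant $c=\min_{\tau}\frac{F(\tau)}{\tau(1-F(\tau))}$ is exactly the paper's $r^*$, and your Myerson-based derivation that the expected sum-delay is at least $\tfrac{nc}{n+c}\to c$ is (a cleaner version of) the paper's Proposition~\ref{prop:lowerboundsum}. So the lower-bound side is done, and there is nothing to ``raise'' there.

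The gap is on the \emph{upper-bound} side, and your proposed fix goes in the wrong direction. For a \emph{fixed} deadline $d$ your computation that $E[\text{sum-delay of }M(d)]\to f(0)$ is fine, but $f(0)$ is only $r(0^+)$, not $r^*$; when $r^*<f(0)$ (which, as you note, can happen --- e.g.\ whenever $f'(0)<-2f(0)^2$) the fixed-$d$ family, and in particular $M(1)$, is \emph{not} asymptotically optimal. Your plan to push the lower bound up to $f(0)$ cannot succeed, because the paper exhibits a mechanism whose sum-delay tends to $r^*<f(0)$: the ex-post-versus-ex-ante slack you hope to exploit vanishes in the limit. What closes the gap is choosing the deadline to \emph{shrink with $n$}. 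The paper (Proposition~\ref{prop:achievesumdelay}) takes $o^*=\arg\min_o r(o)$ and sets
\[
d_n=\frac{1+\epsilon}{n\,o^*\!\int_{o^*}^1 f(x)\,dx},
\]
so that the per-agent threshold to join the cost share of the non-free part is $\approx o^*$ rather than $\approx 0$. Then roughly $n(1-F(o^*))$ agents pay, the remaining $n\,F(o^*)$ agents each incur delay $d_n$, and the sum-delay is $n\,F(o^*)\cdot d_n\to (1+\epsilon)\,\frac{F(o^*)}{o^*(1-F(o^*))}=(1+\epsilon)r^*$. Together with your lower bound $r^*$ this finishes the theorem. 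In short: keep your lower bound exactly as is, drop the attempt to strengthen it, and replace the fixed-$d$ upper bound by the $d_n=O(1/n)$ construction tuned to $o^*$.
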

Theorem~\ref{thm:sumdelay} can be proved by combining
Proposition~\ref{prop:lowerboundsum} and
Proposition~\ref{prop:achievesumdelay}.\\

\begin{proposition}\label{prop:finite}
    The optimal expected sum-delay is finite regardless of the distribution.
\end{proposition}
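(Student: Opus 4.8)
The plan is to prove something slightly stronger than mere finiteness: I would exhibit one explicit mechanism whose expected sum-delay is bounded by a constant \emph{independent of $n$} (depending only on $f$), which immediately upper-bounds the optimal expected sum-delay and is what Theorem~\ref{thm:sumdelay} ultimately needs. For any \emph{fixed} $n$ the statement is of course immediate, since $t_i\in[0,1]$ gives $\sum_i t_i\le n<\infty$; the substance is the uniform bound. The natural candidate is a single deadline mechanism with a \emph{vanishing} deadline, $M(d)$ with $d=c/n$, where $c=c(f)$ is a constant to be pinned down.

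First I would fix the parameters. Since $f$ is positive on $[0,1]$, the constant $p:=P(V_i\ge \tfrac12)$ is strictly positive; choose $c:=\max\{3,\,6/p\}$ and set $k:=\lceil 2n/c\rceil$, so that for all large $n$ we have $k\le n-1$, $d=c/n\le 1$, and $dk\ge 2$, whence $P(dV_i\ge 1/k)=P(V_i\ge 1/(dk))\ge p$. Then I would run the same Hoeffding argument as in the proof of Theorem~\ref{thm:maxdelay}: fixing an agent $i$, the number $B$ of the other $n-1$ agents with $dv_j\ge 1/k$ is stochastically at least a $\mathrm{Binomial}(n-1,p)$, and $I(dv_{-i})=1$ whenever $B\ge k$; since $k<(n-1)p$ for large $n$, Hoeffding's inequality gives $P(I(dv_{-i})=0)\le P(B<k)\le e^{-2(n-1)(p-\frac{k-1}{n-1})^2}$, an exponentially small quantity $e^{-\Omega(n)}$. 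A union bound over the $n$ agents shows that with probability at least $1-n\,e^{-\Omega(n)}$ every agent receives the free part, in which case every release time is $0$ or $d$ and hence $\sum_i t_i\le nd=c$; on the complementary event $\sum_i t_i\le n$. Therefore
\[
E\!\left(\textstyle\sum_i t_i\right)\;\le\; c\;+\;n\cdot n\,e^{-\Omega(n)}\;=\;c+o(1),
\]
which is at most a finite constant $C(f)$ for all $n$ (the finitely many small $n$ for which $c/n>1$ are covered by the trivial bound $\sum_i t_i\le n$). Since the optimal expected sum-delay is at most that of $M(c/n)$, it is at most $C(f)<\infty$.

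The only genuinely delicate point is making the union bound survive the extra factor $n$ over agents, together with the further factor $n$ coming from bounding the sum-delay by $n$ on the bad event: one must check that the Hoeffding exponent $2(n-1)(p-\frac{k-1}{n-1})^2$ really is \emph{linear} in $n$, which is exactly what forces the quantitative choices $c\ge 6/p$ and $k\approx 2n/c$ relative to $p$. This is also where ``regardless of the distribution'' acquires content: $c$, and hence $C$, depend on $f$ only through $p=P(V\ge\tfrac12)>0$, which is finite and positive for every admissible density. Everything else --- integrality of $k$, the constraints $d\le 1$ and $k\le n-1$, and reading off the release times on the good event from the definition of $M(d)$ --- is routine and affects only small $n$.
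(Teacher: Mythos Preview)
Your argument is correct, but it takes a genuinely different route from the paper's. The paper exhibits a very simple \emph{sequential} mechanism: fix an integer $k>1$, approach the agents one by one and offer each the whole interval for payment $\tfrac1k$; once $k$ agents have accepted, all remaining agents receive the whole interval for free. If $q:=P(V\ge \tfrac1k)>0$, the expected number of refusers before the $k$-th acceptance is the negative-binomial mean $k(1-q)/q$, and the contribution from the event ``project not built'' is $n\cdot P(\mathrm{Bin}(n,q)<k)=o(1)$; hence the expected sum-delay is at most $k(1-q)/q+o(1)$, a constant depending only on $f$.

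Your proof instead stays inside the single deadline family, taking $M(c/n)$ and running the Hoeffding-plus-union-bound argument of Theorem~\ref{thm:maxdelay}. This is heavier --- you have to tune $c$ and $k$ carefully so that the Hoeffding exponent is genuinely linear in $n$ and survives the extra factor of $n^2$ --- but it has the virtue of previewing almost exactly the computation in Proposition~\ref{prop:achievesumdelay}, where a single deadline mechanism with deadline $\Theta(1/n)$ is shown to achieve the asymptotic lower bound. The paper's argument buys brevity and avoids all constant-chasing; yours buys consistency with the mechanism family the section is ultimately about. Both establish the uniform-in-$n$ bound that Proposition~\ref{prop:fail} and Theorem~\ref{thm:sumdelay} need.
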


\begin{proof}
    We consider the following mechanism: Pick an arbitrary integer $k>1$. We offer $\frac{1}{k}$ to the agents one by one. An agent gets the whole interval $[0,1]$ if she agrees
    to pay $\frac{1}{k}$ and if the project is built. Otherwise, she gets nothing. We build the project only when $k$ agents agree. Since we approach the agents one by one, after $k$ agents agree to pay $\frac{1}{k}$, all future agents receive the whole interval for free. This mechanism's expected sum-delay is bounded above by a constant. The constant
    only depends on the distribution.
    $\blacksquare$\\
\end{proof}

The following proposition follows from Proposition~\ref{prop:finite}.\\

\begin{proposition}\label{prop:fail}
    Given a mechanism $M$ and the number of agents $n$, let $Fail(n)$ be the probability of not building under $M$. We only need to consider $M$ that satisfies $Fail(n)= O(1/n)$.\\
\end{proposition}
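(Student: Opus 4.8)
The plan is to extract a crude lower bound on the expected sum-delay of \emph{any} mechanism directly from the ex post budget balance constraint, and then compare it against the finite benchmark already constructed in Proposition~\ref{prop:finite}. The point of the proposition is that a mechanism that fails to build too often is automatically forced to pay a large sum-delay penalty, so such mechanisms can be discarded when we go looking for the optimum.

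First I would use the ex post budget balance requirement: whenever the project is not built, we must have $t_i=1$ and $p_i=0$ for every agent $i$. In particular, on the not-build event every release time equals $1$, so $\sum_i t_i=n$ there. Since release times are always nonnegative, conditioning on whether or not the project is built gives, for any mechanism $M$ with failure probability $Fail(n)$,
\[
E_{v_i\sim f}\left(\sum_i t_i\right)\;\ge\; n\cdot Fail(n),
\]
because the build event contributes only a nonnegative amount.

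Next I would invoke Proposition~\ref{prop:finite}: there is a fixed mechanism $M_0$ (the one that offers $\tfrac{1}{k}$ to the agents one by one for some fixed integer $k>1$) whose expected sum-delay is bounded above by a constant $C$ that depends only on the prior distribution and is uniform in $n$. Hence any mechanism $M$ with $n\cdot Fail(n)>C$ has, by the displayed inequality, expected sum-delay strictly larger than that of $M_0$, and therefore cannot be optimal. Consequently, in the search for the sum-delay optimal mechanism it suffices to consider mechanisms with $n\cdot Fail(n)\le C$, i.e. with $Fail(n)=O(1/n)$, which is exactly the claim.

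I do not expect a real obstacle here; the only care needed is (i) to make sure the lower bound $n\cdot Fail(n)$ genuinely holds, which relies on \emph{ex post budget balance} (all $t_i=1$ when the project is not built) rather than merely individual rationality, and (ii) to note that the implied constant in $O(1/n)$ is allowed to depend on the prior distribution, which is harmless since Theorem~\ref{thm:sumdelay} is an asymptotic statement for a fixed distribution.
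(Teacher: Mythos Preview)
Your argument is correct and is exactly what the paper has in mind: the paper simply records that the proposition ``follows from Proposition~\ref{prop:finite}'' without spelling out the details, and your two steps (the lower bound $E\big[\sum_i t_i\big]\ge n\cdot Fail(n)$ from the ex post budget balance requirement, together with the distribution-dependent but $n$-independent upper bound $C$ from Proposition~\ref{prop:finite}) are precisely the intended justification.
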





We then propose a relaxed version of the ex post budget balance constraint, and
use it to calculate the delay lower bound.
\newpage
\begin{definition}[Ex ante budget balance]
    Mechanism $M$ is ex ante budget balanced if and only if the expected total payment from the agents equals the probability of building (times project cost $1$).\\
\end{definition}

\begin{proposition}\label{prop:lowerboundsum}
    Let $Fail(n)$ be the probability of not building the project under the optimal mechanism when there are $n$ agents.

    We consider what happens when we offer $o$ for the whole interval $[0,1]$ to an individual
    agent. If the agent accepts $o$ then she pays $o$ and gets the whole interval. Otherwise,
    the agent pays $0$ and receives nothing.

    We define the {\em delay versus payment ratio} $r(o)$ as follows:

    \[r(o) = \frac{\int_{0}^{o}f(x)dx}{o\int_o^1f(x)dx}\]

    $r$ is continuous on $(0,1)$. Due to $f$ being Lipschitz continuous, we have $\lim_{o\rightarrow 0}r(o)=f(0)$
    and $\lim_{o\rightarrow 1}r(o)=\infty$.
    We could simply set $r(0)=f(0)$, then $r$ is continuous on $[0,1)$.

    We define the {\em optimal delay versus payment ratio} $r^*$ as follows:

    \[r^* = \min_{o\in [0,1)}r(o)\]

    The expected sum-delay is bounded below by $r^*(1-Fail(n))$, which approaches
    $r^*$ asymptotically according to Proposition~\ref{prop:fail}.
\\



\end{proposition}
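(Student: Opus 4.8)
Write $F(s)=\int_0^s f(x)\,dx$, so that $r(o)=F(o)/\bigl(o(1-F(o))\bigr)$. The plan is to reduce the bound to a single-agent inequality and then aggregate. Fix an agent $i$ and the other agents' reports $v_{-i}$. By strategy-proofness (the taxation principle), agent $i$ chooses a utility-maximizing option from a menu of (allocation, payment) pairs that depends only on $v_{-i}$; writing $x_i(v_i)=1-t_i(v_i)\in[0,1]$ for the length of the interval she consumes, the standard one-dimensional characterization gives that $x_i$ is nondecreasing in $v_i$ and that her payment is $p_i(v_i)=v_ix_i(v_i)-\int_0^{v_i}x_i(s)\,ds$, where the normalization $p_i(0)=0$ is forced by individual rationality at $v_i=0$ together with $p_i\ge 0$. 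The target is $E_{v_i}[t_i\mid v_{-i}]\ge r^{*}\,E_{v_i}[p_i\mid v_{-i}]$ for every $v_{-i}$.

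To establish this, I would express both sides as integrals against the Lebesgue--Stieltjes measure $dx_i$ of the monotone function $x_i$ on $(0,1]$. A Fubini exchange gives $E_{v_i}[t_i\mid v_{-i}]=E[1-x_i]=(1-x_i(1))+\int_{(0,1]}F(s)\,dx_i(s)\ge\int_{(0,1]}F(s)\,dx_i(s)$, dropping the nonnegative boundary term $1-x_i(1)$. For the payment, a Fubini exchange gives $E_{v_i}[p_i\mid v_{-i}]=\int_0^1 x_i(v)\bigl(vf(v)-(1-F(v))\bigr)\,dv$; since $vf(v)-(1-F(v))=-\tfrac{d}{dv}\bigl[v(1-F(v))\bigr]$ and $v(1-F(v))$ vanishes at both endpoints, integrating by parts (Riemann--Stieltjes, using that $v(1-F(v))$ is continuous) yields $E_{v_i}[p_i\mid v_{-i}]=\int_{(0,1]}s(1-F(s))\,dx_i(s)$. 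Now both quantities are integrals of nonnegative integrands against the \emph{same} nonnegative measure $dx_i$, and by definition $F(s)=r(s)\,s(1-F(s))\ge r^{*}\,s(1-F(s))$ for $s\in[0,1)$, while any atom of $dx_i$ at $s=1$ contributes $0$ to the payment integral. Hence $E_{v_i}[t_i\mid v_{-i}]\ge r^{*}E_{v_i}[p_i\mid v_{-i}]$ --- the delay-to-payment ratio is a $dx_i$-weighted average of the pointwise ratios $r(s)$.

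Aggregating, take expectations over $v_{-i}$ and sum over $i$: $E[\text{sum-delay}]=\sum_i E[t_i]\ge r^{*}\sum_i E[p_i]=r^{*}E\bigl[\sum_i p_i\bigr]$. Under ex ante budget balance --- which is implied by ex post budget balance, so the bound applies in particular to the optimal mechanism --- $E[\sum_i p_i]$ equals the probability of building, i.e. $1-Fail(n)$, which gives the claimed lower bound $r^{*}(1-Fail(n))$. Finally, by Proposition~\ref{prop:fail} we have $Fail(n)=O(1/n)$, so the bound tends to $r^{*}$ as $n\to\infty$. It is also worth recording that $r^{*}=\min_{o\in[0,1)}r(o)$ is genuinely attained: $r$ is continuous on $[0,1)$ with $r(o)\to\infty$ as $o\to 1^{-}$, so the infimum is attained over a compact subinterval of $[0,1)$.

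The step I expect to be the main obstacle is making the two Stieltjes identities fully rigorous when $x_i$ is merely monotone: it may be discontinuous and need not be absolutely continuous, so both the Fubini exchange and the integration by parts must be carried out at the level of Lebesgue--Stieltjes measures, with care about atoms and about the endpoints $0$ and $1$ (where $r$ is only defined by the limit $r(0)=f(0)$, respectively is unbounded). A secondary point requiring care is the taxation-principle reduction --- one must verify that individual rationality together with $p_i\ge 0$ really does pin down $p_i(0)=0$, and that a sufficiently low-valued agent receives the all-zero outcome, so that the menu faced by agent $i$ is exactly a Myerson-type menu. Everything else is a routine computation.
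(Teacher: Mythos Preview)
Your argument is correct and shares the paper's core idea---use Myerson's single-parameter characterization to show that expected delay is at least $r^{*}$ times expected payment, then invoke budget balance---but your decomposition is somewhat different and in fact cleaner. The paper's outline first relaxes to ex ante budget balance, then argues it is without loss of generality to restrict to \emph{anonymous} mechanisms, and only then reduces to a single-agent problem with a fixed expected-payment target $(1-Fail(n))/n$; it finishes with an informal ``optimal price minimizes the ratio'' step. You instead condition on $v_{-i}$ and prove the per-agent inequality $E[t_i\mid v_{-i}]\ge r^{*}E[p_i\mid v_{-i}]$ directly via the two Stieltjes identities, then sum. This buys you two things: you never need the anonymity reduction (which the paper does not justify), and your pointwise inequality $F(s)\ge r^{*}s(1-F(s))$ makes precise what the paper calls ``for each infinitesimal time interval there is a price.'' The paper's route is slightly more conceptual (frame it as a constrained minimization and read off the optimum), while yours is more self-contained and handles the measure-theoretic issues you flag explicitly.
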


\begin{proof}[Outline]
    If we switch to ex ante budget balance, then it is without loss of generality to
    focus on anonymous mechanisms. We then face a single agent mechanism design problem
    where an agent pays $\frac{1-Fail(n)}{n}$ in expectation and we want to minimize
    her expected delay. Based on Myerson's characterization for single-parameter settings, here every strategy-proof mechanism works as follows: for each infinitesimal time interval
    there is a price and the price increases as an agent's allocated interval increases in length. There is an optimal price that minimizes the ratio between the delay caused by the price and the payment.
    The total payment is $1-Fail(n)$, which means the total delay is at least $r^*(1-Fail(n))$.
    $\blacksquare$
\end{proof}

\begin{proposition}\label{prop:achievesumdelay}
    Let $o^*$ be the optimal offer that leads to the optimal delay versus payment ratio
    $r^*$.\footnote{If $o^*=0$, then we replace it with an infinitesimally small $\gamma>0$. The achieved
    sum-delay is then approaching $r(\gamma)(1+\epsilon)$ asymptotically. When $\gamma$ approaches
    $0$, $r(\gamma)$ approaches $r^*$.}
    \[o^* = \arg\min_{o\in [0,1)} r(o)\]

    Let $\epsilon >0$ be an arbitrarily small constant.
    The following single deadline mechanism's expected sum delay approaches $r^*(1+\epsilon)$ asymptotically.

    \[M(\frac{1+\epsilon}{no^*\int_{o^*}^1f(x)dx})\]

\end{proposition}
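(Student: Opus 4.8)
Throughout write $\overline F(x):=1-F(x)=\int_x^1 f(y)\,dy$, set the deadline $d=\frac{1+\epsilon}{no^{*}\overline F(o^{*})}$ (so $1/d=\frac{no^{*}\overline F(o^{*})}{1+\epsilon}$), and first assume $o^{*}>0$. The plan is to describe $M(d)$ exactly on a high-probability event and to bound the sum-delay deterministically there. Recall $M(d)$ cost-shares the non-free part by running serial cost sharing on $d\vec v$: with $K:=K(d\vec v)$, the $K$ highest agents each pay $\frac1K$ and get $[0,d)$, and agent $i$ additionally gets the free part $[d,1]$ iff $I(dv_{-i})=1$. Because $1/d\to\infty$, having $K\ge1$ forces $1/(dK)\le1$, hence $K\ge\lceil 1/d\rceil$, so for large $n$ a successful cost share already has $K\ge2$. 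The key point is that we never need $K$ exactly --- a lower bound on $K$, obtained by testing the single well-chosen level $k^{*}:=\lceil n\overline F(o^{*})\rceil$, suffices.

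I would set $\tau:=1/(dk^{*})$; from $k^{*}\ge n\overline F(o^{*})$ this gives $\tau\le o^{*}/(1+\epsilon)<o^{*}$, and since $f$ is positive and continuous on $[0,1]$, $\overline F(\tau)\ge\overline F(o^{*})+c$ with $c:=\frac{\epsilon o^{*}}{1+\epsilon}\min_{[0,1]}f>0$. Let $N(x)$ be the number of agents with $v_i\ge x$; then $N(\tau)\sim\mathrm{Bin}(n,\overline F(\tau))$ has mean $\ge k^{*}-1+cn$, so Hoeffding's inequality gives $P\big(N(\tau)\le k^{*}\big)\le e^{-\Theta(n)}$. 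On the event $G:=\{N(\tau)\ge k^{*}+1\}$, and for $n$ large enough that $2\le k^{*}\le n-1$: (i) $N(1/(dk^{*}))\ge k^{*}$ gives $K\ge k^{*}\ge2$, so the project is built with the top $K$ agents as consumers; (ii) for every agent $i$, $N_{-i}(\tau)\ge N(\tau)-1\ge k^{*}$, so $K(dv_{-i})\ge k^{*}\ge1$ and thus $I(dv_{-i})=1$, i.e. every agent --- consumer or not --- receives the free part. Hence on $G$ each consumer has release time $0$ and each non-consumer has release time $d$, so the sum-delay is $(n-K)d\le(n-k^{*})d\le nF(o^{*})\,d=(1+\epsilon)\frac{F(o^{*})}{o^{*}\overline F(o^{*})}=(1+\epsilon)r^{*}$.

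Combining, $E[\text{sum-delay}]\le(1+\epsilon)r^{*}P(G)+n\,P(G^{c})\le(1+\epsilon)r^{*}+n\,e^{-\Theta(n)}\to(1+\epsilon)r^{*}$, which is the asserted bound; with Proposition~\ref{prop:lowerboundsum} and Proposition~\ref{prop:fail} this closes Theorem~\ref{thm:sumdelay}. One also gets a matching lower estimate: $M(d)$ fails to build only on $G^{c}$, so by Proposition~\ref{prop:lowerboundsum} its expected sum-delay is $\ge r^{*}(1-e^{-\Theta(n)})$, and the limit therefore lies in $[r^{*},(1+\epsilon)r^{*}]$ --- its precise value is $\frac{(1+\epsilon)(1-\alpha^{*})}{o^{*}\overline F(o^{*})}$ with $\alpha^{*}$ the largest $\alpha\in(0,1]$ satisfying $\overline F\!\big(\frac{o^{*}\overline F(o^{*})}{(1+\epsilon)\alpha}\big)\ge\alpha$, but only the upper bound is needed. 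The boundary case $o^{*}=0$ is handled by the footnote's substitution $o^{*}\leftarrow\gamma>0$: the same argument yields sum-delay $\to(1+\epsilon)r(\gamma)$, and $r(\gamma)\to r^{*}=f(0)$ as $\gamma\to0$.

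The main obstacle is that $K(d\vec v)$ is a maximum over $k$ of correlated events rather than a single concentrated statistic, so one cannot merely substitute its mean; the device that bypasses this is to give up on identifying $K$ and lower-bound it at the one level $k^{*}$ for which $1/(dk^{*})$ lies strictly below $o^{*}$, which buys a $\Theta(1)$ concentration margin for a clean Chernoff/Hoeffding estimate. A second delicate point is the release-time bookkeeping on $G$: one must exclude the ``non-free part only'' outcome (release time $1-d$), which is precisely step (ii) --- deleting any single agent cannot destroy a cost share that succeeds with $\Theta(n)$ participants.
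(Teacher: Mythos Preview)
Your proof is correct and follows essentially the same route as the paper: pick the level $k\approx n\overline F(o^*)$, observe that agents accept the share $1/k$ for the non-free part precisely when $v_i\ge o^*/(1+\epsilon)$, apply Hoeffding to get $\ge k+1$ acceptors with probability $1-e^{-\Theta(n)}$, and conclude that on this event everyone receives the free part while at least $k$ agents are consumers, yielding sum-delay $\le(n-k)d=(1+\epsilon)r^*$. Your write-up is in fact somewhat more careful than the paper's on two points the paper glosses over: you take $k^*=\lceil n\overline F(o^*)\rceil$ so that it is an honest integer level in the definition of $K(\cdot)$, and you explicitly verify the release-time bookkeeping (ruling out the ``non-free only'' outcome) rather than leaving it implicit.
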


\begin{proof}
Let $p=P(V_i(1+\epsilon)\ge o^*)$.
Let $k=n\int_{o^*}^1f(x)dx$.
$p$ is the probability that an agent is willing to pay $\frac{1}{k}$
for the non-free part whose length is $\frac{1+\epsilon}{no^*\int_{o^*}^1f(x)dx}$.
We use $B$ to denote the Binomial distribution $B(n,p)$.
If $B> k$, then every agent receives the free part, because
cost sharing is successful even if we remove one agent.
The probability that an agent does not receive the free part is then bounded above
by $P(B\le k)$.
According to Hoeffding's inequality, we have that when $k<np$, we have

\[P(B\le k)\le e^{-2n(p-\frac{k}{n})^2} =
e^{-2n(\int_{\frac{o^*}{1+\epsilon}}^1f(x)dx-\int_{o^*}^1f(x)dx)^2}
=e^{-2n(\int_{\frac{o^*}{1+\epsilon}}^{o^*}f(x)dx)^2}\]

Let $\beta=\int_{\frac{o^*}{1+\epsilon}}^{o^*}f(x)dx$.
The expected total delay when some agents do not receive the free part is then at most $ne^{-2n\beta^2}$, which approaches $0$ as $n$ goes to infinity.
Therefore, we only need to consider situations where all agents receive the free part and at least $k$ agents receive the non-free part.
The expected sum delay on the remaining $n-k$ agents is then at most

\[(n-k)\frac{1+\epsilon}{no^*\int_{o^*}^1f(x)dx}= (1+\epsilon)\frac{\int_0^{o^*}f(x)dx}{o^*\int_{o^*}^1f(x)dx}= (1+\epsilon)r^*\]
$\blacksquare$
\end{proof}


\newpage
We then use an example to show that when $n=500$, under different distributions,
the optimal single deadline mechanism's expected sum-delay is close to the optimal
value.

\begin{example}
    We consider $n=500$ which is the same as Example~\ref{ex:max}.
    Simulations are based on $100,000$ random draws.

    \begin{itemize}

        \item $f$ is the uniform distribution $U(0,1)$:
            The single deadline mechanism $M(1)$ (essentially the serial cost sharing mechanism)
            has an expected sum-delay of $1.006$, which is calculated via numerical simulation.
            $Fail(500)$ is then at most $0.002$.
            $r^*=1$. The lower bound is $0.998$, which
            is close to our achieved sum-delay $1.006$.\\

        \item $f$ is the normal distribution $N(0.5,0.1)$ restricted to $[0,1]$:
            The single deadline mechanism $M(1)$'s expected sum-delay equals $2.3\mathrm{e}-4$ in simulation, which is obviously close to optimality.\\

        \item $f$ is the beta distribution $Beta(0.5,0.5)$:
            The single deadline mechanism $M(0.01)$'s expected sum-delay equals $1.935$ in simulation.
            $Fail(500)$ is then at most $0.00387$.
            $r^*=1.927$. The lower bound equals $(1-0.00387)*r^*=1.920$, which
            is very close to the achieved sum-delay of $1.935$.
            The serial cost sharing mechanism $M(1)$ is far away from
            optimality in this example.
            The expected sum-delay of the serial cost sharing mechanism
            is much larger at $14.48$.\\
    \end{itemize}

\end{example}







\newpage
\section{Automated Mechanism Design for Smaller Number of Agents}

For smaller number of agents,
the single deadline mechanism family no longer contains a near optimal mechanism.
We propose two numerical methods for identifying better mechanisms for smaller
number of agents. One is by extending the single deadline mechanism family
and the other is via evolutionary computation.\\

\subsection{Multiple Deadline Mechanisms}

The first method is fairly straightforward. We could extend the single deadline
mechanism family as follows:\\

\begin{definition}[Multiple Deadline Mechanisms]
    A multiple deadline mechanism\\
    $M(d_1,d_2,\ldots,d_n)$ is characterized
    by $n$ different deadlines. Agent $i$'s non-free part is $[0,d_i]$ and
    her free part is $[d_i,1]$. The mechanism's rules are otherwise identical
    to the single deadline mechanisms.\\
\end{definition}

We simply use exhaustive search to find the best set of deadlines. Obviously,
this approach only works when the number of agents is tiny.

\newpage
\subsection{Automated Mechanism Design via Evolutionary Computation}

Ohseto~ (\cite{Ohseto2000:Characterizations}) characterized all strategy-proof and
individually rational mechanisms for the binary public project model (under
several minor technical assumptions). We summarize the author's characterization as follows:

\begin{itemize}

    \item {\em Unanimous mechanisms} (characterization for the nonexcludable model): Under an unanimous mechanism, there is a cost share vector $(c_1,c_2,\ldots,c_n)$ with $c_i\ge 0$ and
        $\sum_{i}c_i=1$. The project is built if and only if all agents accept this cost share vector.\\

    \item {\em Largest unanimous mechanisms} (characterization for the
        excludable model): Under a largest unanimous
        mechanism, for every subset/coalition of the agents, there is a constant cost
        share vector.
        The agents initially face the cost share vector corresponding to the grand coalition. If some
        agents do not accept the current cost share vector, then they are forever excluded. The remaining
        agents face a different cost share vector based on who are left.
        If at some point, all remaining agents accept,
        then we build the project. Otherwise, the project is not built.\\
\end{itemize}

We extend the largest unanimous mechanisms by adding the {\em release time} element.\\

\begin{definition}[Sequential unanimous mechanisms]
    A cost share vector under a sequential unanimous mechanism includes
    both the payments and the release time:

    \[T_1,B_1,\quad T_2,B_2,\quad \ldots,\quad T_n,B_n\]
\\\\
    Agent $i$ accepts the above cost share vector if and only if her utility
    {\em based on her reported valuation} is nonnegative when paying $B_i$
    for the time interval $[T_i,1]$.  That is, agent $i$ accepts the above cost share
    vector if and only if {\em her reported valuation} is at least
    $\frac{B_i}{1-T_i}$.  $\frac{B_i}{1-T_i}$ is called the {\em
    unit price} agent $i$ faces. We require $B_i\ge 0$ and $\sum_{i}B_i=1$.
\\
    A sequential unanimous mechanism contains $m$ cost share vectors in a
    sequence.  The mechanism goes through the sequence and stops at the first
    vector that is accepted by all agents. The project is built and the agents'
    release time and payments are determined by the unanimously accepted cost
    share vector. If all cost share vectors in the sequence are rejected, then
    the decision is not to build.\\
\end{definition}

The largest unanimous mechanisms (can be interpreted as special cases with
binary $T_i$) form a subset of the sequential unanimous mechanisms.  The
sequential unanimous mechanisms' structure makes it suitable for genetic
algorithms --- we treat the cost share vectors as the {\em genes} and treat the
sequences of cost share vectors as the {\em gene sequences}. The sequential unanimous mechanisms are generally not strategy-proof. However, they
can be easily proved to be strategy-proof in two scenarios:

\begin{itemize}

    \item A sequential unanimous mechanism is strategy-proof when {\em the sequence
        contains only one cost share vector} (an agent faces a take-it-or-leave-it
        offer).
        This observation makes it easy to generate
        an initial population of strategy-proof mechanisms.\\

    \item If for every agent, as we go through the cost share vector sequence,
        the unit price an agent faces is {\em nondecreasing} and her release time is also
        {\em nondecreasing}, then the mechanism is strategy-proof.
        Essentially, when the above is satisfied,
        all agents prefer earlier cost share vectors.
        All agents are incentivized to report truthfully, as doing so enables them to secure the earliest possible cost share vector.\\
\end{itemize}

\newpage
The sequential unanimous mechanism family
{\em seems} to be quite expressive.\footnote{Let $M$ be a strategy-proof
mechanism. There exists a sequential unanimous mechanism $M'$ (with exponential
sequence length). $M'$ has an
approximate equilibrium where the equilibrium outcome is
arbitrarily close to $M$'s outcome.} Our experiments show that by optimizing
within the sequential unanimous mechanisms, we are able to identify mechanisms
that perform better than existing mechanisms.
Our approach is as follows:

\begin{itemize}

    \item Initial population contains $200$ strategy-proof mechanisms. Every initial mechanism is
        a sequential unanimous mechanism with only one cost share vector.
        The $B_i$ and the $T_i$ are randomly generated by sampling $U(0,1)$.\\

    \item We perform evolution for $200$ rounds. Before each round, we filter
        out mechanisms that are not truthful. We have two different filters:

        \begin{itemize}

            \item Strict filter: we enforce that every agent's unit price faced and
                release time must be nondecreasing. With this filter, the final
                mechanism produced must be strategy-proof. We call this variant
                the {\em Truthful Genetic Algorithm (TGA)}.\\

            \item Loose filter: we use simulation to check for
                strategy-proofness violations.  In every evolution round, we
                generate $200$ random type profiles.  For each type profile and
                each agent, we randomly draw one false report and we filter out
                a mechanism if any beneficial manipulation occurs.
                After finishing evolution, we use $10,000$ type profiles to
                filter out the untruthful mechanisms from the final population.
                It should be noted that, we can only claim that the
                remaining mechanisms are {\em probably} truthful.
                We call this variant
                the {\em Approximately Truthful Genetic Algorithm (ATGA)}.\\

        \end{itemize}

    \item We perform crossover and mutations as follows:

        \begin{itemize}

            \item Crossover (Figure~\ref{fig41}): We call the top $50\%$ of the
                population (in terms of fitness, {\em i.e.}, expected max-delay or sum-delay)
                the {\em elite population}. For every elite
                mechanism, we randomly pick another mechanism from the whole
                population, and perform a crossover by randomly swapping one gene segment.\\

\begin{figure}[h!]
\centering
\includegraphics[width=0.8\textwidth,height=2cm]{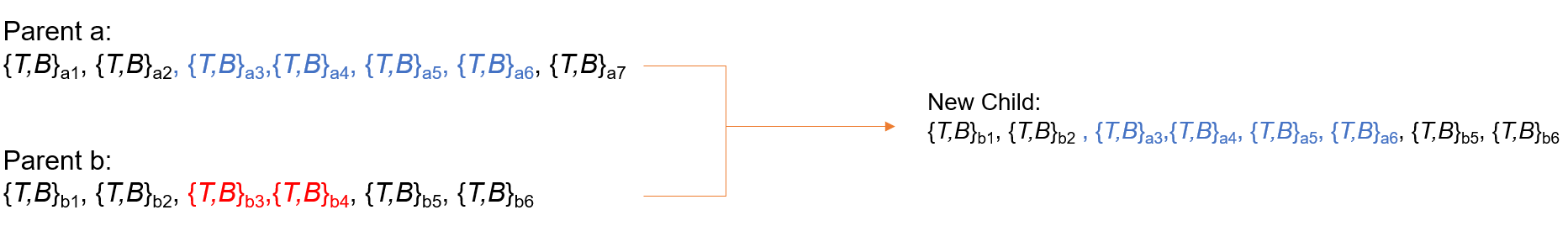}
\caption[Genetic algorithm: Crossover]{Genetic algorithm: Crossover}\label{fig41}
\end{figure}

            \item Mutation (Figure~\ref{fig2}): For every elite mechanism, with $20\%$ chance, we randomly select
                one gene, modify the offer of one agent.
                We insert that new cost share vector
                into a random position after the original position.

\begin{figure}[h!]
\centering
\includegraphics[width=0.8\textwidth,height=0.8cm]{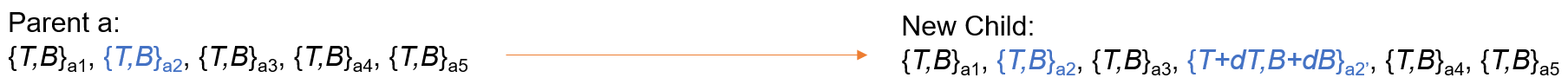}
\caption[Genetic algorithm: Mutation]{Genetic algorithm: Mutation}\label{fig2}
\end{figure}

            \item Neighbourhood Search (Figure~\ref{fig3}): For every elite mechanism, with $20\%$ chance,
                we randomly perturb one gene uniformly (from $-10\%$ to $+10\%$).

\begin{figure}[h!]
\centering
\includegraphics[width=0.8\textwidth,height=0.8cm]{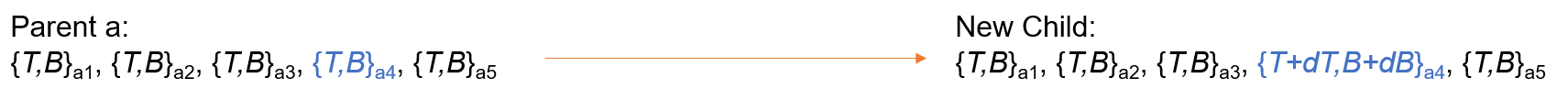}
\caption[Genetic algorithm: Neighborhood search]{Neighborhood Search}\label{fig3}
\end{figure}

        \end{itemize}

\item Abandon duplication and unused genes: In every evolution round,
    if a cost share vector is never unanimously accepted or if two cost share vectors
    are within $0.0001$ in terms of L1 distance.
    then we remove the duplication/unused genes.

\end{itemize}









\newpage
\subsection{Experiments}

We present the expected max-delay and sum-delay for $n=3,5$ and for different
distributions.  ATGA is only approximately truthful. We recall that in our
evolutionary process, in each round, we only use a very loose filter to filter
out the untruthful mechanisms. After evolution finishes, we run a more rigorous
filter on the final population (based on $10,000$ randomly generated type
profiles). The percentage in the parenthesis is the percentage of mechanisms
surviving the more rigorous test.  The other mechanisms (TGA and Multiple
deadlines) are strategy-proof. SCS is the serial cost sharing mechanism from
Moulin~ (\cite{Moulin1994:Serial}). According to Guo {\em et al.}~ (\cite{Guo2018:Cost})'s
experiments, SCS has the best known expected delays, so we use it as a benchmark.\\

\begin{table}[htp]
\centering{}
\small
\begin{tabular}{cccccc}
\multicolumn{1}{c|}{\textit{\textbf{n=3,sum-delay}}} & {\color[HTML]{303030} ATGA}                  & {\color[HTML]{303030} TGA}            & {\color[HTML]{303030} Single deadline} & {\color[HTML]{303030} Multiple deadline} & {\color[HTML]{303030} SCS}            \\ \hline
\multicolumn{1}{c|}{\textbf{Uniform(0,1)}}           & {\color[HTML]{303030} \textbf{1.605(95\%)}}  & {\color[HTML]{303030} \textbf{1.605}} & {\color[HTML]{303030} \textbf{1.605}}  & {\color[HTML]{303030} \textbf{1.605}}    & {\color[HTML]{303030} \textbf{1.605}} \\
\multicolumn{1}{c|}{\textbf{Beta(0.5,0.5)}}          & {\color[HTML]{303030} \textbf{1.756(89\%)}}  & {\color[HTML]{303030} \textbf{1.757}} & {\color[HTML]{303030} \textbf{1.757}}  & {\color[HTML]{303030} \textbf{1.757}}    & {\color[HTML]{303030} \textbf{1.757}} \\
\multicolumn{1}{c|}{\textbf{Bernoulli(0.5)}}         & {\color[HTML]{303030} \textbf{0.869(100\%)}} & {\color[HTML]{303030} \textbf{0.868}} & {\color[HTML]{303030} 1.499}           & {\color[HTML]{303030} 1.253}             & {\color[HTML]{303030} 1.498}          \\
\multicolumn{1}{c|}{\textbf{50\% 0, 50\% 0.8}}       & {\color[HTML]{303030} \textbf{1.699(98\%)}}  & {\color[HTML]{303030} 1.873}          & {\color[HTML]{303030} 1.873}           & {\color[HTML]{303030} 1.873}             & {\color[HTML]{303030} 1.873}          \\
\textbf{}                                            & {\color[HTML]{303030} \textbf{}}             & {\color[HTML]{303030} }               & {\color[HTML]{303030} }                & {\color[HTML]{303030} }                  & {\color[HTML]{303030} }               \\
\multicolumn{1}{c|}{\textit{\textbf{n=3,max-delay}}} & {\color[HTML]{303030} ATGA}                  & {\color[HTML]{303030} TGA}            & {\color[HTML]{303030} Single deadline} & {\color[HTML]{303030} Multiple deadline} & {\color[HTML]{303030} SCS}            \\ \hline
\multicolumn{1}{c|}{\textbf{Uniform(0,1)}}           & {\color[HTML]{303030} \textbf{0.705(97\%)}}  & {\color[HTML]{303030} \textbf{0.705}} & {\color[HTML]{303030} \textbf{0.705}}  & {\color[HTML]{303030} \textbf{0.705}}    & {\color[HTML]{303030} \textbf{0.705}} \\
\multicolumn{1}{c|}{\textbf{Beta(0.5,0.5)}}          & {\color[HTML]{303030} \textbf{0.754(87\%)}}  & {\color[HTML]{303030} 0.757}          & {\color[HTML]{303030} 0.782}           & {\color[HTML]{303030} 0.757}             & {\color[HTML]{303030} 0.782}          \\
\multicolumn{1}{c|}{\textbf{Bernoulli(0.5)}}         & {\color[HTML]{303030} \textbf{0.5(100\%)}}   & {\color[HTML]{303030} \textbf{0.498}} & {\color[HTML]{303030} 0.687}           & {\color[HTML]{303030} \textbf{0.50}}     & {\color[HTML]{303030} 0.877}          \\
\multicolumn{1}{c|}{\textbf{50\% 0, 50\% 0.8}}       & {\color[HTML]{303030} \textbf{0.676(94\%)}}  & {\color[HTML]{303030} 0.753}          & {\color[HTML]{303030} 0.749}           & {\color[HTML]{303030} 0.749}             & {\color[HTML]{303030} 0.877}          \\
\multicolumn{1}{l}{}                                 & \multicolumn{1}{l}{}                         & \multicolumn{1}{l}{}                  & \multicolumn{1}{l}{}                   & \multicolumn{1}{l}{}                     & \multicolumn{1}{l}{}                  \\
\multicolumn{1}{c|}{\textit{\textbf{n=5,sum-delay}}} & {\color[HTML]{303030} ATGA}                  & {\color[HTML]{303030} TGA}            & {\color[HTML]{303030} Single deadline} & {\color[HTML]{303030} Multiple deadline} & {\color[HTML]{303030} SCS}            \\ \hline
\multicolumn{1}{c|}{\textbf{Uniform(0,1)}}           & {\color[HTML]{303030} 1.462(95\%)}           & {\color[HTML]{303030} 1.503}          & {\color[HTML]{303030} \textbf{1.415}}  & {\color[HTML]{303030} \textbf{1.415}}    & {\color[HTML]{303030} \textbf{1.415}} \\
\multicolumn{1}{c|}{\textbf{Beta(0.5,0.5)}}          & {\color[HTML]{303030} 2.279(92\%)}           & {\color[HTML]{303030} 2.12}           & {\color[HTML]{303030} \textbf{1.955}}  & {\color[HTML]{303030} \textbf{1.955}}    & {\color[HTML]{303030} \textbf{1.955}} \\
\multicolumn{1}{c|}{\textbf{Bernoulli(0.5)}}         & {\color[HTML]{303030} \textbf{1.146(100\%)}} & {\color[HTML]{303030} 1.867}          & {\color[HTML]{303030} 2.106}           & {\color[HTML]{303030} 1.711}             & {\color[HTML]{303030} 2.523}          \\
\multicolumn{1}{c|}{\textbf{50\% 0, 50\% 0.8}}       & {\color[HTML]{303030} 2.432(94\%)}           & {\color[HTML]{303030} 2.845}          & {\color[HTML]{303030} 2.323}           & {\color[HTML]{303030} \textbf{2.248}}    & {\color[HTML]{303030} 2.667}          \\
\textbf{}                                            & {\color[HTML]{303030} \textbf{}}             & {\color[HTML]{303030} }               & {\color[HTML]{303030} }                & {\color[HTML]{303030} }                  & {\color[HTML]{303030} }               \\
\multicolumn{1}{c|}{\textit{\textbf{n=5,max-delay}}} & {\color[HTML]{303030} ATGA}                  & {\color[HTML]{303030} TGA}            & {\color[HTML]{303030} Single deadline} & {\color[HTML]{303030} Multiple deadline} & {\color[HTML]{303030} SCS}            \\ \hline
\multicolumn{1}{c|}{\textbf{Uniform(0,1)}}           & {\color[HTML]{303030} 0.677(91\%)}           & {\color[HTML]{303030} 0.677}          & {\color[HTML]{303030} \textbf{0.662}}  & {\color[HTML]{303030} \textbf{0.662}}    & {\color[HTML]{303030} 0.678}          \\
\multicolumn{1}{c|}{\textbf{Beta(0.5,0.5)}}          & {\color[HTML]{303030} 0.754(79\%)}           & {\color[HTML]{303030} 0.75}           & {\color[HTML]{303030} \textbf{0.73}}   & {\color[HTML]{303030} \textbf{0.73}}     & {\color[HTML]{303030} 0.827}          \\
\multicolumn{1}{c|}{\textbf{Bernoulli(0.5)}}         & {\color[HTML]{303030} 0.506(100\%)}          & {\color[HTML]{303030} \textbf{0.50}}  & {\color[HTML]{303030} 0.577}           & {\color[HTML]{303030} \textbf{0.50}}     & {\color[HTML]{303030} 0.971}          \\
\multicolumn{1}{c|}{\textbf{50\% 0, 50\% 0.8}}       & {\color[HTML]{303030} \textbf{0.666(80\%)}}  & {\color[HTML]{303030} 0.751}          & {\color[HTML]{303030} 0.736}           & {\color[HTML]{303030} 0.679}             & {\color[HTML]{303030} 0.968}
\end{tabular}
\caption[Experiment result: Our methods' sum-delay and max-delay vs state of the art]{Experiment result: Our methods' sum-delay and max-delay vs state of the art.}
\textbf{We see that ATGA performs well in many settings. If we focus on {\em provable} strategy-proof mechanisms, then TGA and the optimal multiple deadline mechanism also often perform better than the serial cost sharing mechanism.}
\end{table}

\newpage
\section{Chapter Summary}
In this chapter, we study the excludable public project model where the decision is binary (build or not build).  In a classic excludable and binary public project model, an agent either consumes the project in its whole or is completely excluded. We study a setting where the mechanism can set different project release times for different agents, in the sense that high-paying agents can consume the project earlier than the low-paying agents.  The mechanism design objective is to minimize the expected maximum release delay and the expected total release delay.  We propose the single deadline mechanisms.
We show that the optimal single deadline mechanism is asymptotically optimal for both objectives, regardless of the prior distributions.  For a small number of agents, we propose the sequential unanimous mechanisms by extending the largest unanimous mechanisms from Ohseto~ (\cite{Ohseto2000:Characterizations}). We propose an automated mechanism design approach via evolutionary computation to optimize within the sequential unanimous mechanisms.

\chapter{Redistribution in Public Project Problems via Neural Networks} 

\label{Redistribution in Public Project Problems via Neural Networks} 

In this chapter, we discuss VCG redistribution mechanisms (variants of the VCG mechanism) for the public project problems. We design mechanisms via neural networks with two welfare-maximizing objectives: optimal in the worst case and optimal in expectation.\\

We combine generative adversarial networks and multi-layer perceptions (GAN + MLP) to find the optimal worst-case VCG redistribution mechanisms for the public project problem. We use multi-layer perceptions (MLP) combined with a cost function that takes into consideration the agents' prior distributions to find the optimal-in-expectation VCG redistribution mechanisms for the public project problem. \\
$$$$

\section{Introduction}

\subsection{VCG Redistribution Mechanisms}

Many important problems in multiagent systems are related to resource allocations. The problem of allocating one or more resources among a group of competing agents can be solved through economic allocation mechanisms that take the agents’ reported valuations for the resources as input, and produce an allocation of the resources to the agents, as well as payments to be made by the agents. As a central research branch in economics and game theory, mechanism design concerns designing collective decision-making rules for multiple agents, to achieve desirable objectives, such as maximizing the social welfare, while each agent pursues her own utility. A mechanism is efficient if the agents who value the resource the most will get it. A mechanism is strategy-proof if the agents have the incentives to report their valuations truthfully, which is to say, an agent's utility is maximized when reporting her true valuation, no matter how the other agents report.\\

The Vickrey-Clarke-Groves (VCG) mechanism is a celebrated efficient and strategy-proof mechanism. Under the VCG mechanism, each agent $i$ reports her private type $\theta_i$. The outcome that maximizes the agents' total valuations is chosen. Every agent is required to make a VCG payment $t(\theta_{-i})$, which is determined by the other agents' types. An agent's VCG payment is often described as how much this agent's presence hurts the other agents, in terms of the other agents' total valuations. The total VCG payment may be quite large, leading to decreased welfare for the agents. In particular, in the context of the public project problem, where the goal is often to maximize the social welfare (the agents' total utility considering payments), having large VCG payments are undesirable.\\


To address the welfare loss due to the VCG payments, Cavallo (\cite{Cavallo2006:Optimal}) suggested that we first execute the VCG mechanism and then redistribute as much of the payments back to the agents, without violating the efficiency and strategy-proofness of the VCG mechanism, and in a weakly budget-balanced way. This is referred to as the VCG redistribution mechanism. The amount that every agent receives (or pays additionally) is called the redistribution payment. To maintain efficiency and strategy-proofness of VCG, the redistribution payment of an agent is required to be independent of her own valuation. To maintain weakly budget-balance, the total amount redistributed should never exceed the total VCG payment. The redistribution payment is characterized by a redistribution function $h$, where
$h(\theta_{-i})$ represents agent $i$'s redistribution payment.\\

There have been many successes on designing redistribution mechanisms for various multi-unit/combinatorial auction settings (\cite{Cavallo2006:Optimal, Clippel2014:Destroy, Faltings2005:Budget-Balanced, Guo2009:Worst, Moulin2009:Almost, Gujar2011:Redistribution, Guo2011:VCG, Guo2012:Worst, Guo2014:Better,tsuruta2014optimal,Guo2011:Budget}), including a long list of optimal/near-optimal mechanisms.
On the other hand, there hasn't been comparable success in solving for optimal redistribution mechanisms for the public project problem, despite multiple attempts (\cite{Naroditskiy2012:Redistribution, Guo2016:Competitive, Guo2017:Speed, Guo2019:Asymptotically}). In terms of optimal results, Naroditskiy {\em et al.} (\cite{Naroditskiy2012:Redistribution}) solved for the worst-case optimal mechanism for
three agents. Unfortunately, the authors' technique does not generalize to more than three agents. Guo (\cite{Guo2019:Asymptotically}) proposed a mechanism that is worst-case optimal when the number of agents approaches infinity, but for small number of agents, the mechanism is not optimal. For maximizing expected welfare, there are no existing results, because it is difficult for traditional mathematical analysis(eg: mixed integer programming) to maximize the expectation of welfare.\\ 


\subsection{Designing VCG Redistribution Mechanisms via Neural Networks}

A recent emerging topic in mechanism design is to bring tools such as neural networks from machine learning to design mechanisms (\cite{Duetting2019:Optimal, Golowich2018:Deep, Manisha2018:Learning, Shen2019:Automated, Wang2021:Mechanism}). Duetting et al. (\cite{Duetting2019:Optimal}) proposed a neural network approach for the automated design of optimal auctions. They model an auction as a multi-layer neural network and frame optimal auction design as a constrained learning problem which can be solved using standard machine learning pipelines. {\em The training and testing type profiles are generated based on the prior distribution. The cost function involves the mechanism objective and the penalty for property violation.}
\\
Essentially, neural networks were used as tools for functional optimisation. Shen et al. (\cite{Shen2019:Automated}) proposed a neural network based framework to automatically design revenue optimal mechanisms. This framework consists of a seller’s network, which provides a menu of options to the buyers, and a buyer’s network, which outputs an action that maximizes her utility. Wang et al. (\cite{Wang2021:Mechanism}) studied mechanism design for the public project problem and proposed several technical innovations that can be applied to mechanism design in general to improve the performance of mechanism design via neural networks.

\begin{figure}[H]
  \centering
  \includegraphics[width=0.7\linewidth,height=15cm]{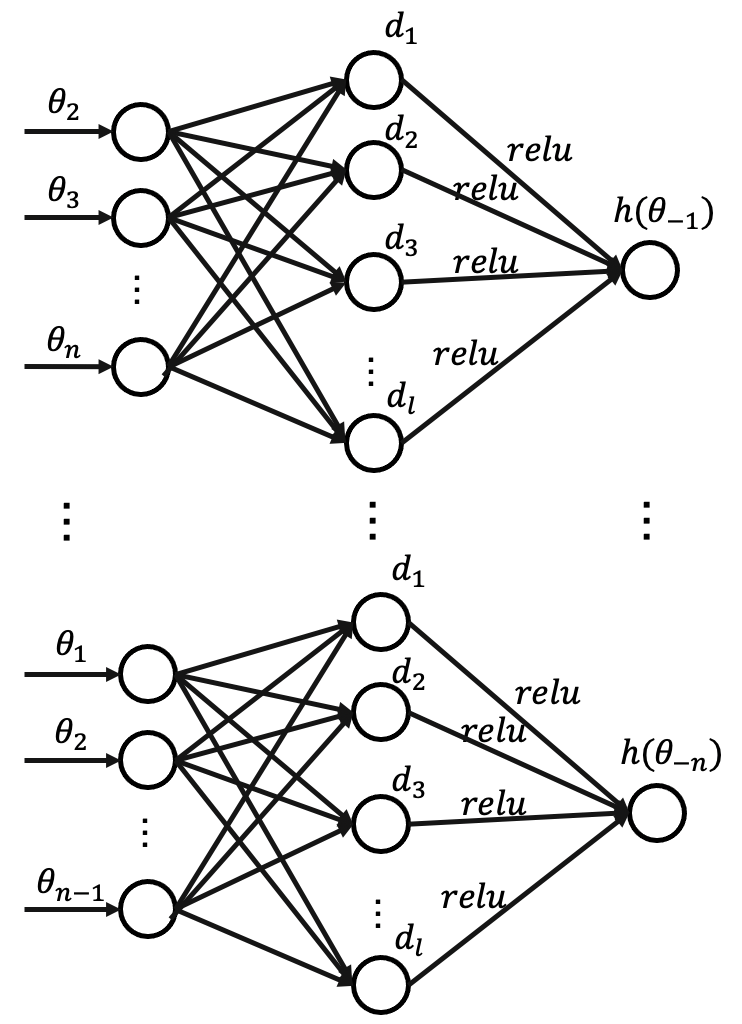}
  \caption[Manisha's neural network structure (\cite{Manisha2018:Learning})]{Neural network structure reported by Manisha et al. (\cite{Manisha2018:Learning})}
  \label{fig:ManishaNet}
\end{figure}

The work by Manisha et al. (\cite{Manisha2018:Learning}) is the first and only attempt to design VCG redistribution mechanisms using neural networks. They focused on multi-unit auctions with unit demand and studied both worst-case and optimal-in-expectation objectives. By randomly generating a large number of bid profiles, they train a neural network to maximize the total redistribution payment, while enforcing that the total redistribution should not exceed the total VCG payment. They modelled the redistribution function as a simple network outlined in Figure \ref{fig:ManishaNet}. It is a fully connected network with one hidden layer using ReLU activation. It takes the valuations of all the agents other than agent $i$ herself as input, and outputs the predicted redistribution payment for agent $i$. Their data is sampled randomly from uniform distribution ($\theta_i \in Uniform(0, 1)$).\\


\subsection{Improved Neural Networks for Designing VCG Redistribution Mechanisms for the Public Project Problem}

In this chapter, we train neural networks to design VCG redistribution functions for the public project problem, which turns out to be a more challenging setting compared to multi-unit auctions studied by Manisha et al. (\cite{Manisha2018:Learning}). The public project problem is a classic mechanism design problem that has been studied extensively in economics and computer science (\cite{Mas-Colell1995:Microeconomic, Moore2006:General, Moulin1988:Axioms}). In this problem, $n$ agents decide whether or not to build a non-excludable public project, for example, a public bridge that can be accessed by everyone once built. Without loss of generality, we assume that the cost of the project is $1$, and $\theta_{i} (0\leq\theta_{i}\leq1)$ is agent $i$’s valuation for the project if it is built. If the decision is not to build, every agent retains her share of the cost, which is $1/n$.\\




We first evaluate the simple multilayer perceptron (MLP) model
proposed by Manisha et al. (\cite{Manisha2018:Learning}).
That is,  for each agent $i$, we train a neural network that maps $\theta_{-i}$ to agent $i$'s redistribution. The training and testing samples are randomly generated based on the prior distribution.
The cost function maximizes the mechanism design objective, as well as enforces mechanism design constraints via penalty.
We find that such a simple MLP is not effective enough for the public project problem for the following reasons:
\newpage
\begin{enumerate}
  \item From our experiments, by randomly generating the type profiles,
  we are not getting the true worst-case type profiles for the public project problem (it is a coincidence that for multi-unit auctions with unit demand, it is a lot easier to hit a worst case).
  \item Another challenge is the high input dimension when the number of agents is large. For 100 agents, the neural network has to take a 99-dimensional input, which is computationally unrealistic.
  \item In the public project problem, the agents' collective payments differ significantly between cases where the decision is to build the public project and cases
  where the decision is not to build.
  The number of ``build'' samples in a training batch significantly impacts
  the parameter gradient, which results in a wild loss fluctuation during the training process.\\
\end{enumerate}

To solve the aforementioned problems, we propose a novel neural network approach to design redistribution mechanisms for the public project problems.
Our approach involves the following technical innovations.\\

\paragraph{GAN Network} For the worst-case objective, we introduce a generative adversarial network (GAN) to generate worst-case type profiles, and then use these type profiles to train the redistribution function. Our experiment shows that a mechanism trained only using randomly generated data fails when facing
type profiles generated by GAN, so GAN is necessary and effective to derive the worst case.\\

\paragraph{Dimension Reduction} Instead of feeding $\theta_{-i}$ as input to train the mechanism, which has $n-1$ dimensions, we extract a few expressive features from $\theta_{-i}$ (e.g. the maximum of types $\theta_{-i}$, the sum of $\theta_{-i}$ excluding the maximum, etc.). This reduces the input dimension to 3. This helps the neural network loss converge faster and still retain good performance.\\

\paragraph{Supervised Learning} Wang et al. (\cite{Wang2021:Mechanism}) suggested that supervision to manual mechanisms often outperforms random initialization in terms of training speed by pushing the performance to a state that is already somewhat close to optimality. In addition, unlike many other deep learning problems, for mechanism design, there often exist simple and well-performing mechanisms that can be used as starting points. In this particular problem, we first conduct supervised learning to let the network mimic the state of art manual mechanism (\cite{Guo2019:Asymptotically}), and then leave it to gradient descent. This approach saves time for larger $n$ in our experiments.\\

\paragraph{Feeding Prior Distribution into Loss Function} We use probability density function (PDF) of the prior distribution to provide quality gradients. In specific, for each valuation profile $\theta = \{\theta_i\} (i=1..n)$ generated from a distribution $D$, we randomly choose $\theta_i$ to be replaced by a randomly generated $\theta'_i$ from $Uniform(0,1)$ and update $\theta$ to be $\theta'$. This sample is then assigned a weight based on the PDF.
In experiments, we see that this approach significantly reduces the loss fluctuation during training. One potential explanation (or observation) is that this approach reduces the fluctuation in the proportion of ``build'' cases among a batch.

With a more sophisticated network architecture due to the above technical adjustments, we get better results for the worst-case than the state of the art (\cite{Guo2019:Asymptotically}). For the optimal-in-expectation objective, our results are close to the theoretical optimal values.\\



\newpage
\section{Model Description}

For the public project problem, VCG redistribution mechanisms have the following form (\cite{Naroditskiy2012:Redistribution}):
\begin{itemize}
  \item	Build the public project if and only if $\sum_i\theta_{i}\geq1$.
  \item	If the decision is to build, agent i receives $\sum_{j\neq i}\theta_{j}-h(\theta_{-i})$.
  \item	If the decision is not to build, then agent i receives $(n-1)/n-h(\theta_{-i})$.
  \item	$h$ is an arbitrary function and $\theta_{-i}$ refers to the types from the agents other than $i$ herself.\\
\end{itemize}

A VCG redistribution mechanism is characterized by the function $h$.

Due to Guo (\cite{Guo2019:Asymptotically}),

\begin{itemize}
  \item $S(\theta)=max\{\sum_i \theta_i, 1\}$ is exactly the first-best total utility. (I.e., if the sum of types is higher than $1$, then the efficient decision is to build. Otherwise, the efficient decision is not to build.)
  \item The agents' welfare (total utility considering payments) under type profile $\theta$ is $nS(\theta)-\sum_i h(\theta_{-i})$, which is obtained via simple algebraic simplification based on the definition of VCG redistribution mechanisms.\\
\end{itemize}

We consider two objectives. One is to find a mechanism that maximizes the worst-case efficiency ratio, and the other is to maximize the expected efficiency ratio.\\

\subsection{Worst-case Optimal Mechanism}

The efficiency ratio $r$ is defined as the ratio between the achieved total utility and the first best total utility:
$$r = \frac{nS(\theta)-\sum_i{h(\theta_{-i})}}{S(\theta)} = n - \frac{\sum_i{h(\theta_{-i})}}{S(\theta)}$$
\\
The worst-case efficiency ratio is the worst case ratio between the achieved total utility and the first best total utility, namely, the minimum of $r$ over all type profiles.
Due to Guo (\cite{Guo2019:Asymptotically}), the mechanism has a worst-case efficiency ratio $\alpha$ if and only if:
    \begin{equation}\label{equ:worst-const}
      \forall\theta, (n-1)\leq\sum_i h(\theta_{-i})/S(\theta)\leq(n-\alpha)
    \end{equation}
\\
In Inequality \ref{equ:worst-const}, the left side is the constraint for weakly budget-balance, and the right side corresponds to the definition of $\alpha$.

Therefore, taking the worst-case ratio as the objective, we need to design an $h$ function that:

\begin{maxi}[3]
{}{\alpha}{}{}
\addConstraint{\forall{\theta},(n-1)}{ \leq \sum_i{h(\theta_{-i})}/S(\theta)}{\leq (n-\alpha)}
\end{maxi}

\subsection{Optimal-in-Expectation Mechanism}

For this objective, we maximize the expected efficiency ratio $r$ to 1, which is equivalent to minimize $\sum_i{h(\theta_{-i})}/S(\theta)$ from above to $n-1$, with the consideration of the weakly budget-balance constraint.

We are designing an $h$ function that:

\begin{mini}[3]
{}{\overline {\sum_i{h(\theta_{-i})}/S(\theta)}}{}{}
\addConstraint{\forall{\theta},(n-1)}{\leq \sum_i{h(\theta_{-i})}/S(\theta)}
\end{mini}

\newpage
\section{Worst-case Optimal Mechanism}

In this section, we focus on the worst-case optimal mechanism. We first describe our neural network approach by explaining the network architecture and details of the relating techniques. Then we define the loss function.  \\

\subsection{Network Architecture}

\begin{figure*}[htbp]
\centering
\includegraphics[width=\textwidth,height=12cm]{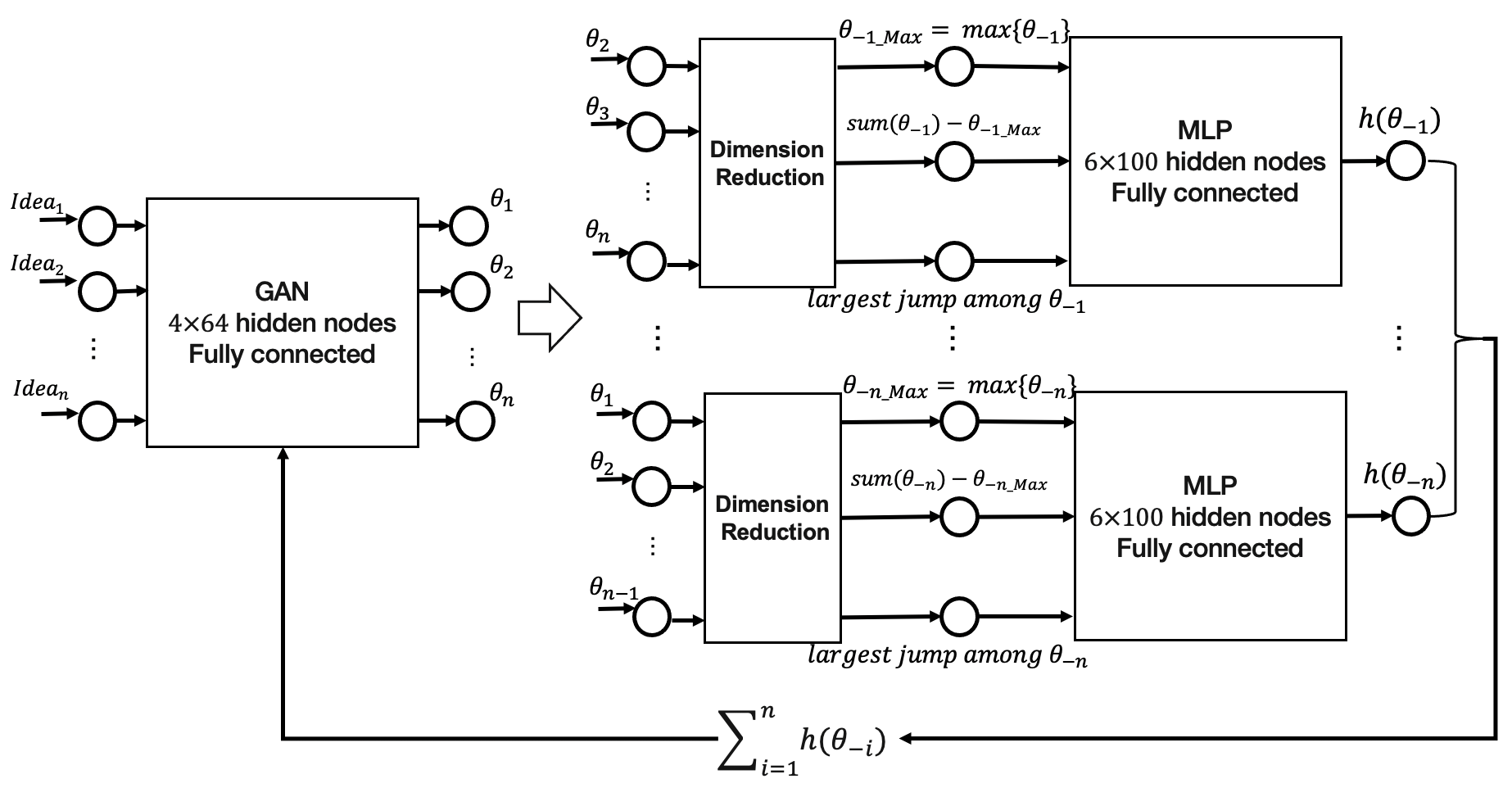}
\caption[Our neural network architecture for worst-case optimal scenarios]{Our neural network architecture for worst-case optimal scenarios}
\label{fig:GAN-MLP}
\end{figure*}

We construct a neural network to determine the $h$ function. As illustrated in Figure \ref{fig:GAN-MLP}, it is a network system in which a GAN and an MLP interacting with each other, we call it GAN+MLP. \\

The GAN works as a Generative Model and is used to generate special samples (type profiles). It takes $n$ randomly generated values as input ideas, and the output is $\theta=\{\theta_i\} (i=1...n)$. It is a fully connected network with 4 hidden layers, and each hidden layer contains 64 nodes. For a given batch ($batch\_size=b$), the GAN generates $b$ type profiles: $\theta^{(j)}\in batch = \{\theta^{(1)}, \theta^{(2)},...,\theta^{(b)}\}$, with the aim to maximize the difference between the maximum and the minimum of $\sum_i{h(\theta^{(j)}_{-i})}/S(\theta^{(j)})$. It means to:

$$maximize\ (\sum_i {h(\theta^{(j_1)}_{-i})}/S(\theta^{(j_1)}) - \sum_i{h(\theta^{(j_2)}_{-i})}/S(\theta^{(j_2)}))$$
\\
where $\theta^{(j1)}, \theta^{(j2)}\in batch$ is the sample that gives the maximum and minimum of $\sum_i{h(\theta^{(j)}_{-i})}/S(\theta^{(j)})$, respectively.\\

The MLP works as a Discriminative Model that learns the samples generated by the GAN. The MLP is a fully connected neural network. For each agent $i$, the network takes  $\theta_{-i}$ as the input, and outputs the value of $h(\theta_{-i})$. In the MLP, there are 6 hidden layers, each of which contains 100 nodes and with ReLU as the activation function. We first train the MLP under supervision to the best-performing manual mechanism and then leave it to unsupervised learning. For unsupervised learning, our cost function is the combination of design objective and also penalty due to constraint violation.\\

\subsection{Details of the Networks and Evaluations}

To improve the result for the neural network, we adopt some technical adjustments. We use a GAN instead of uniform to generate special cases in order to find out the worst case. For cases with a greater number of agents ($n\geq 5$), we adopt two technical tricks: Dimension Reduction and Supervised Learning. \\

\newpage
\subsubsection{GAN Network}

In previous studies, the authors used random generation or fixed data to find the worst case (\cite{Manisha2018:Learning}). As mentioned in Section 3.1, We propose a new GAN approach to find out the worst case. We conduct a contrast experiment to verify the validation of the GAN. We use only data generated from uniform distribution to train a network, and test this network with two sets of data.

\begin{itemize}
    \item Test Set A: 20000 data drawn from $Uniform(0, 1)$
    \item Test Set B: 10000 data generated from a trained GAN network and 10000 data drawn from $Uniform(0, 1)$\\
\end{itemize}

Figure \ref{fig:GAN-comp} outlines the experimental results for $n=10$ showing the difference of the the network performance on Test Set A and B. In the left figure, the network is tested by randomly generated data set A. It gives $\alpha=0.896$, and $\sum_{i}{h(\theta_{-i})}/ S(\theta)$ is between 9 and 9.104. However, in the right figure, the network performs poorly with significant violation of the weakly budget-balance constraint ($\sum_{i}{h(\theta_{-i})} / S(\theta)$ is from 3 to 9.2). \\

\begin{figure}[ht]
\begin{minipage}{0.5\textwidth}
 \centering
 \includegraphics[width=\linewidth]{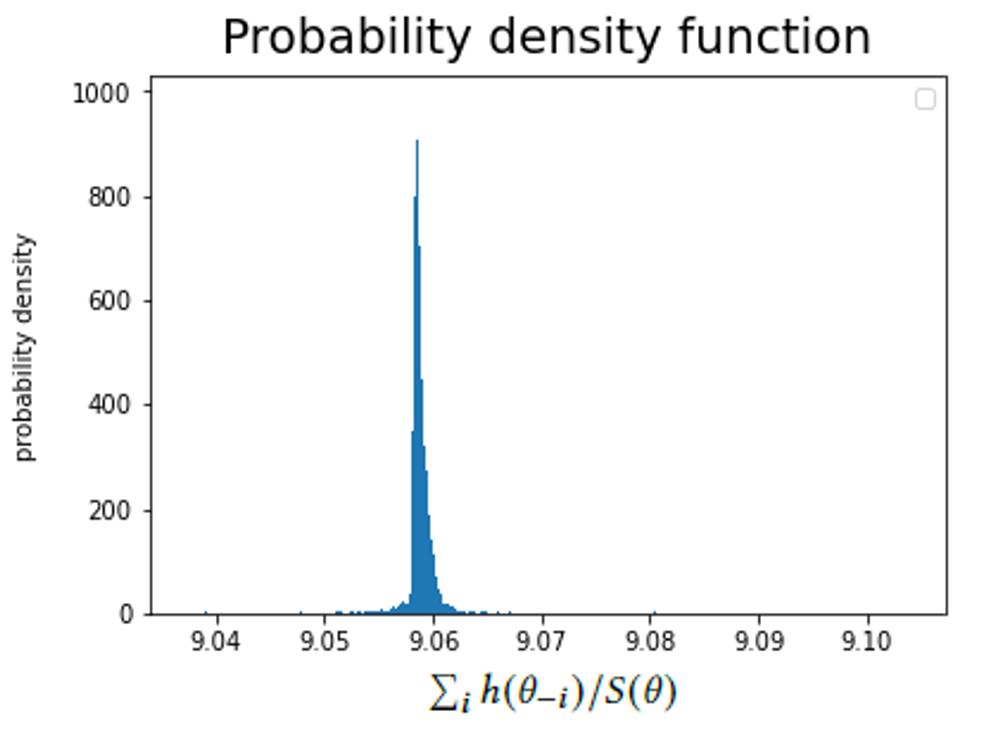}
\end{minipage}\hfill
\begin {minipage}{0.5\textwidth}
 \centering
 \includegraphics[width=\linewidth]{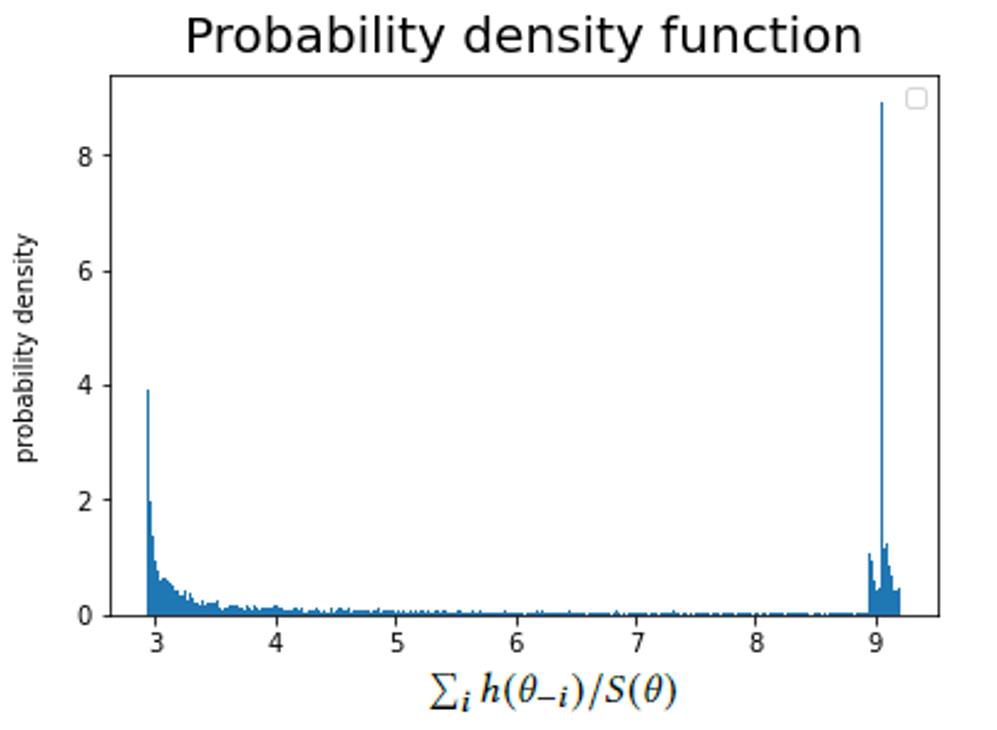}
\end{minipage}
\caption[Experiment result: Use vs not use GAN]{Experiment result: Use vs not use GAN. } \label{fig:GAN-comp}
\textbf{Spread of $\sum_{i}{h(\theta_{-i})} / S(\theta)$ evaluated using random type profiles and GAN generated type profiles. Random generation fails to generate true worst cases.}
\end{figure}

\newpage
Therefore, random type profile generation as used in Manisha et al. (\cite{Manisha2018:Learning}) does not work for this problem. To get the worst-case performance, we need a GAN network to generate higher quality worst-case profiles, and then let the MLP learn these profiles.

\subsubsection{Dimension Reduction}

The MLP takes a $(n-1)$-dimensional input for $n$ agents, resulting in expensive computation when $n$ is large. This motivates us to look for an effective dimension-reducing technique.\\

We first manually design a list of features that describe $\theta_{-i}$, and then experimentally search for a good combination of three features to be used for dimension reduction purposes. (We essentially reduce $\theta_{-i}$ to three dimensions this way.)\\


The features we consider include:
\begin{itemize}
  \item The highest type(s) from $\theta_{-i}$
  \item The the lowest type(s) from $\theta_{-i}$
  \item The sum of some types from $\theta_{-i}$
  \item The standard deviation of some types from $\theta_{-i}$
  \item The largest jump of adjacent types from $\theta_{-i}$\\
  Here $jump\ of\ adjacent\ types$ is defined as: for a sorted list ${\theta_{-i}}$, there is $jump_j$ between $\theta_j$ and $\theta_{j+1}$:
  $$jump_j = \theta_{j+1}-\theta_j$$\\
\end{itemize}

\newpage
We first experimentally derive that the following features are more important than the rest (i.e., removing them results in significant performance loss):

\begin{itemize}
  \item The highest type from $\theta_{-i}$
  \item The the lowest type from $\theta_{-i}$
  \item The sum of some type from $\theta_{-i}$\\
\end{itemize}

We then experimentally evaluate different combinations of the above features:

\begin{enumerate}
\item   Combination 1: the highest type(s) from $\theta_{-i}$ \& the sum of all the other types
\item   Combination 2: the highest type(s) from $\theta_{-i}$ \& the difference between the highest and the lowest type from $\theta_{-i}$
\item   Combination 3: the highest type(s) from $\theta_{-i}$ \& the standard deviation of all types from $\theta_{-i}$
\item   Combination 4: the highest type(s) from $\theta_{-i}$ \& the standard deviation of all the other types
\item   Combination 5: the highest type(s) from $\theta_{-i}$ \& the largest jump of adjacent types
\item   Combination 6: the highest type(s) from $\theta_{-i}$, the lowest type from $\theta_{-i}$, \& the largest jump of adjacent types
\item   Combination 7: the highest type(s) from $\theta_{-i}$, the sum of all the other types \& the largest jump of adjacent types
\item   Combination 8: the highest type(s) from $\theta_{-i}$, the lowest type from $\theta_{-i}$, \& the sum of all the other types\\
\end{enumerate}

\newpage
Figure \ref{fig:CombiComp-alpha} shows $\alpha$ of the networks trained with the different input combinations against the number of agents $n$. It is found that Combination 1, 7 and 8 performs better than the other combinations.

\begin{figure}[H]
  \centering
  \includegraphics[width=0.9\linewidth]{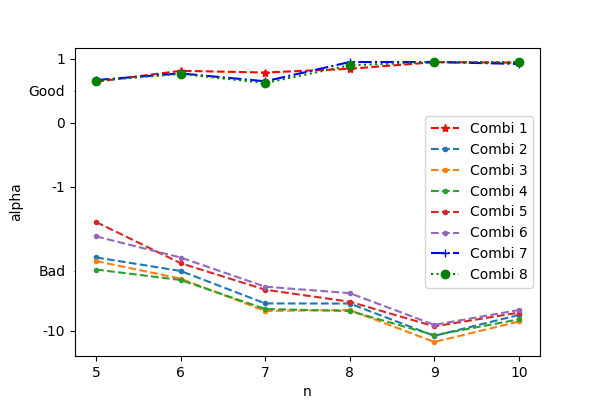}
  \caption[Experiment result: Different dimension reduction methods]{Experiment result: Effects of different dimension reduction methods.
}
  \label{fig:CombiComp-alpha}
  \textbf{The worst-case ratio $\alpha$ of the models trained with the input generated by using different feature-combinations against the number of agents $n=5, ..., 10$.}
\end{figure}

The above dimension-reducing mechanism improves both the training speed and sometimes the performance. We can infer that by using this dimension-reducing mechanism, the training speed would have a more significant improvement with the increase of the agent number $n$.\\

\newpage
Figure \ref{fig:comp-loss} shows the difference of the loss of the network between using and not using dimension-reducing mechanism for $n=10$. In the left figure, the training and test loss is stabilized within 1000 iterations with the application of the dimension-reducing mechanism, while in the right figure, the losses still vibrate in a wider range till 2000 iterations.
\\
\begin{figure}[H]
  \centering
  \includegraphics[width=\linewidth,height=6cm]{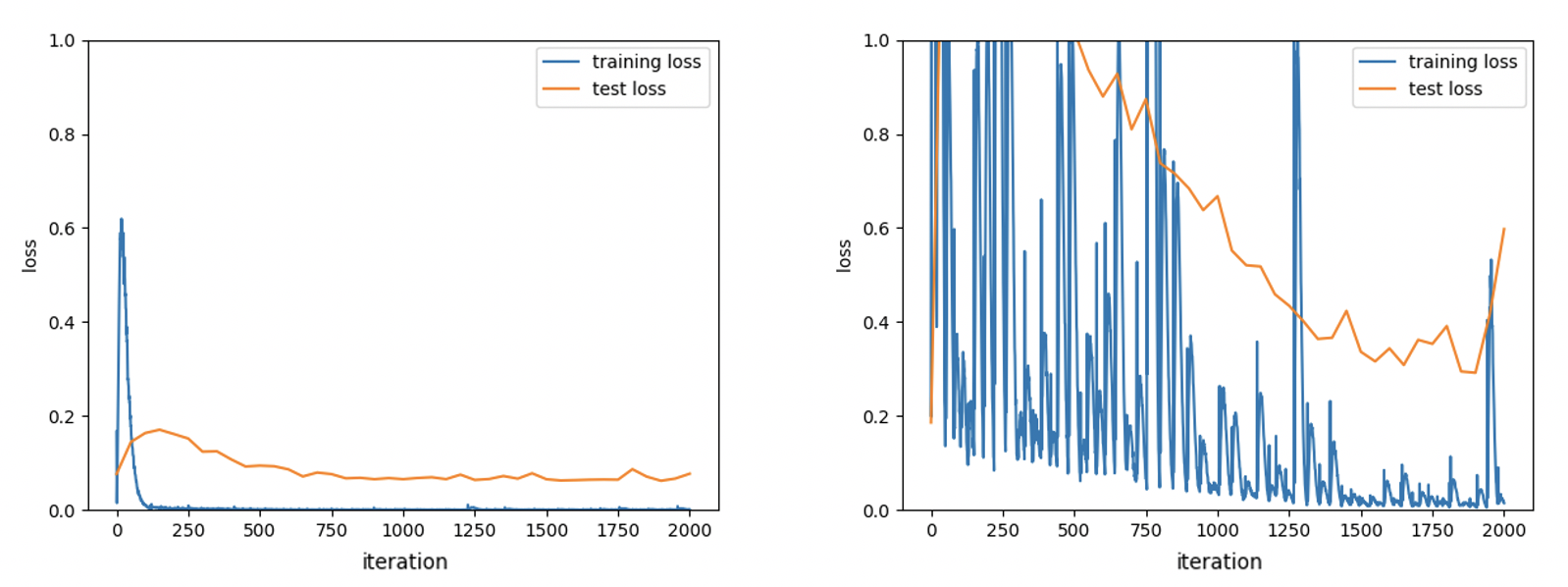}
  \caption[Experiment result: Speed of dimension reduction]{Experiment result: Speed of dimension reduction.}
  \label{fig:comp-loss}
  \textbf{ The loss during the training when using (left) and not using (right) the dimension-reducing mechanism for $n=10\ (\theta_i \in Uniform(0,1))$. Dimension reduction leads to faster convergence.}
\end{figure}

\newpage
Figure \ref{fig:comp-sumh-S} shows the difference on the spreading of $\sum_i{ h(\theta_{-i})}/S(\theta)$ between using and not using dimension-reducing mechanism for $n=10$. In the left figure, when using the dimension-reducing mechanism, we get an expected performance of $\sum_i{h(\theta_{-i})}/S(\theta) = 9.003$, which is very close to $n-1=9$. This indicates the corresponding mechanism is close to being optimal. (We recall that for the expected performance, we want the expected ratio to be as close to $n-1$ as possible.)
When not using the dimension-reducing mechanism, our performance is $9.377$, which is further away from $n-1=9$.

\begin{figure}[H]
  \centering
  \includegraphics[width=\linewidth,height=6cm]{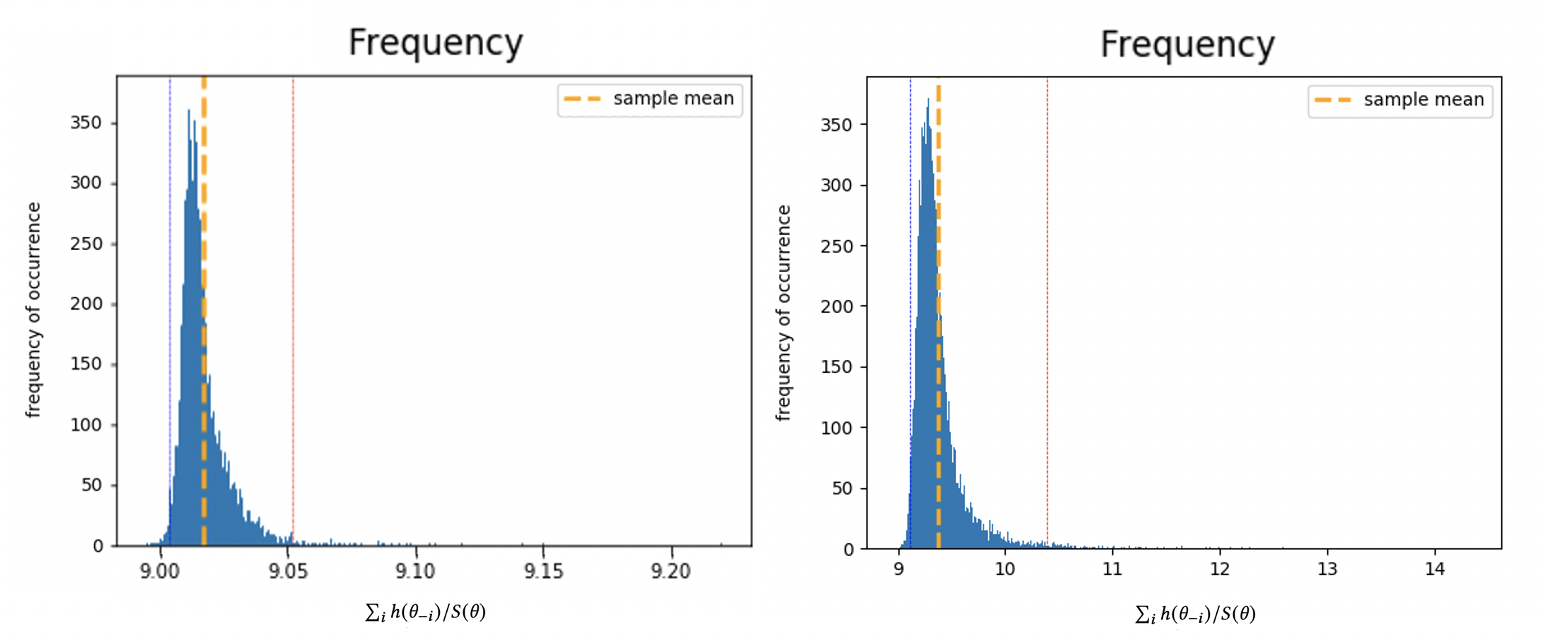}
  \caption[Experiment result: Use vs not use  dimension reduction]{Experiment result: Use vs not use  dimension reduction.}
  \label{fig:comp-sumh-S}
  \textbf{ Spreading of $\sum_i h(\theta_{-i})/S(\theta)$ when using (left) and not using (right) the dimension-reducing mechanism for $n=10\ (\theta_i \in Uniform(0,1))$, data size = 10000, bins = 500. Dimension reduction leads to better-performing mechanism.}
\end{figure}

\subsubsection{Supervised Learning}

For a greater number of agents, it takes gradient descent a long time to fix the constraint violations. Wang et al. (\cite{Wang2021:Mechanism}) suggested that we first supervise the neural network into the existing best manual mechanism, and then leave it to unsupervised learning. With the best manual mechanism as the starting point, better mechanisms can usually be found.
The existing best manual mechanism for worst-case optimal objective is reported by Guo (\cite{Guo2019:Asymptotically}).


\newpage
\subsection{Loss Function}

\subsubsection{GAN Network Loss}
The GAN network has the following loss:
\begin{align*}
   loss_{GAN} = & min\{\sum_i { h(\theta^{(j_1)}_{-i})}/S(\theta^{(j_1)})\}\\
   & - max\{\sum_i{h(\theta^{(j_2)}_{-i})}/S(\theta^{(j_2)})\},
   \ & \theta^{(j_1)},\theta^{(j_2)} \in batch
\end{align*}

\subsubsection{Supervised Loss}

In supervised learning, we want the predicted $h$ to be as close as possible to the best manual value $h\_manual$ (\cite{Guo2019:Asymptotically}), so the loss is:
$$loss_{supervised} = (h - h\_manual)^2$$

\subsubsection{Unsupervised Loss}

In the unsupervised learning stage, we want all the $\sum_{i}{h(\theta_{-i})}/S(\theta)$ to maintain the weakly budget-balance constraint, and let $\alpha$ to be close to 1, which means to make the upper bound of $\sum_{i}{h(\theta_{-i})}/S(\theta)$ as small as possible.

The loss function includes two parts:

\begin{itemize}
\item $objective\_loss$ = $(relu(\sum_{i}{h(\theta_{-i})} - (n-up\_bound) S(\theta)))^2 $
\item $constraint\_loss$ = $(relu((n-1)S(\theta)- \sum_{i}{h(\theta_{-i})}))^2$
\end{itemize}


Since the weakly budget-balance is a strict constraint, while the objective is soft, we add a multiplier $\epsilon$ to weaken the effect of $objective\_loss$. For the worst-case optimal network, we get the best $\epsilon=0.01$ through experiments.
\begin{align*}
loss_{unsupervised} &= \epsilon \cdot objective\_loss + constraint\_loss\\
&= objective\_loss/100 + constraint\_loss
\end{align*}

\newpage
\section{Optimal-in-Expectation Mechanism}

We design the optimal-in-expectation mechanism with slight modifications based on the worst-case mechanism.\\

The architecture of the MLP stays the same. We do not need a GAN for generating the worst-case since the worst-case does not matter. For a large agent number $n$, we also adopt dimension reduction and supervised learning as we do for the worst-case mechanism. In addition, we feed the prior distribution into the loss function to achieve a high-quality gradient. This network is called MLP+FEED.\\

\subsection{Feed Prior Distribution into Loss Function}

In this problem, the decision to build or not to build significantly affects the expectation of a training batch. For batches with different amounts of ``build'' cases, the gradients fluctuate significantly, causing worse training results and speed.\\

Wang et al. (\cite{Wang2021:Mechanism}) discovered a way to insert the cumulative distribution function (CDF) of the prior distribution into the cost function to get more accurate loss function. The approach was shown to be effective for optimal-in-expecation mechanism design. We adopt a similar idea, but use probability density function (PDF) from the prior distribution to provide quality gradients for our training process. For each valuation profile $\theta = \{\theta_i\} (i=1..n)$ generated from a distribution $D$, we randomly choose $\theta_i$ to be replaced by $\theta'_i$ which is regenerated from $Uniform(0,1)$ and update $\theta$ to be $\theta'$. \\

\newpage
The probability of the profile $\theta'$ is proportional to $PDF_D(\theta'_i)$, so the loss should also be multiplied by $PDF_D(\theta'_i$).\\

$$loss_{unsupervised\_feeding} = loss_{unsupervised} \cdot PDF_D(\theta'_i)$$

Here,
$$PDF_D(\theta'_i) = 10^{log\_prob(\theta'_i)}$$

$log\_prob$ is provided by PyTorch to calculate PDF. PyTorch distributions package is based on Schulman (\cite{schulman2015gradient}).
Our experiments show the difference between feeding and not feeding the distribution into loss function. Figure \ref{fig:Feed} shows that for normal distribution ($Normal(0.5,0.1),\ n=3$), feeding distribution into the loss function helps get a better gradient and thus dramatically improves the test result.
$$$$

\begin{figure}[H]
  \centering
  \includegraphics[width=0.7\linewidth]{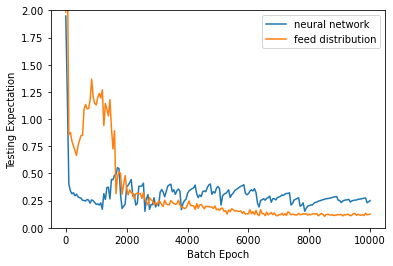}
  \caption[Experiment result: Feed vs not feed distribution]{Experiment result: Feed vs not feed distribution.} \label{fig:Feed}
  \textbf{Test loss (the distance from $\sum_{i}{h(\theta_{-i})} / S(\theta)$ to $(n-1)$) during the training for distribution: $Normal(0.5,0.1),\ n=3)$. Feeding prior distribution leads to faster convergence and better mechanism.}
\end{figure}

\newpage
\subsection{Loss Function}

The loss for the optimal-in-expectation network is similar to that of the worst-case MLP.

\subsubsection{Supervised Loss}

$$loss_{supervised} = (h - h\_manual)^2 $$

\subsubsection{Unsupervised Loss}

We use the square loss to approximate the loss for the expectation, since the derivative of square loss ($f(x) =x^2$) is a linear function, and can represent the strength of the gradient decedent, which is also linear ($f'(x)=ax+b$).

Loss function consists of $objective\_loss$ and $constraint\_loss$, where

\begin{itemize}
\item $objective\_loss = (relu(\sum_{i}{h(\theta_{-i})} - (n-1)S(\theta)))^2$
\item $constraint\_loss = (relu((n-1) S(\theta)- \sum_{i}{h(\theta_{-i})}))^2$
\end{itemize}

$objective\_loss$ aims to push $\sum_{i}{h(\theta_{-i})}/S(\theta)$ close to $(n-1)$, so that the mechanism redistributes as much of the collected VCG payment back to the agents. $constraint\_loss$ is for weakly budget-balance, to make the total redistributed amount less than the total VCG payment.

Similar with the worst-case optimal network, we use a multiplier $\epsilon$ to soften $objective\_loss$, and the experimentally best $\epsilon=10^{-4}$. So,
\begin{align*}
loss_{unsupervised} &= \epsilon \cdot objective\_loss + constraint\_loss\\
&= objective\_loss/10000 + constraint\_loss
\end{align*}

With the feeding of prior distribution to loss function as described in Section 4.1, we have:
\begin{align*}
&loss_{unsupervised\_feeding}\\
&= (objective\_loss/10000 + constraint\_loss)*PDF_D(\theta'_i)
\end{align*}

\newpage
\section{Experiments and Results}

We program with python by using the third-party library {\em PyTorch}. The experiments are conducted on a computer with an i5-8300H CPU and an Nvidia 1060 GPU. The experiment running time varies from a few minutes up to about 1 hour, depending on different agent numbers and data sizes.\\

\subsection{Experiment settings}

\subsubsection{Generate Data}

We randomly generate training and test data from prior distributions, and also use the GAN network. Half of the test data is generated from the GAN and the other half is randomly generated data. The test data size is from 10000 to 100000, and increases with $n$, as shown in Table \ref{tab52:data-size}. We generate new test data for each test.\\

\begin{table}[ht]
    \centering
    \small
    \begin{tabular}{cccccccc}
    \hline
        n & 4 & 5 & 6 & 7 & 8 & 9 & 10 \\ \hline
        Data size & 10000 & 20000 & 20000 & 20000 & 50000 & 50000 & 100000 \\
        \hline
    \end{tabular}
    \caption[Experiment setting: Test data size for different $n$]{Experiment setting: Test data size for different $n$.}
    \label{tab52:data-size}
\end{table}

\subsubsection{Batch Size}
Keskar et al. (\cite{keskar2017large}) show that a larger batch leads to a dramatic degradation in the quality of the model. They investigate the cause for this generalization drop in the large-batch regime and present numerical evidence that supports the view that large-batch methods tend to converge to sharp minimizers of the training and testing functions. It shows that the large batch size converges to the sharp minimum, while the small batch size converges to the flat.

Our experiments support Keskar's view and we find that the loss does not reduce with a big batch size ($\geq 1024$). We set the batch size to 64 through comparative experiments.\\

\subsubsection{Order input}

Arora et al. (\cite{arora2016understanding}) shows that it is hard for ReLU to simulate the function $max\{a,b\}$. A single $max\{a,b\}$ needs two layers and 5 nodes with exact weights and biases. So inputting sorted $\theta_{-i}$ values to the MLP is a necessary and important step to get good results. We order the valuations such that $\theta_1 \geq \theta_2 \geq ... \geq \theta_n$. For redistribution problems, sorted input values will not influence strategy-proofness.\\


\subsubsection{Initialization and Optimizer}

We use Xavier normal initialization for the weights and $Normal(0,0.01)$ for the bias, and use Adam optimizer with the $learning\_rate=0.001$ initially, and decays by 0.98 every 100 steps by Pytorch Scheduler.\\

\subsection{Results}

\subsubsection{Worst-case results}

We compare our result for the worst-case optimal mechanism(GAN+MLP) with the previously proposed mechanisms:
\begin{itemize}
\item SBR: heuristic-based SBR mechanism (\cite{Naroditskiy2012:Redistribution})
\item ABR: heuristic-based ABR mechanism (\cite{Guo2016:Competitive})
\item AMD: mechanisms derived via Automated Mechanism Design (AMD) (\cite{Guo2017:Speed})
\item AO: asymptotically optimal (AO) VCG Redistribution (\cite{Guo2019:Asymptotically})
\item UB: the {\bf conjectured} upper bounds (UB) on
the efficiency ratios (\cite{Naroditskiy2012:Redistribution})\\
\end{itemize}

\newpage
The result in Table \ref{tab:Worst-case} shows that our mechanism achieves better worst-case efficiency ratios than all previous results.\\

\begin{table}[H]
\centering
\begin{tabular}{ccccccc}
\hline
n & SBR & ABR   & AMD   & AO    & GAN + MLP      & UB    \\ \hline
4  &  0.354      & 0.459 & 0.600 & 0.625 & \textbf{0.634} & 0.666 \\
5  &  0.360       & 0.402 & 0.545 & 0.600 & \textbf{0.622} & 0.714 \\
6  &  0.394       & 0.386 & 0.497 & 0.583 & \textbf{0.592} & 0.868 \\
7  &  n too large       & 0.360 & 0.465 & 0.571 & \textbf{0.626} & 0.748 \\
8  &  n too large         & 0.352 & 0.444 & 0.563 & \textbf{0.654} & 0.755 \\
9   &  n too large        & 0.339 & 0.422 & 0.556 & \textbf{0.682} & 0.772 \\
10  &  n too large        & 0.336 & 0.405 & 0.550 & \textbf{0.623} & 0.882 \\ \hline
\end{tabular}
\caption[Experiment result: GAN+MLP Compare with state of the art for worst case]{Experiment result: GAN+MLP Compare with state of the art for worst case.}
\label{tab:Worst-case}
\end{table}

The main downside of our results is that our worst-case is calculated numerically by trying a large number of type profiles. This is a limitation
due to our neural network based approach. (Manisha et al. (\cite{Manisha2018:Learning})'s neural network approach also evaluated the worst-case by randomly generating a large number of type profiles. We showed that our GAN approach is a lot more rigorous compared to simply random profile generation.)
It should be noted that SBR, ABR and AMD's worst-cases were also calculated numerically. (AO's worst-case was derived analytically.)
To test the stability of the worst-case ratios, we experimented with different test sizes. Table \ref{tab51:data-size} shows that for $n=10$ with different test sizes of 10000, 20000 and 100000, $\alpha$ is stable at $0.623$.
\\
\begin{table}[ht]
    \centering
    \begin{tabular}{cccc}
    \hline
        Data size & 10000 & 20000 & 100000 \\\hline
        $\alpha$ & 0.623 & 0.623 & 0.623 \\
        \hline
    \end{tabular}
    \caption[Experiment setting: $\alpha$ for Different test size ($n=10$)]{Experiment setting: $\alpha$ for Different test size ($n=10$).}
    \label{tab51:data-size}
\end{table}

\newpage
\subsubsection{Optimal-in-expectation results}

We evaluate the expectation of $\sum_{i}{h(\theta_{-i})} / S(\theta)$ under our mechanism obtained via MLP+FEED. As defined in our model, for a case with $n$ agents, the theoretical optimal value for $\sum_{i}{h(\theta_{-i})} / S(\theta)$ is $n-1$, so we want the expectation to be as close to $n-1$ as possible from above (to maintain weakly budget-balance). Our results show that the obtained mechanisms are near optimal. For example, we get $E=4.061$ vs $n-1=4$ for $n=5$, $E=5.034$ vs $n-1=5$ for $n=6$.\\

\begin{table}[H]
\centering
\begin{tabular}{cccc}
\hline
MLP+FEED & Uniform(0,1) & Normal (0.5,0.1) & n-1 \\ \hline
n=3      & 2.079        & 2.101    & 2        \\
n=4      & 3.071        & 3.111    & 3       \\
n=5      & 4.061        & 4.142    & 4      \\
n=6      & 5.027        & 5.034    & 5       \\
n=7      & 6.009        & 6.067    & 6       \\
n=8      & 7.008        & 7.023    & 7       \\
n=9      & 8.002        & 8.008    & 8       \\
n=10     & 9.003        & 9.023    & 9       \\ \hline
\end{tabular}
\caption[Experiment result: Our result for optimal-in-expectation scenario ]{Experiment result: Our result for optimal-in-expectation scenario.}
\textbf{$\sum_{i}{h(\theta_{-i})} / S(\theta)$ in Expectation for Different Distributions}
\label{tab:Expectation}
\end{table}

Table \ref{tab:Expectation} shows that for both data generated from uniform distribution and normal distribution, the average $\sum_{i}{h(\theta_{-i})}/S(\theta)$ of our MLP+FEED network is very close to the theoretical optimal value $n-1$, which means that the redistribution function will return the vast majority of the total VCG payment to the agents (particularly for a great number of agents).\\

\newpage
\section{Chapter Summary}

In this chapter, we consider designing optimal redistribution mechanisms for the public project problem under two objectives: worst-case optimal and optimal-in-expectation. With effective technical improvements on existing networks, we train a neural network to design good redistribution functions. We use a GAN network to generate valuation profiles to find the worst case, and feed prior distribution into loss function to get quality gradients for the optimal-in-expectation objective. To deal with large numbers of agents, we study different dimension-reducing methods and supervise the network into the existing manual mechanism as initialization.
Our experiments show that for the worst case, we could find better worst-case mechanisms compared to existing mechanisms, and for expectation, the neural networks can derive near-optimal redistribution mechanisms.

\chapter{Revenue-Maximizing Markets for Zero-Day Exploits} 

\label{Revenue-Maximizing Markets for Zero-Day Exploits} 

In this chapter, we study a mechanism design model called zero-day exploit market.  In such a market, one zero-day exploit (i.e., an exploit that allows cyber attackers to hack into iOS systems) is sold to multiple offender and defenders. In our model, for the defensive side, as long as any defender gains access to the exploit, the exploit is assumed to be immediately fixed, which benefits all defenders. The defensive side of the our model corresponds to a non-excludable public project problem. Otherwise, the model studied in this paper is only very loosely related to the public project model. The main goal of this chapter is to maximize revenue. We propose two numerical solution techniques for tuning the well-studied AMA mechanisms for revenue maximization, one is based on neural networks and the other one is based on evolutionary computation.

\section{Introduction}

Revenue-maximizing markets for zero-day exploits were defined in many research papers (\cite{Guo2016:Revenue,Egelman2013:Markets,Hata2017:Understanding,kanda2017towards}). Guo et al. (\cite{Guo2016:Revenue}) gave clear definitions for zero-day exploit markets. The authors proposed a Linear Programming (LP) based approach for tuning the parameters of AMA mechanisms for the purpose of revenue maximization. However, the authors' approaches had limitations. For example, it cannot handle too many constraints, and it failed to reach optimality in some cases.

Our contributions consist of two machine learning methods (neural networks and evolutionary computation) for optimizing within the AMA mechanism family for revenue maximization. The experiments show that our mechanisms based on these two methods are better than the existing LP mechanisms.  \\

\subsection{Zero-day Exploit Markets}

A zero-day exploit refers to a software bug which has not been disclosed to the public, and is also unknown to the software vendor. The zero-day exploit market has a long history and has been accepted by the security community  (\cite{Egelman2013:Markets}). The market for zero-day exploits is not necessarily a black market. The buyers like software vendors, police or national agencies typically purchase bugs through internal or community-run bug bounty reward programs. It has been widely reported that government agencies use zero-day vulnerabilities to track criminals or for other national security reasons. Some organisations buy exploits to ensure safety for themselves.\\



\subsection{Problem Description}
\label{Model Description6.1}
\cite{Guo2016:Revenue} formally described the zero-day exploit market model. In this model, one exploit can be sold to one or more buyers. The seller is the mechanism designer, who wants to sell the exploit to maximize revenue. For example, the seller can be a cyber security company that sells bugs for profit.\\

\begin{assumption} One exploit is sold over a time frame from 0 to 1 ($[0,1]$). The exploit is available to be traded from time $0$ (zero-day), and $1$ is the moment that the exploit's life ends ({\em e.g.}, due to the end of life of the affected software, or the update of a major service pack).
    (\cite{Guo2016:Revenue})
\label{assumption 1}
\end{assumption}
According to Assumption \ref{assumption 1}, for each buyer $i$, we use $t_i \in [0,1]$ to denote the time agent $i$ receives the information regarding the exploit.\\

\begin{assumption}
    "There are two types of agents (buyers): {\em defenders} and {\em offenders}." (\cite{Guo2016:Revenue})
    \begin{itemize}
        \item A defender is a buyer who would like to fix the exploit.
        \item An offender is a buyer who buys the exploit in order to to utilize it (or attack it).\\
    \end{itemize}
\end{assumption}

For a given exploit, we assume that it can be fixed by any defender. More specifically, once an exploit is received by any defender, it is immediately fixed, rendering it worthless to all offenders. All defenders can enjoy a shared "protected" time interval from the moment the exploit is fixed ($t_{end}$) to $1$.
$t_{end}$ is the earliest time any defender obtains the exploit.  We say $t_{end}$ is the ending time of the exploit.

Buyer $i$'s type is a non-negative value. Function $v_i(t)$ is buyer $i$'s instantaneous valuation at time $t$.

\begin{itemize}
\item If $i$ is an offender, and he/she receives the exploit at $t_i$ ($t_i \in [0,1]$). Recall that the exploit gets fixed at $t_{end}$.  The valuation of the buyer $i$ (offender) is a integral, which equals
\begin{equation}
\int_{t_i}^{t_{end}}v_i(t)dt
\label{equation6.1}
\end{equation}
\\
\item If $i$ is a defender, then her/his value is determined by the interval between $t_{end}$ and the end of the exploit's life cycle, which equals

\begin{equation}
\int_{t_{end}}^{1}v_i(t)dt
\label{equation6.2}
\end{equation}
\\
\end{itemize}

The mechanism should satisfy a few desired properties: strategy-proofness, individual rationality and straight-forwardness. These properties are defined below for zero-day exploit markets:

\begin{definition} {\em Strategy-proofness (SP)}: For any buyer $i$, his/her utility is maximized when revealing $v_i(t)$ truthfully.\end{definition}

\begin{definition} {\em Individual rationality (IR)}: For any buyer $i$,  his/her utility is nonnegative when revealing $v_i(t)$ truthfully.\end{definition}

\begin{definition}
{\em Straight-forwardness (SF)}: A  straight-forward mechanism is defined as follow: before asking for offenders' valuation functions, the mechanism reveals the full details of the exploit to all offenders.\\
\label{SF}
\end{definition}

Here, SF is a property that is specifically introduced for zero-day exploit markets and only for this chapter. An exploit can be regarded as a piece of one-time information. As a result, if the auctioneer discloses the details of the exploit to the buyers before the auction, the buyers may immediately walk away with the information for free.  If the auctioneer does not describe what is being sold, it is hard for the buyers to come up with their valuation functions.

\begin{assumption}
We assume that there are two ways for the seller to describe an exploit: either describe the full details, or describe what can be achieved with the exploit ({\em e.g.}, with this exploit, anyone can seize full control of a Windows 10 system remotely). (\cite{Guo2016:Revenue})
    \begin{itemize}
        \item We assume that it is safe for the seller to disclose what can be achieved with the exploit. That is, the buyers will not be able to derive ``how it is done'' based on ``what can be achieved''. (\cite{Guo2016:Revenue})
        \item If the seller only discloses what can be achieved, then it is difficult for an offender to determine whether the exploit is new, or something she already knows, and thus difficult to come up with their valuation. (\cite{Guo2016:Revenue})
        \item We assume that the defenders are able to come up with valuation functions just based on what can be achieved. This is because all zero-day exploits are by definition unknown to the defenders. (\cite{Guo2016:Revenue})
        \\
    \end{itemize}

\end{assumption}

The above assumption leads to the SF property. It is important to note that SF does not require the disclosure of exploitation details to the defenders prior to their bidding.  If the seller does so, then the defenders can simply fix the exploit and bid $v_i(t)\equiv 0$. Due to IR, the defenders can go away without paying. Offenders are given the details before they bid, but they cannot simply bid $v_i(t)\equiv 0$ to go away without paying, which is due to the following reasoning.

Guo et al. (\cite{Guo2016:Revenue}) used the defenders as a ''THREAT''. That is, if offenders bid low, the auctioneer will disclose the exploit to the defenders early ($t_{end}$ would be smaller). The exploit then becomes less valuable for the offenders according to Equation \ref{equation6.1}. Essentially, the offenders are encouraged to bid/pay more to keep the exploit alive.  The higher they bid, the longer the exploit remains alive.\\

\subsection{Affine Maximizer Auctions Model Description}

For revenue maximization, many researchers developed well performing mechanisms. Myerson (\cite{myerson1981optimal})'s optimal auction is optimal for selling a  single item. For combinatorial auctions, Myerson's technique does not generalize beyond single-parameter settings. Revenue maximizing mechanism design remains an open problem for general combinatorial auctions. Many revenue-boosting techniques were proposed by researchers (\cite{Guo2014:Increasing,Guo2013:Revenue,Guo2015:Social}).
One particular revenue-boosting technique is the Affine Maximizer Auctions (AMA) mechanisms (\cite{Likhodedov2005:Approximating}). The AMA mechanism family is a rich family of mechanisms. AMA mechanisms are all strategy-proof and they are characterized by a set of parameters.  By focusing on the AMA mechanisms, the original zero-day market design problem is transformed into a value optimization problem where the mechanism designer only needs to adjust the AMA parameters.\\

The family of AMA mechanisms is formally defined by Guo et al. (\cite{Guo2016:Revenue}) as follows:
\\
\begin{tcolorbox}
    \begin{center}
        AMA Mechanisms
    \end{center}
\begin{itemize}
    \item Given a type profile $\theta$, the outcome picked is the following:
        \[o^*=\arg\max_{o\in O}\left(\sum_{i=1}^nu_iv_i(\theta_i,o)+a_o\right)\]

    \item Agent $i$'s payment equals:
        \[\frac{\max_{o\in O}\left(\sum_{j\neq i}u_jv_j(\theta_j,o)+a_o\right)
        - \sum_{j\neq i}u_jv_j(\theta_j,o^*)-a_{o^*}}{u_i}\]
        \\
    \end{itemize}
\end{tcolorbox}

Here, $O$ represents the outcome space, $\Theta_i$ represents agent $i$'s type space, and $v_i(\theta_i,o)$ represents agent $i$'s valuation for
outcome $o\in O$ when her type is $\theta_i\in \Theta_i$.
Under the model described in \ref{Model Description6.1}, the outcome space is $[0,1]$. To be more specific, an outcome $o \in [0,1]$ represents when the exploit ends (revealed to the defenders).
Our techniques require that the outcome
space be finite, so the outcome space is discretized into a set ($\{0,\frac{1}{k},\frac{2}{k},\ldots,1\}$). The outcome space size is $| O | = k + 1 $.\\

By focusing on AMA mechanisms defined like the above, we only need to adjust the $u_i$ and the $a_o$, which are the AMA mechanism parameters.

\newpage
\section{Optimizing Affine Maximizer Auctions via Neural Networks}


Mechanism design via neural networks has recently drawn significant attention
  in the algorithmic game theory
  community~ (\cite{Duetting2019:Optimal,Shen2019:Automated,Manisha2018:Learning,Wang2021:Mechanism}).
In the context of our model, the high-level approach of tuning AMA parameters using neural networks is as follows:

\begin{itemize}
    \item Treat the $u_i$ and the $a_i$ as model parameters.
    \item Initialize the model parameters randomly or start from a known mechanism such as the VCG mechanism.
    \item In each learning step, we generate a batch of type profiles based on the prior distribution. We evaluate the current AMA mechanism's average revenue on this batch.
    Parameter gradient is calculated based on this average. Model parameters are adjusted via gradient descent.\\
\end{itemize}

\newpage
The {\em training data} are randomly sampled based on the prior distribution.
In each learning step, we generate a fresh batch of type profiles.
After we finish training, we generate another (much larger) fresh batch of type profiles to be the {\em testing data}.
It should be noted that we never need to reuse any type profile, so there is a separation between training and testing data.\\

One limitation of the neural network approach is that the batch size in
each learning step needs to small ({\em e.g.}, we set the batch size to be $16$).
If the batch size is very large, the time consumption of each learning step gets too long, and it
actually hurts the learning performance.
Obviously, it is insufficient to use only $16$ type profiles to {\em accurately} estimate
a mechanism's expected revenue, but generally speaking, we do not
need every learning step to move toward the correct direction. We only need
that in {\em most} learning steps we are moving toward the correct direction.\\

Due to the batch size limitation, for our neural network based approach, we focus on the case with only two agents (one defender and one offender). A batch of $16$ type profiles can be generated
by drawing $4$ sample types for each agent.\\


With two agents, an AMA mechanism can be expressed as

\[M(u_{offender},u_{defender},a_0,a_1,\ldots,a_k)\]
\\
It is without loss of generality to set $u_{offender}=1$.
Besides the model parameter $u_{defender}$, the only other mechanism parameters
are the $a_i$.
We recall that $a_i$ represents the constant term in the AMA mechanism for outcome $\frac{i}{k}$. That is, the $a_i$ can naturally be represented using
a curve $a(t)$ with $t\in [0,1]$, where $a(t)$ is the constant term for outcome $t$.
The function $a(t)$ can be expressed using the following neural network:
\\
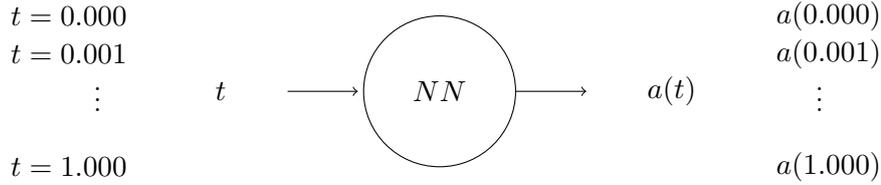
\begin{figure}[h]
\centering
\begin{tikzpicture}
    \centering
	\tikzstyle{unit}=[draw,shape=circle,minimum size=2cm]

	\node[unit](p)[shorten >=1pt,->] at (4,1){$NN$};
	\node(dots) at (-0.5,1){\vdots};

	\draw (0,2)     node[xshift=-25]  {$t=0.000$} ;
	\draw (0,1.5)   node[xshift=-25]  {$t=0.001$} ;

	\draw (0,0)     node[xshift=-25]{$t=1.000$} ;
	\draw (2,1)     node[xshift=-25]  {$t$}[shorten >=2pt,->] -- (p);
	\draw (p)[shorten >=2pt,->] --  (6,1) node[xshift=30] {$a(t)$};

	\draw (10,2)     node[xshift=-25]  {$a(0.000)$} ;
	\draw (10,1.5)   node[xshift=-25]  {$a(0.001)$} ;
	\node(dots) at (9,1){\vdots};
	\draw (10,0)     node[xshift=-25]  {$a(1.000)$} ;

\end{tikzpicture}
\caption{Neural network representation of the $a_i$ when $k=1000$} \label{fig:M1}
\end{figure}
\\

In the context of the above representation, the AMA mechanism's allocation (the ending time) equals

$$t^* = \arg\max\limits_{t \in [0,1]} (\sum_{i\in N}u_i v_i( \theta_i,t)+a(t)  ) $$
$$ N=\left\{offender,defender\right\}$$
\\
Agent i's payment equals $(i\in \left\{offender,defender\right\})$:
$$p_i=\frac{\max \limits_{t \in [0,1]} \left\{\sum_{j \neq i} u_j v_j(\theta_j, t) + a(t) \right\} -  \left\{\sum_{j \neq i} u_j v_j(\theta_j, t^*) + a(t^*) \right\} }{u_i}$$
\\




To maximise the total revenue, the loss function is set to be:
  $$minimise: loss = - (p_{offender} + p_{defender})$$
\\
In training, we set $u_{defender}$ as an {\em autograd}~\footnote{For an overview of automatic differentiation in PyTorch, please refer to (\cite{Paszke2017:Automatic}).} parameter, and we use
a fully connected network to represent the function $a(t)$.
We assume that we have the analytical form of the agents' valuation function $v_i(\theta_i,t)$, which is to facilitate automatic differentiation needed by gradient descent.
We will present the experimental results in Section~\ref{sec:experiments}.

\newpage
\section{Optimizing Affine Maximizer Auctions via Evolutionary Computation}

As discussed in the above section, an AMA mechanism is characterized by
a curve $a(t)$ with $t\in [0,1]$.\footnote{For two agents, besides the curve
$a(t)$, we also have another model parameter $u_{defender}$, which is a single
parameter that can be dealt with separately (using a naive for loop).}
In the previous section on neural networks, we used neural networks to model
the curve $a(t)$. A natural idea is to consider other methods for expressing
curves, such as:

\begin{itemize}

    \item Piece-wise linear segments: We may join $k$ straight-line segments to
        form a curve.  The coordinates for the end points are $(\frac{j}{k},c_j)$ for
        $j=0,1,\ldots,k$.  To optimize for the best piece-wise linear segments,
        we just need to adjust the $c_j$.\\

    \item Polynomial:
        \[a(t)=c_kt^k+c_{k-1}t^{k-1}+\ldots+c_1t+c_0\]
        \\
        To optimize for the best polynomial representation of $a(t)$, again, we just
        need to adjust the $c_j$.\\

    \item Fourier series:
        \[a(t) = \frac{c_0}{2} + \sum_{j=1}^{N} (c_jcos (\frac{2\pi}{p}jt) + c_j' sin (\frac{2\pi}{p}jt)) \]
        To optimize for the best Fourier series representation of $a(t)$, we
        need to adjust the $c_j$ and the $c_j'$.\\

\end{itemize}

For all the above representation models, we use the following genetic algorithm to adjust
the parameters:

\begin{algorithm}[H]
Step 1: Create an initial population of $60$ curves (PopSize = $60$). All initial curves are randomly generated.
    For piece-wise linear segments, the initial random curves are straight lines $a(t)=st+b$ where $s,b$ are drawn
    randomly from $U(-100,100)$. We choose a random slope from $U(-100,100)$ so that the resulting straight lines have similar {\em vertical swings} to the curves obtained via linear
    programming and neural networks.
        For polynomials and Fourier series, all parameters are drawn randomly from $U(-10,10)$.

Step 2:  Evolution:\newline
 \While{Have not reached the $100$-th round}{
 \For {Each Individual} {
        Randomly choose $200$ type profiles to test fitness (revenue)\;
 }

 Sort individuals according to fitness\;
 Selection:\newline
 Keep the top $20$ individuals in terms of fitness (EliteSize = $20$)\;

 Add new individuals through \textbf{Crossover} and \textbf{Mutation} until we reach Popsize\;
\textbf{1. Crossover:} Generate $20$ new curves via standard two-point crossover.\\
\textbf{2. Mutation:} Generate $20$ new curves via perturbing existing curves (with $0.1$ probability, a parameter is increased or decreased by $\delta=0.5$).
}
Step 3: Testing:\newline
 Average over $10000$ random type profiles to choose the best individual.\newline
 \caption{Genetic Algorithm}

\end{algorithm}

We will present the experimental results in Section~\ref{sec:experiments}.

\newpage
\section{Experiments}
\label{sec:experiments}

According to~ (\cite{Greenberg2012:Shopping}), an exploit that
attacks the Chrome browser sells for at most 200k for offensive clients
(USD). According to Google's official bug bounty reward program for the Chrome
browser~ (\cite{Projects2015:Severity}), a serious exploit is priced for at most 15k.
We adopt an experimental setting based on the
numbers above.

There are two agents.
The offender's valuation function is \[v(\theta_O,t) =
\int_0^t\theta_O(1-x)dx\]
\\
$\theta_O$ is drawn uniformly at random from $U(0,
400)$. That is, the offender's valuation for the whole time
interval $[0,1]$ is at most $200$.
The offender gets less and less interested in the exploit as time goes on (the instantaneous valuation gets to $0$ as $1-x$ approaches $0$ when $x$ approaches $1$).

The defender's valuation function is
\[v(\theta_D,t) = \int_t^1\theta_Dxdx\]
\\
$\theta_D$ is drawn uniformly at random from $U(0, 15)$. That is, the
defender's valuation for the whole time
interval $[0,1]$ is at most $15$.
The defender's instantaneous valuation in the exploit does not change over time.\\




\textbf{Optimal revenue: $50.55$}

Since the valuation functions satisfy all the conditions needed for the
single-parameter model introduced in \cite{Guo2016:Revenue}, we are able to derive the revenue-maximizing
mechanism. The ending time is based on the following rule:
$$t^* = \arg\max\limits_{t \in [0,1]}  \{ (2 \times \theta_O - 400)(t-t*t/2) + (2 \times \theta_D - 15)(1-t)  \}$$

\begin{figure}[H]
\centering
\includegraphics[width=0.7\linewidth]{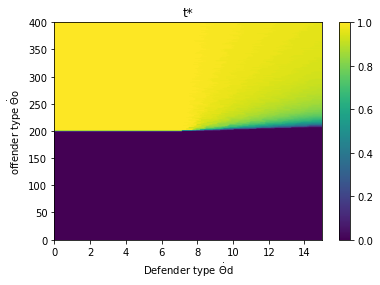}
    \caption[Experiment result: Ending time $t^*$ function]{Experiment result: Ending time $t^*$ function.}
\label{fig:map1}
\textbf{Ending time $t^*$ as a function of type profiles for the optimal mechanism}
\end{figure}

The optimal revenue equals $50.55$.





$$$$
\textbf{Revenue via neural networks: $50.31$}


We use the following setup:

\begin{itemize}
    \item We use a fully connected network with three hidden layers ($200$ nodes per layer) to represent $a(t)$.
    \item We use the Adam optimizer with a learning rate of $0.0001$.
    \item The batch size is set to $16$.
    \item The training set consists of $80000$ randomly generated type profiles.
    \item The testing set consists of $20000$ randomly generated type profiles.
\end{itemize}

The achieved expected revenue equals $50.31$.

\begin{figure}[H]
\centering
\includegraphics[width=.49\linewidth]{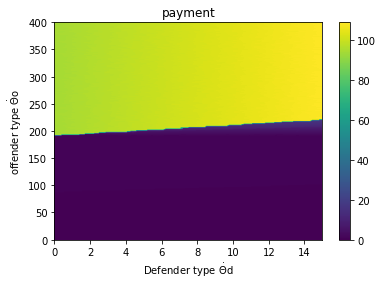}
\includegraphics[width=.49\linewidth]{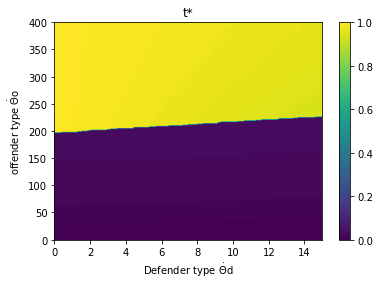}
\includegraphics[width=.6\linewidth]{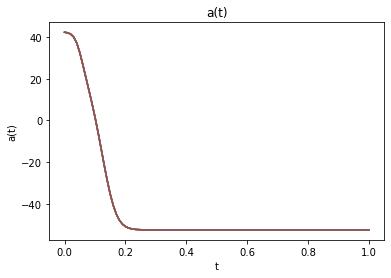}
    \caption[Experiment result: Total payment, ending time $t^*$, and $a(t)$ via neural network]{Experiment result: Total payment, ending time $t^*$, and $a(t)$ via neural network.}
\textbf{Total payment, ending time $t^*$, and $a(t)$ as functions of type profiles
    for the AMA mechanism derived via neural network}
\label{fig:map3}
\end{figure}

\newpage
\subsubsection{Main computational challenges for the neural networks}
Setting aside $u_{defender}$, every AMA mechanism is characterized by a {\bf
curve} $a(t)$, where $t\in [0,1]$. We note that
the main computational bottleneck is due to the learning batch size.  Our
fully-connected network structure with 3 layers and $200$ nodes per layer
is more than {\em
expressive} enough to represent curves.
In our
experiments, we generate $20,000$ type profiles to evaluate an AMA mechanism's
expected revenue. We cannot afford to do this in every learning iteration.
Instead, we use a batch size of $16$ during learning. That is, we use the
average revenue of $16$ randomly generated type profiles to estimate
the parameter gradient.
Certainly, the gradient direction obtained this way is not very accurate.
Fortunately, as
long as the learning rate is small, and as long as most of the time, the
gradient direction is generally correct, then we still are able to successfully
train the neural network. Figure~\ref{fig61} is an illustration of the neural
network training process. As shown in Figure~\ref{fig61}, the average revenue of
$16$ randomly generated type profiles perturbs wildly\footnote{In
Figure~\ref{fig61}, the presented data points for the batch data revenue are the
average revenue for every $100$ batches.}, but the mechanism is still
improving steadily during the training, which shows that a small batch is
good enough for training.

\begin{figure}[H]
\includegraphics[width=0.7\textwidth]{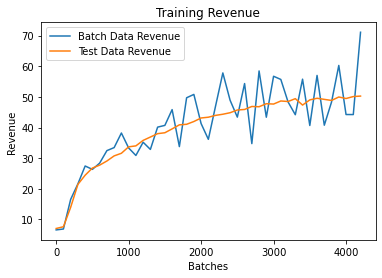} \centering
    \caption[Experiment result: Neural network training process]{Experiment result: Neural network training process.} \centering
\label{fig61}
\end{figure}

\newpage
\textbf{Revenue via evolutionary computation: $47.67$}

Our last numerical technique approximates $a(t)$ using segmented straight lines,
polynomials, and Fourier series. The parameters ({\em i.e.}, polynomial coefficients)
are adjusted according to an evolutionary algorithm (Algorithm~1).

\begin{table}[H]
    \caption[Experiment result: Evolutionary computation's results]{Experiment result: Evolutionary computation's results.}
\begin{tabular}{llllc}
\textit{\textbf{Segmented straight line (50 segments): }} & 47.67             \\
\textit{\textbf{Quartic polynomial: }} & 38.80             \\
\textit{\textbf{Sextic polynomial: }} & 37.20             \\
\textit{\textbf{Fourier series (N=5, p=2) : }} & 44.54             \\
\textit{\textbf{Fourier series (N=30, p=2) : }} & 46.88            \\

    \textbf{Revenue via evolutionary computation: different representations of $a(t)$}
\end{tabular}
\end{table}

We present all the $a(t)$ functions in the figure that follows.

\begin{figure}[!ht]
\centering  
    \subfigure[Segmented straight lines (50 segments)]{
\label{Fig.sub.1}
\includegraphics[width=0.4\textwidth]{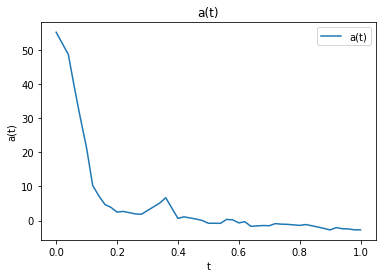}}
\subfigure[Quartic]{
\label{Fig.sub.2}
\includegraphics[width=0.4\textwidth]{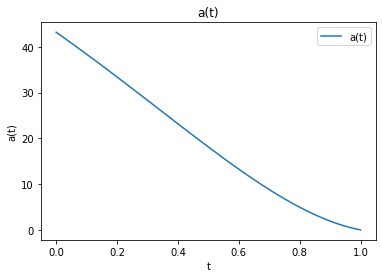}}
\subfigure[Sextic]{
\label{Fig.sub.3}
\includegraphics[width=0.4\textwidth]{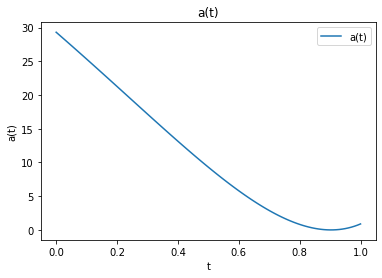}}
\subfigure[Fourier series N=5]{
\label{Fig.sub.4}
\includegraphics[width=0.4\textwidth]{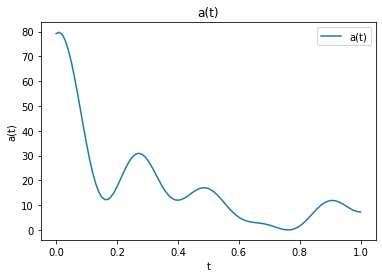}}
\subfigure[Fourier series N=30]{
\label{Fig.sub.5}
\includegraphics[width=0.4\textwidth]{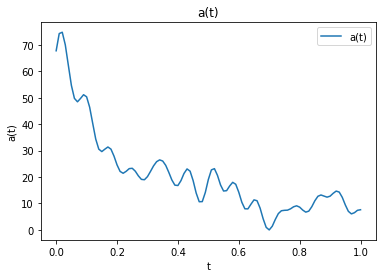}}
\label{Fig.main}
\end{figure}

For Fourier series (N=30), the payment and ending time are as follows:

\begin{figure}[H]
\centering
\includegraphics[width=.49\linewidth]{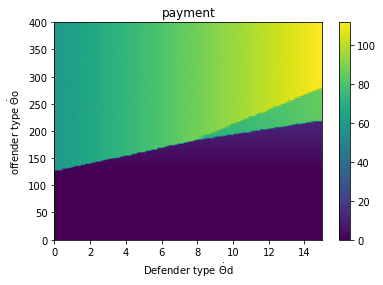}
\includegraphics[width=.49\linewidth]{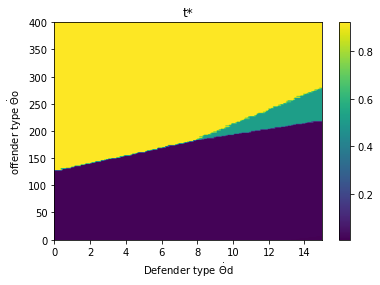}
    \caption[Experiment result: Total payment and ending time $t^*$ via evolutionary computation]{Experiment result: Total payment and ending time $t^*$ via evolutionary computation.}
\textbf{Total payment and ending time $t^*$ as functions of type profiles
    for the AMA mechanism derived via evolutionary computation (Fourier series N=30)}
\label{fig:map4}
\end{figure}

\subsection{Comparison of different AMA solution techniques}

The neural network based approach produces slightly better
result than the linear programming based approach.  On the other hand, the
neural network based approach realistically only works for two agents and
cannot deal with black-box valuation functions, as it requires the valuation
functions' analytical forms for auto differentiation.  The evolutionary
computation technique scales the best if we adopt a representation with small
number of parameters ({\em i.e.}, if we approximate $a(t)$ using a quadratic
polynomial such as $a(t)=c_2t^2+c_1t+c_0$, then we only need to adjust three parameters). The down-side of the evolutionary computation technique is that all the
representations we have tried (polynomials or Fourier series) are not as expressive
as neural networks, so the achieved revenue using the evolutionary computation
technique is slightly worse.


\newpage
\section{Chapter Summary}

In this chapter, we study markets for zero-day exploits from a revenue-maximizing mechanism design perspective. For the defenders, as long as any defender receives the exploit, all defenders are protected. So from the perspective of the defenders, the model studied in this chapter can be viewed as a public project model that is not excludable. Nevertheless, otherwise the model studied in this chapter is not related to a typical public project model.

We adopted the computationally feasible automated mechanism design approach.  We focused on the AMA mechanisms.  To identify an AMA mechanism with high revenue, we need to design an allocation ``curve''.  We propose two numerical solution techniques, one is based on neural networks and the other one is based on evolutionary computation.  All techniques are able to produce near-optimal mechanisms.

\chapter{Conclusion} 

\label{Conclusion} 

\section{Summary}
In this thesis, we used different machine learning methods to design optimal or near-optimal strategy-proof and individual rational mechanisms for public project problems.
Specifically, we addressed the following problems:

\begin{itemize}

\item In Chapter \ref{Mechanism Design for Public Projects via Neural Networks}, we study the public project that is indivisible and binary (e.g., a bridge). Indivisible means that an agent either consumes the project in its whole or is completely excluded. Binary means that this project is built or not built.
We identified a sufficient condition on the prior distribution for the conservative equal costs mechanism to be the optimal strategy-proof and individually rational mechanism.
For the non-excludable model, we designed novel mechanisms, such as dynamic programming, to get optimal results. For the excludable indivisible public project models, we involved several technical innovations that can be applied to mechanism design in general. We interpreted the mechanisms as price-oriented rationing-free (PORF) mechanisms. The experiments showed that our mechanisms are better than previous results and more close to the theoretical upper bound.\\

\item  In Chapter \ref{Public Project with Minimum Expected Release Delay}, we focused on the divisible public projects. In a classic excludable and binary public project model, we study a setting where the mechanism can set different project release times for different agents, which means that for a certain agent, the higher he/she pays, the earlier he/she can use the project. For a small number of agents, we proposed the sequential unanimous mechanisms by extending the existing mechanisms and used evolutionary computation to optimize them. We proposed the single deadline mechanisms which are shown to be asymptotically optimal. The experiments showed that our mechanisms are better than existing mechanisms.\\

\item In Chapter \ref{Redistribution in Public Project Problems via Neural Networks}, we studied the VCG redistribution mechanisms for the classic public project problem. Multi-layer perceptrons (MLP) were used in combination with a carefully designed cost function that takes into consideration of the agents' prior distributions for the optimal-in-expectation objective. We designed generative adversarial networks (GAN + MLP) to find the optimal worst-case VCG redistribution mechanisms. The experiments showed that our mechanisms are very close to the theoretical upper bounds and are better than existing mechanisms.\\

\item In Chapter \ref{Revenue-Maximizing Markets for Zero-Day Exploits},  we studied markets for zero-day exploits from a revenue-maximizing mechanism design perspective. We used a neural network to get the optimal curve that characterizes the optimal Affine Maximizer Auctions (AMA) mechanism. A second technique used evolutionary computation to evolve mathematical expressions for representing the optimal AMA curve. The experiments showed that our neural networks and evolutionary computation based techniques both produce near-optimal revenue.
\end{itemize}


$$$$
\section{Future Work}
In addition to the public project problems addressed in this thesis, we point out the following open problems that we expect to explore in the future:\\
\begin{itemize}

\item Apply our methods to other models: The techniques proposed in this thesis can not only be used in public project problems, but also have the potential to be applied to other economic models. One future direction is to extend our methods to other models such as cake cutting, facility location, and auctions.\\

\item Design better neural network structures for mechanism design: In this thesis, we carefully designed our neural networks to ensure strategy-proofness and individual rationality. However, most of our design focused on the mechanism design aspect, not on the neural networks themselves. That is, neural network is used as an assisting tool and it is heavily guided by manual human inputs. Most of our networks are straight-forward feed-forward fully-connected networks. One future direction is to consider more sophisticated neural network structures such as pointer networks, LSTM, permutation invariant networks, etc.

\item Combinatorial public project model: Through out this thesis, we assumed that there is only one public project. However, in practise, it is common that we face many public projects and the agents face combinatorial decision making. One future research direction is to see to what extent our techniques and results generalise beyond a single public project.

\end{itemize}


\appendix 




\printbibliography[heading=bibintoc]

@STRING{ mar = "March" }

@STRING{ may = "May" }

@STRING{ jun = "June" }

@STRING{ aug = "August" }

@STRING{ oct = "October" }

@inproceedings{tsuruta2014optimal,
  title={Optimal false-name-proof single-item redistribution mechanisms.},
  author={Tsuruta, Shunsuke and Oka, Masaaki and Todo, Taiki and Kawasaki, Yujiro and Guo, Mingyu and Sakurai, Yuko and Yokoo, Makoto},
  booktitle={AAMAS},
  pages={221--228},
  year={2014},
  organization={Citeseer}
}

@inproceedings{kanda2017towards,
  title={Towards understanding an open-source bounty: Analysis of bountysource},
  author={Kanda, Tetsuya and Guo, Mingyu and Hata, Hideaki and Matsumoto, Kenichi},
  booktitle={2017 IEEE 24th International Conference on Software Analysis, Evolution and Reengineering (SANER)},
  pages={577--578},
  year={2017},
  organization={IEEE}
}

@article{brero2020reinforcement,
  title={Reinforcement learning of simple indirect mechanisms},
  author={Brero, Gianluca and Eden, Alon and Gerstgrasser, Matthias and Parkes, David C and Rheingans-Yoo, Duncan},
  journal={arXiv preprint arXiv:2010.01180},
  year={2020}
}

@article{duan2022context,
  title={A Context-Integrated Transformer-Based Neural Network for Auction Design},
  author={Duan, Zhijian and Tang, Jingwu and Yin, Yutong and Feng, Zhe and Yan, Xiang and Zaheer, Manzil and Deng, Xiaotie},
  journal={arXiv preprint arXiv:2201.12489},
  year={2022}
}

@article{rahme2020permutation,
  title={A permutation-equivariant neural network architecture for auction design},
  author={Rahme, Jad and Jelassi, Samy and Bruna, Joan and Weinberg, S Matthew},
  journal={arXiv preprint arXiv:2003.01497},
  pages={4},
  year={2020}
}

@article{rahme2020auction,
  title={Auction learning as a two-player game},
  author={Rahme, Jad and Jelassi, Samy and Weinberg, S Matthew},
  journal={arXiv preprint arXiv:2006.05684},
  year={2020}
}

@inproceedings{golowich2018deep,
  title={Deep Learning for Multi-Facility Location Mechanism Design.},
  author={Golowich, Noah and Narasimhan, Harikrishna and Parkes, David C},
  booktitle={IJCAI},
  pages={261--267},
  year={2018}
}

@inproceedings{feng2018deep,
  title={Deep learning for revenue-optimal auctions with budgets},
  author={Feng, Zhe and Narasimhan, Harikrishna and Parkes, David C},
  booktitle={Proceedings of the 17th International Conference on Autonomous Agents and Multiagent Systems},
  pages={354--362},
  year={2018}
}

@article{curry2020certifying,
  title={Certifying strategyproof auction networks},
  author={Curry, Michael and Chiang, Ping-Yeh and Goldstein, Tom and Dickerson, John},
  journal={Advances in Neural Information Processing Systems},
  volume={33},
  pages={4987--4998},
  year={2020}
}

@article{peri2021preferencenet,
  title={PreferenceNet: Encoding Human Preferences in Auction Design with Deep Learning},
  author={Peri, Neehar and Curry, Michael and Dooley, Samuel and Dickerson, John},
  journal={Advances in Neural Information Processing Systems},
  volume={34},
  year={2021}
}

@article{bichler2021learning,
  title={Learning equilibria in symmetric auction games using artificial neural networks},
  author={Bichler, Martin and Fichtl, Maximilian and Heidekr{\"u}ger, Stefan and Kohring, Nils and Sutterer, Paul},
  journal={Nature Machine Intelligence},
  volume={3},
  number={8},
  pages={687--695},
  year={2021},
  publisher={Nature Publishing Group}
}

@incollection{fox1991genetic,
  title={Genetic operators for sequencing problems},
  author={Fox, BR and McMahon, MB},
  booktitle={Foundations of genetic algorithms},
  volume={1},
  pages={284--300},
  year={1991},
  publisher={Elsevier}
}

@article{martins2007unsupervised,
  title={Unsupervised neural-network-based algorithm for an on-line diagnosis of three-phase induction motor stator fault},
  author={Martins, Joao F and Pires, V Ferno and Pires, Armando J},
  journal={IEEE Transactions on Industrial Electronics},
  volume={54},
  number={1},
  pages={259--264},
  year={2007},
  publisher={IEEE}
}

@article{curry2022differentiable,
  title={Differentiable Economics for Randomized Affine Maximizer Auctions},
  author={Curry, Michael and Sandholm, Tuomas and Dickerson, John},
  journal={arXiv preprint arXiv:2202.02872},
  year={2022}
}

@article{yi2019generative,
  title={Generative adversarial network in medical imaging: A review},
  author={Yi, Xin and Walia, Ekta and Babyn, Paul},
  journal={Medical image analysis},
  volume={58},
  pages={101552},
  year={2019},
  publisher={Elsevier}
}

@inproceedings{yu2017seqgan,
  title={Seqgan: Sequence generative adversarial nets with policy gradient},
  author={Yu, Lantao and Zhang, Weinan and Wang, Jun and Yu, Yong},
  booktitle={Proceedings of the AAAI conference on artificial intelligence},
  volume={31},
  number={1},
  year={2017}
}

@article{goodfellow2014generative,
  title={Generative adversarial nets},
  author={Goodfellow, Ian and Pouget-Abadie, Jean and Mirza, Mehdi and Xu, Bing and Warde-Farley, David and Ozair, Sherjil and Courville, Aaron and Bengio, Yoshua},
  journal={Advances in neural information processing systems},
  volume={27},
  year={2014}
}

@article{wadhwa2020failure,
  title={Failure of Equilibrium Selection Methods for Multiple-Principal, Multiple-Agent Problems with Non-Rivalrous Goods: An Analysis of Data Markets},
  author={Wadhwa, Samir and Dong, Roy},
  journal={arXiv preprint arXiv:2004.00196},
  year={2020}
}

@inproceedings{lipton2004approximately,
  title={On approximately fair allocations of indivisible goods},
  author={Lipton, Richard J and Markakis, Evangelos and Mossel, Elchanan and Saberi, Amin},
  booktitle={Proceedings of the 5th ACM Conference on Electronic Commerce},
  pages={125--131},
  year={2004}
}

@article{ott2006excludable,
  title={Excludable and non-excludable public inputs: Consequences for economic growth},
  author={Ott, Ingrid and Turnovsky, Stephen J},
  journal={Economica},
  volume={73},
  number={292},
  pages={725--748},
  year={2006},
  publisher={Wiley Online Library}
}

@article{ma1994strategy,
  title={Strategy-proofness and the strict core in a market with indivisibilities},
  author={Ma, Jinpeng},
  journal={International Journal of Game Theory},
  volume={23},
  number={1},
  pages={75--83},
  year={1994},
  publisher={Springer}
}

@article{groves1973incentives,
  title={Incentives in teams},
  author={Groves, Theodore},
  journal={Econometrica: Journal of the Econometric Society},
  pages={617--631},
  year={1973},
  publisher={JSTOR}
}

@article{clarke1971multipart,
  title={Multipart pricing of public goods},
  author={Clarke, Edward H},
  journal={Public choice},
  pages={17--33},
  year={1971},
  publisher={JSTOR}
}

@inproceedings{guo2008undominated,
  title={Undominated VCG redistribution mechanisms},
  author={Guo, Mingyu and Conitzer, Vincent},
  booktitle={Proceedings of the 7th international joint conference on Autonomous agents and multiagent systems-Volume 2},
  pages={1039--1046},
  year={2008},
  organization={Citeseer}
}

@article{mookherjee1992dominant,
  title={Dominant strategy implementation of Bayesian incentive compatible allocation rules},
  author={Mookherjee, Dilip and Reichelstein, Stefan},
  journal={Journal of Economic Theory},
  volume={56},
  number={2},
  pages={378--399},
  year={1992},
  publisher={Elsevier}
}

@inproceedings{narasimhan2016automated,
  title={Automated mechanism design without money via machine learning},
  author={Narasimhan, Harikrishna and Agarwal, Shivani Brinda and Parkes, David C},
  booktitle={Proceedings of the 25th International Joint Conference on Artificial Intelligence},
  year={2016}
}

@article{hurwicz1960optimality,
  title={Optimality and informational efficiency in resource allocation processes},
  author={Hurwicz, Leonid},
  journal={Mathematical methods in the social sciences},
  year={1960},
  publisher={Stanford University Press}
}

@article{harsanyi1967games,
  title={Games with incomplete information played by “Bayesian” players, I--III Part I. The basic model},
  author={Harsanyi, John C},
  journal={Management science},
  volume={14},
  number={3},
  pages={159--182},
  year={1967},
  publisher={INFORMS}
}

@inproceedings{phelps2003applying,
  title={Applying genetic programming to economic mechanism design: evolving a pricing rule for a continuous double auction},
  author={Phelps, Steve and McBurney, Peter and Parsons, Simon and Sklar, Elizabeth},
  booktitle={Proceedings of the second international joint conference on Autonomous agents and multiagent systems},
  pages={1096--1097},
  year={2003}
}

@article{andrews1994genetic,
  title={Genetic programming for the acquisition of double auction market strategies},
  author={Andrews, Martin},
  journal={Advances in genetic programming},
  year={1994},
  publisher={The MIT Press}
}

@inproceedings{conitzer2007incremental,
  title={Incremental Mechanism Design.},
  author={Conitzer, Vincent and Sandholm, Tuomas},
  booktitle={IJCAI},
  pages={1251--1256},
  year={2007}
}

@inproceedings{cliff2002evolution,
  title={Evolution of market mechanism through a continuous space of auction-types},
  author={Cliff, Dave},
  booktitle={Proceedings of the 2002 Congress on Evolutionary Computation. CEC'02 (Cat. No. 02TH8600)},
  volume={2},
  pages={2029--2034},
  year={2002},
  organization={IEEE}
}

@inproceedings{phelps2002co,
  title={Co-evolutionary auction mechanism design: A preliminary report},
  author={Phelps, Steve and McBurney, Peter and Parsons, Simon and Sklar, Elizabeth},
  booktitle={International Workshop on Agent-Mediated Electronic Commerce},
  pages={123--142},
  year={2002},
  organization={Springer}
}

@inproceedings{cliff1998evolving,
  title={Evolving parameter sets for adaptive trading agents in continuous double-auction markets},
  author={Cliff, Dave},
  booktitle={Agents-98 workshop on artificial societies and computational markets, Minneapolis, MN},
  pages={38--47},
  year={1998}
}

@inproceedings{cai2012algorithmic,
  title={An algorithmic characterization of multi-dimensional mechanisms},
  author={Cai, Yang and Daskalakis, Constantinos and Weinberg, S Matthew},
  booktitle={Proceedings of the forty-fourth annual ACM symposium on Theory of computing},
  pages={459--478},
  year={2012}
}

@inproceedings{yao2014n,
  title={An n-to-1 bidder reduction for multi-item auctions and its applications},
  author={Yao, Andrew Chi-Chih},
  booktitle={Proceedings of the twenty-sixth annual ACM-SIAM symposium on Discrete algorithms},
  pages={92--109},
  year={2014},
  organization={SIAM}
}

@inproceedings{hartline2009simple,
  title={Simple versus optimal mechanisms},
  author={Hartline, Jason D and Roughgarden, Tim},
  booktitle={Proceedings of the 10th ACM conference on Electronic commerce},
  pages={225--234},
  year={2009}
}

@article{hart2017approximate,
  title={Approximate revenue maximization with multiple items},
  author={Hart, Sergiu and Nisan, Noam},
  journal={Journal of Economic Theory},
  volume={172},
  pages={313--347},
  year={2017},
  publisher={Elsevier}
}

@article{alaei2012bayesian,
  title={Bayesian optimal auctions via multi-to single-agent reduction},
  author={Alaei, Saeed and Fu, Hu and Haghpanah, Nima and Hartline, Jason and Malekian, Azarakhsh},
  journal={arXiv preprint arXiv:1203.5099},
  year={2012}
}

@article{shen2018automated,
  title={Automated mechanism design via neural networks},
  author={Shen, Weiran and Tang, Pingzhong and Zuo, Song},
  journal={arXiv preprint arXiv:1805.03382},
  year={2018}
}

@inproceedings{zhan2020incentive,
  title={An incentive mechanism design for efficient edge learning by deep reinforcement learning approach},
  author={Zhan, Yufeng and Zhang, Jiang},
  booktitle={IEEE INFOCOM 2020-IEEE Conference on Computer Communications},
  pages={2489--2498},
  year={2020},
  organization={IEEE}
}

@article{carroll2019robustness,
  title={Robustness in mechanism design and contracting},
  author={Carroll, Gabriel},
  journal={Annual Review of Economics},
  volume={11},
  pages={139--166},
  year={2019},
  publisher={Annual Reviews}
}

@article{maghsoudlou2017multi,
  title={Multi-skilled project scheduling with level-dependent rework risk; three multi-objective mechanisms based on cuckoo search},
  author={Maghsoudlou, Hamidreza and Afshar-Nadjafi, Behrouz and Niaki, Seyed Taghi Akhavan},
  journal={Applied Soft Computing},
  volume={54},
  pages={46--61},
  year={2017},
  publisher={Elsevier}
}

@book{krishna2009auction,
  title={Auction theory},
  author={Krishna, Vijay},
  year={2009},
  publisher={Academic press}
}

@inproceedings{sharma2017era,
  title={Era of deep neural networks: A review},
  author={Sharma, Poonam and Singh, Akansha},
  booktitle={2017 8th International Conference on Computing, Communication and Networking Technologies (ICCCNT)},
  pages={1--5},
  year={2017},
  organization={IEEE}
}

@inproceedings{balcan2005mechanism,
  title={Mechanism design via machine learning},
  author={Balcan, M-F and Blum, Avrim and Hartline, Jason D and Mansour, Yishay},
  booktitle={46th Annual IEEE Symposium on Foundations of Computer Science (FOCS'05)},
  pages={605--614},
  year={2005},
  organization={IEEE}
}

@inproceedings{bhattacharya2010budget,
  title={Budget constrained auctions with heterogeneous items},
  author={Bhattacharya, Sayan and Goel, Gagan and Gollapudi, Sreenivas and Munagala, Kamesh},
  booktitle={Proceedings of the forty-second ACM symposium on Theory of computing},
  pages={379--388},
  year={2010}
}

@inproceedings{jurca2006minimum,
  title={Minimum payments that reward honest reputation feedback},
  author={Jurca, Radu and Faltings, Boi},
  booktitle={Proceedings of the 7th ACM Conference on Electronic Commerce},
  pages={190--199},
  year={2006}
}

@incollection{constantin2007revenue,
  title={On revenue-optimal dynamic auctions for bidders with interdependent values},
  author={Constantin, Florin and Parkes, David C},
  booktitle={Agent-Mediated Electronic Commerce and Trading Agent Design and Analysis},
  pages={1--15},
  year={2007},
  publisher={Springer}
}

@article{mezzetti2004mechanism,
  title={Mechanism design with interdependent valuations: Efficiency},
  author={Mezzetti, Claudio},
  journal={Econometrica},
  volume={72},
  number={5},
  pages={1617--1626},
  year={2004},
  publisher={Wiley Online Library}
}

@book{mohri2018foundations,
  title={Foundations of machine learning},
  author={Mohri, Mehryar and Rostamizadeh, Afshin and Talwalkar, Ameet},
  year={2018},
  publisher={MIT press}
}

@article{myerson1981optimal,
  title={Optimal auction design},
  author={Myerson, Roger B},
  journal={Mathematics of operations research},
  volume={6},
  number={1},
  pages={58--73},
  year={1981},
  publisher={INFORMS}
}

@article{goodfellow2020generative,
  title={Generative adversarial networks},
  author={Goodfellow, Ian and Pouget-Abadie, Jean and Mirza, Mehdi and Xu, Bing and Warde-Farley, David and Ozair, Sherjil and Courville, Aaron and Bengio, Yoshua},
  journal={Communications of the ACM},
  volume={63},
  number={11},
  pages={139--144},
  year={2020},
  publisher={ACM New York, NY, USA}
}

@article{yokoo2002price,
  title={Price-oriented, Rationing-free Protocol: Guideline for Designing Strategy/False-name Proof Auction Protocols},
  author={Yokoo, Makoto and Sakurai, Yuko and Terada, Kenji},
  year={2002}
}

@techreport{moulin2005price,
  title={The price of anarchy of serial cost sharing and other methods},
  author={Moulin, Herv{\'e}},
  year={2005},
  institution={Citeseer}
}

@inproceedings{dash2004trust,
  title={Trust-based mechanism design},
  author={Dash, Rajdeep K and Ramchurn, Sarvapali D and Jennings, Nicholas R},
  booktitle={Proceedings of the Third International Joint Conference on Autonomous Agents and Multiagent Systems, 2004. AAMAS 2004.},
  pages={748--755},
  year={2004},
  organization={IEEE}
}

@article{azevedo2019strategy,
  title={Strategy-proofness in the large},
  author={Azevedo, Eduardo M and Budish, Eric},
  journal={The Review of Economic Studies},
  volume={86},
  number={1},
  pages={81--116},
  year={2019},
  publisher={Oxford University Press}
}

@article{matsushima2007mechanism,
  title={Mechanism design with side payments: Individual rationality and iterative dominance},
  author={Matsushima, Hitoshi},
  journal={Journal of Economic Theory},
  volume={133},
  number={1},
  pages={1--30},
  year={2007},
  publisher={Elsevier}
}

@article{sen2007theory,
  title={The theory of mechanism design: An overview},
  author={Sen, Arunava},
  journal={Economic and Political Weekly},
  pages={8--13},
  year={2007},
  publisher={JSTOR}
}

@article{zhao2014multi,
  title={Multi-scale multi-mechanism design of tough hydrogels: building dissipation into stretchy networks},
  author={Zhao, Xuanhe},
  journal={Soft matter},
  volume={10},
  number={5},
  pages={672--687},
  year={2014},
  publisher={Royal Society of Chemistry}
}

@article{zou2009double,
  title={Double-sided auction mechanism design in electricity based on maximizing social welfare},
  author={Zou, Xiaoyan},
  journal={Energy Policy},
  volume={37},
  number={11},
  pages={4231--4239},
  year={2009},
  publisher={Elsevier}
}

@article{manelli2007multidimensional,
  title={Multidimensional mechanism design: Revenue maximization and the multiple-good monopoly},
  author={Manelli, Alejandro M and Vincent, Daniel R},
  journal={Journal of Economic theory},
  volume={137},
  number={1},
  pages={153--185},
  year={2007},
  publisher={Elsevier}
}

@article{armstrong1996multiproduct,
  title={Multiproduct nonlinear pricing},
  author={Armstrong, Mark},
  journal={Econometrica: Journal of the Econometric Society},
  pages={51--75},
  year={1996},
  publisher={JSTOR}
}

@article{pavlov2011optimal,
  title={Optimal mechanism for selling two goods},
  author={Pavlov, Gregory},
  journal={The BE Journal of Theoretical Economics},
  volume={11},
  number={1},
  year={2011},
  publisher={De Gruyter}
}

@article{phelps2010evolutionary,
  title={Evolutionary mechanism design: a review},
  author={Phelps, Steve and McBurney, Peter and Parsons, Simon},
  journal={Autonomous agents and multi-agent systems},
  volume={21},
  number={2},
  pages={237--264},
  year={2010},
  publisher={Springer}
}

@article{roughgarden2010algorithmic,
  title={Algorithmic game theory},
  author={Roughgarden, Tim},
  journal={Communications of the ACM},
  volume={53},
  number={7},
  pages={78--86},
  year={2010},
  publisher={ACM New York, NY, USA}
}

@inproceedings{sandholm2003automated,
  title={Automated mechanism design: A new application area for search algorithms},
  author={Sandholm, Tuomas},
  booktitle={International Conference on Principles and Practice of Constraint Programming},
  pages={19--36},
  year={2003},
  organization={Springer}
}

@article{arora2016understanding,
  title={Understanding deep neural networks with rectified linear units},
  author={Arora, Raman and Basu, Amitabh and Mianjy, Poorya and Mukherjee, Anirbit},
  journal={arXiv preprint arXiv:1611.01491},
  year={2016}
}

@article{schulman2015gradient,
  title={Gradient estimation using stochastic computation graphs},
  author={Schulman, John and Heess, Nicolas and Weber, Theophane and Abbeel, Pieter},
  journal={arXiv preprint arXiv:1506.05254},
  year={2015}
}

@article{keskar2017large,
  title={On large-batch training for deep learning: Generalization gap and sharp minima},
  author={Keskar, Nitish Shirish and Mudigere, Dheevatsa and Nocedal, Jorge and Smelyanskiy, Mikhail and Tang, Ping Tak Peter},
  journal={ICLR 2017},
  year={2016}
}

@inproceedings{dutting2019optimal,
  title={Optimal auctions through deep learning},
  author={D{\"u}tting, Paul and Feng, Zhe and Narasimhan, Harikrishna and Parkes, David and Ravindranath, Sai Srivatsa},
  booktitle={International Conference on Machine Learning},
  pages={1706--1715},
  year={2019},
  organization={PMLR}
}

@article{kaiser2007athenian,
  title={The Athenian trierarchy: Mechanism design for the private provision of public goods},
  author={Kaiser, Brooks A},
  journal={The Journal of economic history},
  volume={67},
  number={2},
  pages={445--480},
  year={2007},
  publisher={Cambridge University Press}
}

@inproceedings{Faltings2005:Budget-Balanced,
	address = {Berlin},
	author = {Faltings, Boi},
	booktitle = {Lecture notes in computer science},
	copyright = {2008 INIST-CNRS},
	date-modified = {2021-06-15 16:15:15 +0800},
	isbn = {9783540297376},
	issn = {0302-9743},
	keywords = {Applied sciences ; Software ; Computer systems and distributed systems. User interface ; Exact sciences and technology ; Computer science; control theory; systems},
	language = {eng},
	pages = {30-43},
	publisher = {Springer},
	title = {A Budget-Balanced, Incentive-Compatible Scheme for Social Choice},
	year = {2005}}

@InProceedings{Hata2017:Understanding,
  author    = {Hideaki Hata and Mingyu Guo and M. Ali Babar},
  title     = {Understanding the Heterogeneity of Contributors in Bug Bounty Programs},
  booktitle = {2017 {ACM/IEEE} International Symposium on Empirical Software Engineering and Measurement, {ESEM} 2017, Toronto, ON, Canada, November 9-10, 2017},
  year      = {2017},
  editor    = {Ayse Bener and Burak Turhan and Stefan Biffl},
  pages     = {223--228},
  publisher = {{IEEE}},
  bibsource = {dblp computer science bibliography, https://dblp.org},
  biburl    = {https://dblp.org/rec/bib/conf/esem/HataGB17},
}

@InProceedings{Guo2017:Optimizing,
  author    = {Mingyu Guo and Hideaki Hata and M. Ali Babar},
  title     = {Optimizing Affine Maximizer Auctions via Linear Programming: An Application to Revenue Maximizing Mechanism Design for Zero-Day Exploits Markets},
  booktitle = {{PRIMA} 2017: Principles and Practice of Multi-Agent Systems - 20th International Conference, Nice, France, October 30 - November 3, 2017, Proceedings},
  year      = {2017},
  pages     = {280--292},
}

@InProceedings{Guo2017:Speed,
  author    = {Mingyu Guo and Hong Shen},
  title     = {Speed up Automated Mechanism Design by Sampling Worst-Case Profiles: An Application to Competitive {VCG} Redistribution Mechanism for Public Project Problem},
  booktitle = {{PRIMA} 2017: Principles and Practice of Multi-Agent Systems - 20th International Conference, Nice, France, October 30 - November 3, 2017, Proceedings},
  year      = {2017},
  volume    = {10621},
  series    = {Lecture Notes in Computer Science},
  pages     = {127--142},
  publisher = {Springer},
  bibsource = {dblp computer science bibliography, https://dblp.org},
  biburl    = {https://dblp.org/rec/bib/conf/prima/GuoS17},
}

@InProceedings{Guo2016:Competitive,
  author    = {Mingyu Guo},
  title     = {Competitive {VCG} Redistribution Mechanism for Public Project Problem},
  booktitle = {{PRIMA} 2016: Princiles and Practice of Multi-Agent Systems - 19th International Conference, Phuket, Thailand, August 22-26, 2016, Proceedings},
  year      = {2016},
  volume    = {9862},
  series    = {Lecture Notes in Computer Science},
  pages     = {279--294},
  publisher = {Springer},
  bibsource = {dblp computer science bibliography, https://dblp.org},
  biburl    = {https://dblp.org/rec/bib/conf/prima/Guo16},
}

@InProceedings{Guo2016:Revenue,
  author    = {Mingyu Guo and Hideaki Hata and M. Ali Babar},
  title     = {Revenue Maximizing Markets for Zero-Day Exploits},
  booktitle = {{PRIMA} 2016: Princiles and Practice of Multi-Agent Systems - 19th International Conference, Phuket, Thailand, August 22-26, 2016, Proceedings},
  year      = {2016},
  pages     = {247--260},
}

@InProceedings{Guo2015:Social,
  author    = {Mingyu Guo and Hong Shen and Taiki Todo and Yuko Sakurai and Makoto Yokoo},
  title     = {Social Decision with Minimal Efficiency Loss: An Automated Mechanism Design Approach},
  booktitle = {Proceedings of the 2015 International Conference on Autonomous Agents and Multiagent Systems, {AAMAS} 2015, Istanbul, Turkey, May 4-8, 2015},
  year      = {2015},
  editor    = {Gerhard Weiss and Pinar Yolum and Rafael H. Bordini and Edith Elkind},
  pages     = {347--355},
  publisher = {{ACM}},
  bibsource = {dblp computer science bibliography, https://dblp.org},
  biburl    = {https://dblp.org/rec/bib/conf/atal/GuoSTSY15},
  url       = {http://dl.acm.org/citation.cfm?id=2772925},
}

@Article{Guo2014:Better,
  author    = {Mingyu Guo and Vincent Conitzer},
  title     = {Better redistribution with inefficient allocation in multi-unit auctions},
  journal   = {Artificial Intelligence},
  year      = {2014},
  volume    = {216},
  pages     = {287--308},
  bibsource = {dblp computer science bibliography, https://dblp.org},
  biburl    = {https://dblp.org/rec/bib/journals/ai/GuoC14},
}

@InProceedings{Guo2014:Increasing,
  author    = {Mingyu Guo and Argyrios Deligkas and Rahul Savani},
  title     = {Increasing {VCG} Revenue by Decreasing the Quality of Items},
  booktitle = {Proceedings of the Twenty-Eighth {AAAI} Conference on Artificial Intelligence, July 27 -31, 2014, Qu{\'{e}}bec City, Qu{\'{e}}bec, Canada.},
  year      = {2014},
  editor    = {Carla E. Brodley and Peter Stone},
  pages     = {705--711},
  publisher = {{AAAI} Press},
  bibsource = {dblp computer science bibliography, https://dblp.org},
  biburl    = {https://dblp.org/rec/bib/conf/aaai/GuoDS14},
  url       = {http://www.aaai.org/ocs/index.php/AAAI/AAAI14/paper/view/8186},
}

@InProceedings{Guo2013:Revenue,
  author    = {Mingyu Guo and Argyrios Deligkas},
  title     = {Revenue Maximization via Hiding Item Attributes},
  booktitle = {{IJCAI} 2013, Proceedings of the 23rd International Joint Conference on Artificial Intelligence, Beijing, China, August 3-9, 2013},
  year      = {2013},
  editor    = {Francesca Rossi},
  pages     = {157--163},
  publisher = {{IJCAI/AAAI}},
  bibsource = {dblp computer science bibliography, https://dblp.org},
  biburl    = {https://dblp.org/rec/bib/conf/ijcai/GuoD13},
  url       = {http://www.aaai.org/ocs/index.php/IJCAI/IJCAI13/paper/view/6909},
}

@InProceedings{Guo2012:Worst,
  author    = {Mingyu Guo},
  title     = {Worst-case optimal redistribution of {VCG} payments in heterogeneous-item auctions with unit demand},
  booktitle = {International Conference on Autonomous Agents and Multiagent Systems, {AAMAS} 2012, Valencia, Spain, June 4-8, 2012 {(3} Volumes)},
  year      = {2012},
  editor    = {Wiebe van der Hoek and Lin Padgham and Vincent Conitzer and Michael Winikoff},
  pages     = {745--752},
  publisher = {{IFAAMAS}},
  bibsource = {dblp computer science bibliography, https://dblp.org},
  biburl    = {https://dblp.org/rec/bib/conf/aamas/Guo12},
  url       = {http://dl.acm.org/citation.cfm?id=2343803},
}

@InProceedings{Naroditskiy2012:Redistribution,
  author    = {Victor Naroditskiy and Mingyu Guo and Lachlan Dufton and Maria Polukarov and Nicholas R. Jennings},
  title     = {Redistribution of {VCG} Payments in Public Project Problems},
  booktitle = {Internet and Network Economics - 8th International Workshop, {WINE} 2012, Liverpool, UK, December 10-12, 2012. Proceedings},
  year      = {2012},
  editor    = {Paul W. Goldberg},
  volume    = {7695},
  series    = {Lecture Notes in Computer Science},
  pages     = {323--336},
  publisher = {Springer},
  bibsource = {dblp computer science bibliography, https://dblp.org},
  biburl    = {https://dblp.org/rec/bib/conf/wine/NaroditskiyGDPJ12},
}

@InProceedings{Guo2011:VCG,
  author    = {Mingyu Guo},
  title     = {{VCG} Redistribution with Gross Substitutes},
  booktitle = {Proceedings of the Twenty-Fifth {AAAI} Conference on Artificial Intelligence, {AAAI} 2011, San Francisco, California, USA, August 7-11, 2011},
  year      = {2011},
  editor    = {Wolfram Burgard and Dan Roth},
  publisher = {{AAAI} Press},
  bibsource = {dblp computer science bibliography, https://dblp.org},
  biburl    = {https://dblp.org/rec/bib/conf/aaai/Guo11},
  url       = {http://www.aaai.org/ocs/index.php/AAAI/AAAI11/paper/view/3733},
}

@InProceedings{Guo2011:Budget,
  author    = {Mingyu Guo and Victor Naroditskiy and Vincent Conitzer and Amy Greenwald and Nicholas R. Jennings},
  title     = {Budget-Balanced and Nearly Efficient Randomized Mechanisms: Public Goods and beyond},
  booktitle = {Internet and Network Economics - 7th International Workshop, {WINE} 2011, Singapore, December 11-14, 2011. Proceedings},
  year      = {2011},
  editor    = {Ning Chen and Edith Elkind and Elias Koutsoupias},
  volume    = {7090},
  series    = {Lecture Notes in Computer Science},
  pages     = {158--169},
  publisher = {Springer},
  bibsource = {dblp computer science bibliography, https://dblp.org},
  biburl    = {https://dblp.org/rec/bib/conf/wine/GuoNCGJ11},
}

@article{anand1995foundations,
  title={Foundations of rational choice under risk},
  author={Anand, Paul and others},
  journal={OUP Catalogue},
  year={1995},
  publisher={Oxford University Press}
}

@article{yokoo2004effect,
  title={The effect of false-name bids in combinatorial auctions: New fraud in Internet auctions},
  author={Yokoo, Makoto and Sakurai, Yuko and Matsubara, Shigeo},
  journal={Games and Economic Behavior},
  volume={46},
  number={1},
  pages={174--188},
  year={2004},
  publisher={Elsevier}
}

@article{samuelson1954pure,
  title={The pure theory of public expenditure},
  author={Samuelson, Paul A},
  journal={The review of economics and statistics},
  pages={387--389},
  year={1954},
  publisher={JSTOR}
}

@InProceedings{Guo2010:Computationally,
  author    = {Mingyu Guo and Vincent Conitzer},
  title     = {Computationally Feasible Automated Mechanism Design: General Approach and Case Studies},
  booktitle = {Proceedings of the Twenty-Fourth {AAAI} Conference on Artificial Intelligence, {AAAI} 2010, Atlanta, Georgia, USA, July 11-15, 2010},
  year      = {2010},
  editor    = {Maria Fox and David Poole},
  publisher = {{AAAI} Press},
  bibsource = {dblp computer science bibliography, https://dblp.org},
  biburl    = {https://dblp.org/rec/bib/conf/aaai/GuoC10},
  url       = {http://www.aaai.org/ocs/index.php/AAAI/AAAI10/paper/view/1868},
}

@Article{Guo2009:Worst,
  author    = {Mingyu Guo and Vincent Conitzer},
  title     = {Worst-case optimal redistribution of {VCG} payments in multi-unit auctions},
  journal   = {Games and Economic Behavior},
  year      = {2009},
  volume    = {67},
  number    = {1},
  pages     = {69--98},
  bibsource = {dblp computer science bibliography, https://dblp.org},
  biburl    = {https://dblp.org/rec/bib/journals/geb/GuoC09},
}

@InProceedings{Likhodedov2005:Approximating,
  author    = {Anton Likhodedov and Tuomas Sandholm},
  title     = {Approximating Revenue-Maximizing Combinatorial Auctions},
  booktitle = {Proceedings, The Twentieth National Conference on Artificial Intelligence and the Seventeenth Innovative Applications of Artificial Intelligence Conference, July 9-13, 2005, Pittsburgh, Pennsylvania, {USA}},
  year      = {2005},
  editor    = {Manuela M. Veloso and Subbarao Kambhampati},
  pages     = {267--274},
  publisher = {{AAAI} Press / The {MIT} Press},
  bibsource = {dblp computer science bibliography, https://dblp.org},
  biburl    = {https://dblp.org/rec/bib/conf/aaai/LikhodedovS05},
  url       = {http://www.aaai.org/Library/AAAI/2005/aaai05-043.php},
}

@InProceedings{Likhodedov2004:Methods,
  author    = {Anton Likhodedov and Tuomas Sandholm},
  title     = {Methods for Boosting Revenue in Combinatorial Auctions},
  booktitle = {Proceedings of the Nineteenth National Conference on Artificial Intelligence, Sixteenth Conference on Innovative Applications of Artificial Intelligence, July 25-29, 2004, San Jose, California, {USA}},
  year      = {2004},
  editor    = {Deborah L. McGuinness and George Ferguson},
  pages     = {232--237},
  publisher = {{AAAI} Press / The {MIT} Press},
  bibsource = {dblp computer science bibliography, https://dblp.org},
  biburl    = {https://dblp.org/rec/bib/conf/aaai/LikhodedovS04},
  url       = {http://www.aaai.org/Library/AAAI/2004/aaai04-037.php},
}

@Article{Clippel2014:Destroy,
  author    = {Geoffroy de Clippel and Victor Naroditskiy and Maria Polukarov and Amy Greenwald and Nicholas R. Jennings},
  title     = {Destroy to save},
  journal   = {Games and Economic Behavior},
  year      = {2014},
  volume    = {86},
  pages     = {392--404},
  bibsource = {dblp computer science bibliography, https://dblp.org},
  biburl    = {https://dblp.org/rec/bib/journals/geb/ClippelNPGJ14},
}

@Article{Gujar2011:Redistribution,
  author    = {Sujit Gujar and Y. Narahari},
  title     = {Redistribution Mechanisms for Assignment of Heterogeneous Objects},
  journal   = {J. Artif. Intell. Res.},
  year      = {2011},
  volume    = {41},
  pages     = {131--154},
  bibsource = {dblp computer science bibliography, https://dblp.org},
  biburl    = {https://dblp.org/rec/bib/journals/jair/GujarN11},
}

@InProceedings{Conitzer2002:Complexity,
  author    = {Vincent Conitzer and Tuomas Sandholm},
  title     = {Complexity of Mechanism Design},
  booktitle = {{UAI} '02, Proceedings of the 18th Conference in Uncertainty in Artificial Intelligence, University of Alberta, Edmonton, Alberta, Canada, August 1-4, 2002},
  year      = {2002},
  editor    = {Adnan Darwiche and Nir Friedman},
  pages     = {103--110},
  publisher = {Morgan Kaufmann},
}

@Book{Mas-Colell1995:Microeconomic,
  title     = {Microeconomic Theory},
  publisher = {Oxford University Press},
  year      = {1995},
  author    = {Andreu Mas-Colell and Michael Whinston and Jerry R. Green},
}

@Book{Moore2006:General,
  title     = {General Equilibrium and Welfare Economics: An Introduction},
  publisher = {Springer},
  year      = {2006},
  author    = {J. Moore},
}

@Book{Moulin1988:Axioms,
  title     = {Axioms of Cooperative Decision Making},
  publisher = {Cambridge University Press},
  year      = {1988},
  author    = {H. Moulin},
}

@Article{Moulin2009:Almost,
  author  = {Herv\'e Moulin},
  title   = {Almost budget-balanced {VCG} mechanisms to assign multiple objects},
  journal = {JET},
  year    = {2009},
  volume  = {144},
  number  = {1},
  pages   = {96--119},
}

@InProceedings{Cavallo2006:Optimal,
  author    = {Cavallo, Ruggiero},
  title     = {Optimal Decision-making with Minimal Waste: Strategyproof Redistribution of VCG Payments},
  booktitle = {Proceedings of the Fifth International Joint Conference on Autonomous Agents and Multiagent Systems},
  year      = {2006},
  series    = {AAMAS '06},
  pages     = {882--889},
  publisher = {ACM},
  acmid     = {1160790},
  doi       = {10.1145/1160633.1160790},
  isbn      = {1-59593-303-4},
  keywords  = {VCG, mechanism design, social choice},
  location  = {Hakodate, Japan},
  numpages  = {8},
  url       = {http://doi.acm.org/10.1145/1160633.1160790},
}

@InProceedings{Manisha2018:Learning,
  author    = {Padala Manisha and C. V. Jawahar and Sujit Gujar},
  title     = {Learning Optimal Redistribution Mechanisms Through Neural Networks},
  booktitle = {Proceedings of the 17th International Conference on Autonomous Agents and MultiAgent Systems, {AAMAS} 2018, Stockholm, Sweden, July 10-15, 2018},
  year      = {2018},
  editor    = {Elisabeth Andr{\'{e}} and Sven Koenig and Mehdi Dastani and Gita Sukthankar},
  pages     = {345--353},
  publisher = {International Foundation for Autonomous Agents and Multiagent Systems Richland, SC, {USA} / {ACM}},
}

@Article{Shao2016:Optimal,
  author   = {Ran Shao and Lin Zhou},
  title    = {Optimal allocation of an indivisible good},
  journal  = {Games and Economic Behavior},
  year     = {2016},
  volume   = {100},
  pages    = {95 - 112},
  issn     = {0899-8256},
  abstract = {In this paper, we consider the problem of allocating an indivisible good efficiently between two agents with monetary transfers. We focus on allocation mechanisms that are dominant-strategy incentive compatible when agents' types are private information. Inefficiency of an allocation mechanism may come from two sources: misallocation of the indivisible good and an imbalanced budget. Unfortunately, as Green and Laffont (1979) demonstrate, no allocation mechanism can always overcome both kinds of inefficiency. We identify allocation mechanisms that maximize the expected total utilities of agents, and characterize optimal mechanisms for a large class of agents' type distributions. For strongly regular type distributions, we show that the optimal mechanisms must be budget-balanced: they are either fixed-price mechanisms or option mechanisms. The result may not hold for other type distributions. For certain type distributions, we show that optimal mechanisms are hybrids of Vickrey–Clarke–Groves mechanisms and budget-balanced mechanisms.},
  keywords = {Dominance incentive mechanisms, Average efficiency, Fixed-price mechanisms, Option mechanisms, Budget balance},
}

@inproceedings{amanatidis2019budget,
  title={Budget-feasible mechanism design for non-monotone submodular objectives: Offline and online},
  author={Amanatidis, Georgios and Kleer, Pieter and Sch{\"a}fer, Guido},
  booktitle={Proceedings of the 2019 ACM Conference on Economics and Computation},
  pages={901--919},
  year={2019}
}

@inproceedings{amanatidis2017budget,
  title={On budget-feasible mechanism design for symmetric submodular objectives},
  author={Amanatidis, Georgios and Birmpas, Georgios and Markakis, Evangelos},
  booktitle={International Conference on Web and Internet Economics},
  pages={1--15},
  year={2017},
  organization={Springer}
}

@article{lee2016incentive,
  title={Incentive compatibility of large centralized matching markets},
  author={Lee, SangMok},
  journal={The Review of Economic Studies},
  volume={84},
  number={1},
  pages={444--463},
  year={2016},
  publisher={Review of Economic Studies Ltd}
}

@article{janssen2020reflections,
  title={Reflections on the 2020 Nobel Memorial Prize Awarded to Paul Milgrom and Robert Wilson},
  author={Janssen, Maarten CW},
  journal={Erasmus Journal for Philosophy and Economics},
  volume={13},
  number={2},
  pages={177--184},
  year={2020}
}

@article{drobietz2021cardiovascular,
  title={Who is who in cardiovascular research? What a review of Nobel Prize nominations reveals about scientific trends},
  author={Drobietz, Marie and Loerbroks, Adrian and Hansson, Nils},
  journal={Clinical Research in Cardiology},
  pages={1--10},
  year={2021},
  publisher={Springer}
}

@Article{Bagnoli2005:Log,
  author   = {Bagnoli, Mark and Bergstrom, Ted},
  title    = {Log-concave probability and its applications},
  journal  = {Economic Theory},
  year     = {2005},
  volume   = {26},
  number   = {2},
  pages    = {445--469},
  issn     = {1432-0479},
  abstract = {In many applications, assumptions about the log-concavity of a probability distribution allow just enough special structure to yield a workable theory. This paper catalogs a series of theorems relating log-concavity and/or log-convexity of probability density functions, distribution functions, reliability functions, and their integrals. We list a large number of commonly-used probability distributions and report the log-concavity or log-convexity of their density functions and their integrals. We also discuss a variety of applications of log-concavity that have appeared in the literature.},
  day      = {01},
  doi      = {10.1007/s00199-004-0514-4},
}

@Article{Ohseto2000:Characterizations,
  author   = {Shinji Ohseto},
  title    = {Characterizations of Strategy-Proof Mechanisms for Excludable versus Nonexcludable Public Projects},
  journal  = {Games and Economic Behavior},
  year     = {2000},
  volume   = {32},
  number   = {1},
  pages    = {51 - 66},
  issn     = {0899-8256},
  abstract = {First, we characterize “the unanimous mechanisms” as the set of strategy-proof, individually rational, and citizen sovereign mechanisms for the provision of a nonexcludable public project. Second, we characterize “the largest unanimous mechanisms” as the set of strategy-proof, individually rational, demand-monotonic, and access-independent mechanisms for the provision of an excludable public project. Comparing these two classes of mechanisms, we conclude that admitting partial exclusion always improves efficiency under the incentive compatibility constraint. Journal of Economic Literature Classification Numbers: D71, D82, H41.},
  keywords = {strategy-proof, mechanism, excludability},
}

@InProceedings{Duetting2019:Optimal,
  author    = {Duetting, Paul and Feng, Zhe and Narasimhan, Harikrishna and Parkes, David and Ravindranath, Sai Srivatsa},
  title     = {Optimal Auctions through Deep Learning},
  booktitle = {Proceedings of the 36th International Conference on Machine Learning},
  year      = {2019},
  editor    = {Chaudhuri, Kamalika and Salakhutdinov, Ruslan},
  volume    = {97},
  series    = {Proceedings of Machine Learning Research},
  pages     = {1706--1715},
  address   = {Long Beach, California, USA},
  publisher = {PMLR},
  abstract  = {Designing an incentive compatible auction that maximizes expected revenue is an intricate task. The single-item case was resolved in a seminal piece of work by Myerson in 1981. Even after 30-40 years of intense research the problem remains unsolved for seemingly simple multi-bidder, multi-item settings. In this work, we initiate the exploration of the use of tools from deep learning for the automated design of optimal auctions. We model an auction as a multi-layer neural network, frame optimal auction design as a constrained learning problem, and show how it can be solved using standard pipelines. We prove generalization bounds and present extensive experiments, recovering essentially all known analytical solutions for multi-item settings, and obtaining novel mechanisms for settings in which the optimal mechanism is unknown.},
  file      = {duetting19a.pdf:http\://proceedings.mlr.press/v97/duetting19a/duetting19a.pdf:PDF},
}

@InProceedings{Golowich2018:Deep,
  author    = {Noah Golowich and Harikrishna Narasimhan and David C. Parkes},
  title     = {Deep Learning for Multi-Facility Location Mechanism Design},
  booktitle = {Proceedings of the Twenty-Seventh International Joint Conference on Artificial Intelligence, {IJCAI-18}},
  year      = {2018},
  pages     = {261--267},
  month     = {7},
  publisher = {International Joint Conferences on Artificial Intelligence Organization},
}

@InProceedings{Shen2019:Automated,
  author    = {Shen, Weiran and Tang, Pingzhong and Zuo, Song},
  title     = {Automated Mechanism Design via Neural Networks},
  booktitle = {Proceedings of the 18th International Conference on Autonomous Agents and MultiAgent Systems},
  year      = {2019},
  series    = {AAMAS '19},
  pages     = {215--223},
  publisher = {International Foundation for Autonomous Agents and Multiagent Systems},
  acmid     = {3331696},
  isbn      = {978-1-4503-6309-9},
  keywords  = {mechanism design, neural network, revenue optimal},
  location  = {Montreal QC, Canada},
  numpages  = {9},
}

@InProceedings{Sill1998:Monotonic,
  author    = {Sill, Joseph},
  title     = {Monotonic Networks},
  booktitle = {Proceedings of the 1997 Conference on Advances in Neural Information Processing Systems 10},
  year      = {1998},
  series    = {NIPS '97},
  pages     = {661--667},
  publisher = {MIT Press},
  acmid     = {302767},
  isbn      = {0-262-10076-2},
  location  = {Denver, Colorado, USA},
  numpages  = {7},
}

@InProceedings{Guo2019:Asymptotically,
  author    = {Guo, Mingyu},
  title     = {An Asymptotically Optimal VCG Redistribution Mechanism for the Public Project Problem},
  booktitle = {Proceedings of the Twenty-Eighth International Joint Conference on Artificial Intelligence, {IJCAI-19}},
  year      = {2019},
  pages     = {315--321},
  month     = {7},
  publisher = {International Joint Conferences on Artificial Intelligence Organization},
}

@InProceedings{Guo2018:Cost,
  author    = {Guo, Mingyu and Yang, Yong and Ali Babar, Muhammad},
  title     = {Cost Sharing Security Information with Minimal Release Delay},
  booktitle = {PRIMA 2018: Principles and Practice of Multi-Agent Systems},
  year      = {2018},
  pages     = {177--193},
  address   = {Cham},
  publisher = {Springer International Publishing},
  abstract  = {We study a cost sharing problem derived from bug bounty programs, where agents gain utility by the amount of time they get to enjoy the cost shared information. Once the information is provided to an agent, it cannot be retracted. The goal, instead of maximizing revenue, is to pick a time as early as possible, so that enough agents are willing to cost share the information and enjoy it for a premium time period, while other agents wait and enjoy the information for free after a certain amount of release delay. We design a series of mechanisms with the goal of minimizing the maximum delay and the total delay. Under prior-free settings, our final mechanism achieves a competitive ratio of 4 in terms of maximum delay, against an undominated mechanism. Finally, we assume some distributions of the agents' valuations, and investigate our mechanism's performance in terms of expected delays.},
  isbn      = {978-3-030-03098-8},
}

@Article{Moulin1994:Serial,
  author    = {Hervé Moulin},
  title     = {Serial Cost-Sharing of Excludable Public Goods},
  journal   = {The Review of Economic Studies},
  year      = {1994},
  volume    = {61},
  number    = {2},
  pages     = {305--325},
  issn      = {00346527, 1467937X},
  abstract  = {Under the "Conservative Equal costs" mechanism to produce a non-excludable (one-dimensional) public good, each agent reports his demand. The lowest reported demand of public good is produced and its cost is shared equally among all agents. When the cost of producing the public good is convex, this mechanism is strategy-proof (even with respect to coalitions), anonymous and induces voluntary participation. No other mechanism shares these three properties. When the public good is excludable (its cost still convex), there is a mechanism with the same properties, that everyone prefers to conservative equal costs. This mechanism excludes the agent with the lowest (equal costs) demand from consuming (or paying for) a larger amount of the good, ameliorating the free-rider problem.},
  publisher = {[Oxford University Press, Review of Economic Studies, Ltd.]},
}

@Article{Deb1999:Voluntary,
  author   = {Rajat Deb and Laura Razzolini},
  title    = {Voluntary cost sharing for an excludable public project},
  journal  = {Mathematical Social Sciences},
  year     = {1999},
  volume   = {37},
  number   = {2},
  pages    = {123 - 138},
  issn     = {0165-4896},
  abstract = {The Clarke-Groves family of mechanisms provides a solution to the free-rider problem, that originates when providing a pure public good. These mechanisms, however, have two drawbacks: they are not voluntary, and they generate budget surpluses which cause welfare losses. In this paper we analyze the case of an indivisible excludable public project: the possibility of exclusion offers an additional instrument for avoiding these problems. We characterize cost sharing rules which satisfy the properties of strategyproofness and voluntariness. We show that these rules must charge the same price to all the individuals being provided with the good. A member of this class of voluntary cost sharing rules is the serial cost sharing rule (Moulin, H., 1994. Serial cost sharing of excludable public goods. Review of Economic Studies 61, 305–325), of which we provide alternative characterizations.},
  keywords = {Voluntary cost sharing, Public project, Indivisible, Excludable},
}

@InProceedings{Zhou2017:Deep,
  author    = {Zhou, Zhi-Hua and Feng, Ji},
  title     = {Deep Forest: Towards an Alternative to Deep Neural Networks},
  booktitle = {Proceedings of the 26th International Joint Conference on Artificial Intelligence},
  year      = {2017},
  series    = {IJCAI'17},
  pages     = {3553--3559},
  publisher = {AAAI Press},
  acmid     = {3172386},
  isbn      = {978-0-9992411-0-3},
  location  = {Melbourne, Australia},
  numpages  = {7},
}

@InProceedings{Yokoo2003:Characterization,
  author    = {Yokoo, Makoto},
  title     = {Characterization of Strategy/False-name Proof Combinatorial Auction Protocols: Price-oriented, Rationing-free Protocol},
  booktitle = {Proceedings of the 18th International Joint Conference on Artificial Intelligence},
  year      = {2003},
  series    = {IJCAI'03},
  pages     = {733--739},
  address   = {San Francisco, CA, USA},
  publisher = {Morgan Kaufmann Publishers Inc.},
  acmid     = {1630766},
  numpages  = {7},
}

@InProceedings{Paszke2017:Automatic,
  author    = {Paszke, Adam and Gross, Sam and Chintala, Soumith and Chanan, Gregory and Yang, Edward and DeVito, Zachary and Lin, Zeming and Desmaison, Alban and Antiga, Luca and Lerer, Adam},
  title     = {Automatic Differentiation in {PyTorch}},
  booktitle = {NIPS Autodiff Workshop},
  year      = {2017},
}

@InProceedings{Egelman2013:Markets,
  author    = {Egelman, Serge and Herley, Cormac and van Oorschot, Paul C.},
  title     = {Markets for Zero-Day Exploits: Ethics and Implications},
  booktitle = {Proceedings of the 2013 New Security Paradigms Workshop},
  year      = {2013},
  series    = {NSPW '13},
  pages     = {41–46},
  publisher = {Association for Computing Machinery},
  abstract  = {A New Security Paradigms Workshop (2013) panel discussed the topic of ethical issues and implications related to markets for zero-day exploits, i.e., markets facilitating the sale of previously unknown details on how to exploit software vulnerabilities in target applications or systems. The related topic of vulnerability rewards programs ("bug bounties" offered by software vendors) was also discussed. This note provides selected background material submitted prior to the panel presentation, and summarizes discussion resulting from the input of both the panelists and NSPW participants.},
  doi       = {10.1145/2535813.2535818},
  isbn      = {9781450325820},
  keywords  = {security economics, exploits, vulnerabilities},
  location  = {Banff, Alberta, Canada},
  numpages  = {6},
  url       = {https://doi.org/10.1145/2535813.2535818},
}

@Misc{Fisher2015:VUPEN,
  author = {Dennis Fisher},
  title  = {VUPEN Founder Launches New Zero-Day Acquisition Firm Zerodium},
  year   = {2015},
  note   = {July 24, 2015 online: https://threatpost.com/vupen-launches-new-zero-day-acquisition-firm-zerodium/113933/},
}

@Misc{Greenberg2012:Shopping,
  author = {Andy Greenberg},
  title  = {Shopping For Zero-Days: A Price List For Hackers' Secret Software Exploits},
  year   = {2012},
  note   = {},
}

@article{morgenstern2015pseudo,
  title={On the pseudo-dimension of nearly optimal auctions},
  author={Morgenstern, Jamie H and Roughgarden, Tim},
  journal={Advances in Neural Information Processing Systems},
  volume={28},
  year={2015}
}

@inproceedings{cole2014sample,
  title={The sample complexity of revenue maximization},
  author={Cole, Richard and Roughgarden, Tim},
  booktitle={Proceedings of the forty-sixth annual ACM symposium on Theory of computing},
  pages={243--252},
  year={2014}
}

@inproceedings{yao2017dominant,
  title={Dominant-strategy versus bayesian multi-item auctions: Maximum revenue determination and comparison},
  author={Yao, Andrew Chi-Chih},
  booktitle={Proceedings of the 2017 ACM Conference on Economics and Computation},
  pages={3--20},
  year={2017}
}

@Misc{Projects2015:Severity,
  author = {The Chromium Projects},
  title  = {Severity Guidelines for Security Issues},
  year   = {2015},
  note   = {Accessed September 15, 2015 online: https://www.chromium.org/developers/severity-guidelines},
}

@Article{Vickrey1961:Counterspeculation,
  author  = {William Vickrey},
  title   = {Counterspeculation, Auctions, and Competitive Sealed Tenders},
  journal = {Journal of Finance},
  year    = {1961},
  volume  = {16},
  pages   = {8--37},
}

@InProceedings{Lavi2003:Towards,
  author    = {R. {Lavi} and {Ahuva Mu'alem} and N. {Nisan}},
  title     = {Towards a characterization of truthful combinatorial auctions},
  booktitle = {44th Annual IEEE Symposium on Foundations of Computer Science, 2003. Proceedings.},
  year      = {2003},
  pages     = {574-583},
}

@InProceedings{Sakurai2019:Deep,
  author    = {Yuko Sakurai and Satoshi Oyama and Mingyu Guo and Makoto Yokoo},
  title     = {Deep False-Name-Proof Auction Mechanisms},
  booktitle = {{PRIMA} 2019: Principles and Practice of Multi-Agent Systems - 22nd International Conference, Turin, Italy, October 28-31, 2019, Proceedings},
  year      = {2019},
  volume    = {11873},
  series    = {Lecture Notes in Computer Science},
  pages     = {594--601},
  publisher = {Springer},
  bibsource = {dblp computer science bibliography, https://dblp.org},
  biburl    = {https://dblp.org/rec/conf/prima/SakuraiOGY19.bib},
  doi       = {10.1007/978-3-030-33792-6\_45},
  timestamp = {Mon, 28 Oct 2019 17:18:34 +0100},
  url       = {https://doi.org/10.1007/978-3-030-33792-6\_45},
}

@InProceedings{Wang2021:Mechanism,
  author    = {Guanhua Wang and Runqi Guo and Yuko Sakurai and Ali Babar and Mingyu Guo},
  title     = {Mechanism Design for Public Projects via Neural Networks},
  booktitle = {20th International Conference on Autonomous Agents and Multiagent Systems ({AAMAS} 2021, online)},
  year      = {May 2021},
}

@Unpublished{Do2021:Analysis,
  author = {Viet Anh Do and Mingyu Guo and Aneta Neumann and Frank Neumann},
  title  = {Analysis of Evolutionary Diversity Optimisation for Permutation Problems},
  year   = {2021, Working paper},
}

@InProceedings{Long2020:Evolutionary,
  author    = {Qian Long and Zihan Zhou and Abhinav Gupta and Fei Fang and Yi Wu and Xiaolong Wang},
  title     = {Evolutionary Population Curriculum for Scaling Multi-Agent Reinforcement Learning},
  booktitle = {8th International Conference on Learning Representations, {ICLR} 2020, Addis Ababa, Ethiopia, April 26-30, 2020},
  year      = {2020},
  bibsource = {dblp computer science bibliography, https://dblp.org},
  biburl    = {https://dblp.org/rec/conf/iclr/LongZ0FWW20.bib},
  timestamp = {Mon, 17 Aug 2020 15:58:44 +0200},
  url       = {https://openreview.net/forum?id=SJxbHkrKDH},
}

@InProceedings{Neumann2019:Evolutionary,
  author    = {Aneta Neumann and Wanru Gao and Markus Wagner and Frank Neumann},
  title     = {Evolutionary diversity optimization using multi-objective indicators},
  booktitle = {Proceedings of the Genetic and Evolutionary Computation Conference, {GECCO} 2019, Prague, Czech Republic, July 13-17, 2019},
  year      = {2019},
  editor    = {Anne Auger and Thomas St{\"{u}}tzle},
  pages     = {837--845},
  publisher = {{ACM}},
  bibsource = {dblp computer science bibliography, https://dblp.org},
  biburl    = {https://dblp.org/rec/conf/gecco/NeumannG0019.bib},
  doi       = {10.1145/3321707.3321796},
  timestamp = {Fri, 27 Mar 2020 09:03:00 +0100},
  url       = {https://doi.org/10.1145/3321707.3321796},
}


\end{document}